\let\footnote=\endnote 
\theoremstyle{definition}
 \newtheorem{theorem}{Theorem}
 \newtheorem{example}[theorem]{Example}
 \newtheorem{proposition}[theorem]{Proposition}
 \newtheorem{definition}[theorem]{Definition}
 \newtheorem{claim}[theorem]{Claim}
\newcommand{\floor}[1]{\lfloor #1\rfloor}
\newcommand{\ria}{\rightarrow}
\newcommand{\bs}[1]{\boldsymbol{#1}}
\newcommand{\EE}{\mathbb{E}}
\DeclareMathOperator*{\argmin}{arg\,min}
\begin{document}
\normalem

\title[]{How should we score athletes and candidates: geometric scoring rules}

\author{Aleksei Y. Kondratev}

\author{Egor Ianovski}

\author{Alexander S. Nesterov}

\thanks{akondratev@hse.ru, corresponding -- HSE University, Russia -- \url{http://orcid.org/0000-0002-8424-8198}.
\\
eianovski@hse.ru --- HSE University, Russia --- \url{https://orcid.org/0000-0001-9411-2529}.
\\
asnesterov@hse.ru --- HSE University, Russia --- \url{http://orcid.org/0000-0002-9143-2938}
\\
\phantom{ } We thank Herv{\'e} Moulin, Elena Yanovskaya, Artem Baklanov, Constantine Sorokin and other our colleagues from the International Laboratory of Game Theory and Decision Making for their suggestions and support. We thank Dominik Peters for providing us with the reference to \citet{Morkeliunas82}. We also are grateful to Josep Freixas for providing us with the MotoGP 1999 example during the 15th European Meeting on Game Theory (SING15). 
}


\begin{abstract}


Scoring rules are widely used to rank athletes in sports and candidates in elections. Each position in each individual ranking is worth a certain number of points; the total sum of points determines the aggregate ranking. The question is how to choose a scoring rule for a specific application. First, we derive a one-parameter family with geometric scores which satisfies two principles of independence: once an extremely strong or weak candidate is removed, the aggregate ranking ought to remain intact. This family includes Borda count, generalised plurality (medal count), and generalised antiplurality (threshold rule) as edge cases, and we find which additional axioms characterise these rules. Second, we introduce a one-parameter family with optimal scores: the athletes should be ranked according to their expected overall quality. Finally, using historical data from biathlon, golf, and athletics we demonstrate how the geometric and optimal scores can simplify the selection of suitable scoring rules, show that these scores closely resemble the actual scores used by the organisers, and provide an explanation for empirical phenomena observed in golf tournaments. We see that geometric scores approximate the optimal scores well in events where the distribution of athletes’ performances is roughly uniform.

\medskip
\bigskip
\bigskip

\noindent \emph{Keywords}: OR in sports, rank aggregation, independence of irrelevant alternatives, Borda count, sports ranking, ranking system

\medskip
\medskip







\end{abstract}

\maketitle

\thispagestyle{empty}

\newpage

\thispagestyle{empty}


\section{Introduction}


Many sports competitions consist of a series of races during a season. At the end of each race the athletes are ranked in the order they finished, and assigned a number of points based on their position in that order. The scores the athletes received are summed across the races, and the one with the highest total score is declared the winner -- a procedure that is known as a scoring rule.

Scoring rules are ubiquitous in sporting events, contests to evince merit, elections in professional societies, and occasionally at the national level (Nauru, Kiribati). Besides being intuitive and easy to use, they have desirable axiomatic features which, as we shall argue, make them uniquely suitable for this purpose.

However, in order to use a scoring rule, one must first choose a vector of scores; and the choice is anything but simple. The International Biathlon Union (IBU) World Cup uses the scoring vector in \autoref{table:Glazyrina}, containing 40 non-zero scores. Scaling the scores or changing the zero point will not change the order produced by the scoring rule, which means the organiser has 39 degrees of freedom in selecting a vector like this one. The prospect of picking an optimal vector with respect to whatever criteria is daunting, but the choice is important -- the vector may play as great a role in determining the winner as the performance of the athletes themselves.

In this paper we propose two ways to reduce the problem to the choice of a single parameter. In Sections \ref{sec:gsr} and \ref{sec:newcharacterisations} we take an axiomatic approach, motivated by reducing the risk of the final ranking changing by the addition or removal of spoilers. This leads to the geometric family of scoring rules governed by parameter $p$, where the $j$th position is worth $p$ times as much as the $(j+1)$th position. In Sections \ref{sec:optimal} and \ref{sec:empirical} we introduce optimal scoring rules, which maximise the expected quality of the winning athlete based on empirical data. The two approaches, in general, yield different families of rules, but optimal scores converge to the geometric in the case of a uniform distribution of athlete performance. Interestingly, we find that the scores actually used in biathlon, golf, and athletics closely resemble the optimal scoring rules.

\section{A ranking paradox}

The Women's Pursuit category of the 2014/2015 IBU Biathlon World Cup consisted of seven races (\autoref{table:Glazyrina}). Kaisa M\"ak\"ar\"ainen came first with two first place finishes, two second, a third, a fourth, and a twelfth, for a total score of 348 points. Second was Darya Domracheva, with four first place finishes, one fourth, a seventh, and a thirteenth, for a total score of 347. In tenth place was Ekaterina Glazyrina, well out of the running with 190 points.

\begin{table}[ht!]
\centering
\caption{2014/15 Biathlon -- Women's Pursuit: scoring system and event results}
\begin{tabular}{lccccccccccccccc}
\toprule
Position&1&2&3&4&5&6&7&8&9&10&11&12&13&$\cdots$&40\\
\hline
Points&60&54&48&43&40&38&36&34&32&31&30&29&28&$\cdots$&1\\
\bottomrule
\end{tabular}
\begin{tabular}{lcccccccr}
\\
\toprule
Athlete&\multicolumn{7}{c}{Event number: points}&Total\\
&1&2&3&4&5&6&7&score\\
\hline
M\"ak\"ar\"ainen&60&60&54&48&54&29&43&348\\
Domracheva&43&28&60&60&60&36&60&347\\
$\cdots$&&&&&&&&\\
Glazyrina&32&54&10&26&38&20&10&190\\
\bottomrule
\end{tabular}
\label{table:Glazyrina}
\end{table}

Four years later, Glazyrina was disqualified for doping violations, and all her results from 2013 onwards were annulled. This bumped Domracheva's thirteenth place finish in race two into a twelfth, and her total score to 348. The number of first place finishes is used as a tie breaker, and in March 2019 the official results implied that M\"ak\"ar\"ainen will be stripped of the trophy in favour of Domracheva. Because the tenth place competitor was disqualified for doping four years after the fact.\footnote{After the disqualification of Glazyrina, the IBU eventually decided to award the 2014/15 Pursuit Globe to both Domracheva and M\"ak\"ar\"ainen (\url{https://biathlonresults.com}). While this keeps both athletes happy, it is clearly an ad-hoc solution. In November 2021, another biathlete (Olga Zaitseva) was disqualified. M\"ak\"ar\"ainen is again the victim, and after the score recount would have lost the 2013/14 Big Crystal Globe to Tora Berger. Again, the IBU decided to recognise both Berger and M\"ak\"ar\"ainen as the total score winners (\url{https://web.archive.org/web/20220114020227/https://www.biathlonworld.com/news/berger-makarainen-total-score/7JONQpUehHSZGX6ucrwn0x}). Perhaps this approach will become the new normal. We refer to \citet{Wright14} and \citet{KendallLenten17}, for surveys of where other sporting rules led to unintended consequences.}

Clearly there is something unsatisfying about this. We would hope that the relative ranking of M\"ak\"ar\"ainen and Domracheva depends solely on the relative performance of the two athletes, and not on whether or not a third party was convicted of doping, especially if said third party was not a serious contender for the title.

Classical results from social choice theory mean this goal is impossible: the ranking of athletes will never be fully independent of the addition or deletion of third parties. However, the degree of this vulnerability depends on the vector of scores used to rank the athletes. We shall see that by introducing two extremely weak independence axioms, we narrow the range of scoring rules to a one-parameter family -- the geometric scoring rules. If the event organiser finds these axioms convincing, then this reduces the problem of choosing a scoring rule to the choice of a single parameter.

\subsection*{The science of impossibility}

Suppose we have a set of $m$ athletes and a profile of $n$ races, $R_1,\dots,R_n$, with $R_i$ being the order in which the athletes finished race~$i$. What we are after is a procedure which will map the $n$ rankings into a single ranking for the entire competition, $R$: a mapping $(R_1,\dots,R_n)\mapsto R$. The result of this ranking rule must reflect the results of the individual races in some way. A minimal condition is \textbf{unanimity} – if an athlete finishes first in every race, we should expect this athlete to rank first in $R$. Motivated by the scenario above, we also want the relative ranking of athletes $a$ and $b$ in the end result to depend only on the relative ranking of $a$ and $b$ in the individual races, a condition known as the \textbf{independence of irrelevant alternatives}. Here we hit the most famous result in social choice theory -- Arrow's result that the only ordinal procedure that meets our criteria is dictatorship \citep[p.~52]{Arrow50,CampbellKelly02}.\footnote{If we want not to determine an aggregate ranking but only to select a single winner, then unanimity and independence of irrelevant alternatives still lead to dictatorship -- it follows from a more general result of \citet[theorem~1]{DuttaJacksonLeBreton01}.}

The characteristic feature of dictatorship is that all decisions stem from a single individual.
In the case of a sporting competition, this could be the case where the first $n-1$ races are treated as warm-ups or friendly races, and only the finals $R_n$ contributes to the final ranking $R$. This is not necessarily an absurd ranking system, but it will not do if we want to keep viewers interested over the course of the championship, rather than just the finals. We need to relax independence.\footnote{The approach in this paper is axiomatic. We want a ranking rule that always satisfies a certain notion of independence, and mathematical impossibilities force us to relax the notion until it is weak enough to be compatible with other desirable properties. This is not the only way to approach the problem. For example, we might accept that we cannot have independence all of the time, and instead look for a rule that will satisfy independence most of the time. \citet[theorem~1]{Gehrlein1982}, in that framework, show that under the impartial culture assumption (i.e. when all rankings are equally probable), the scoring rule most likely to select the same winner before and after a random candidate is deleted is Borda.}

A weaker independence condition we may consider is \textbf{independence of winners/losers}.\footnote{Independence of winners/losers is also known as local stability \citep{Young88} and local independence of irrelevant alternatives. Independence of losers is also known as independence of bottom alternatives \citep{FreemanBrillConitzer14}.} As the name suggests, this is the condition that if we disqualify the top or the bottom athlete in the final ranking $R$, the remainder of the ranking remains unchanged. This could be a pressing issue if the winner is accused of doping, and a rule that satisfies this condition will guarantee that the cup is given to the runner up without requiring a retallying of the total scores. In the case of the loser, there is the additional concern that it is a lot easier to add a loser to a race than a winner, and if the authors were to take their skis off the shelf and lose ingloriously in the next Biathlon, one would hope that the standing of the real competitors would remain unaffected.

It turns out that, given some standard assumptions, there is a unique rule that is independent of winners and losers -- the Kemeny rule \citep[footnote~18]{Young88}. The procedure amounts to choosing a ranking $R$ that minimises the sum of the Kendall tau distance from $R$ to the individual races. This is one of its disadvantages -- it is a stretch to expect a sports enthusiast to plot race results in the space of linear orders and compute the central point. For viewers, the results might as well come from a black box. What is worse, it is a difficult procedure computationally \citep[theorem~2]{BartholdiToveyTrick89}, so even working out the winner may not be feasible. But perhaps most damning of all is that it violates a property known as \textbf{electoral consistency} (formal definition can be found in \autoref{app:complete}). In Biathlon, every race falls into one of four categories (sprint, pursuit, individual, mass start). At the end of the championship a winner is selected for each category, as well as an overall winner. It would be strange if a biathlete were to win in every category but lose the overall title -- but that is a possibility under Kemeny.

In fact, the only ordinal procedure that guarantees electoral consistency is a generalised scoring rule \citep{Smith73,Young74,Young75} -- every athlete is awarded a number of points based on their position in a race, and the athletes are ranked based on total points; in the case of ties, another scoring rule can be used to break them. It seems there is no alternative to the rules actually used in biathlon (IBU World Cup), auto racing (Formula One World Championship), cycling (Tour de France green jersey), golf (Professional Golfers' Association Tour, stylised in all capital letters as PGA TOUR by its officials), skiing (International Ski and Snowboard Federation World Cup), athletics (International Association of Athletics Federations Diamond League), and other events in this format -- but that is not necessarily a bad thing. Scoring rules are easy to compute and understand, and every additional result contributes to the overall ranking in a predictable way, all of which is very desirable for a sporting event.

We have seen that neither of the independence notions we have defined so far can apply here, but how bad can the situation get? The answer is, as bad as possible. A result of \citet[theorem~2]{Fishburn81DAM} shows that if the scores awarded for positions are monotone and decreasing, it is possible to construct a sequence of race results such that if one athlete is removed, the remaining order is not only changed, but inverted. So while M\"ak\"ar\"ainen may not be pleased with the current turn of events, there is a possible biathlon where after the disqualification of Glazyrina, M\"ak\"ar\"ainen finished last, and Domracheva second to last. It is interesting to speculate whether the competition authorities would have had the resolve to carry through such a reordering if it had taken place.

\section{Geometric scoring rules}\label{sec:gsr}

In order to motivate our final notion of independence, let us first consider why the results of the biathlon may not be as paradoxical as they appear at first glance. Note that removing Glazyrina from the ranking in Table~\ref{table:Glazyrina} changed the total score of Domracheva but not of M\"ak\"ar\"ainen. This is because M\"ak\"ar\"ainen was unambiguously better than Glazyrina, finishing ahead of her in every race, while Domracheva was beaten by Glazyrina in race~2. As such, the athletes' performance vis-\`a-vis Glazyrina served as a measuring stick, allowing us to conclude that M\"ak\"ar\"ainen was just that little bit better. Once Glazyrina is removed, however, the edge M\"ak\"ar\"ainen had is lost.

So suppose then that the removed athlete is symmetric in her performance with respect to all the others. In other words, she either came last in every race, and is thus a unanimous loser, or came first, and is a unanimous winner. Surely disqualifying such an athlete cannot change the final outcome? Why, yes it can.

The results for the Women's Individual category of the 2013/14 IBU Biathlon World Cup are given in the left panel of Table~\ref{table:Soukalova}. The category consists of two races, and Gabriela Soukalov\'a came first in both, and is thus a unanimous winner, followed by Darya Domracheva, Anastasiya Kuzmina, Nadezhda Skardino, and Franziska Hildebrand. However, in the hypothetical event of Soukalov\'a being disqualified the result is different: the recalculated total scores are in the right panel of Table~\ref{table:Soukalova}. Domracheva takes gold and Kuzmina silver as expected, but Hildebrand passes Skardino to take the bronze.

\begin{table}
\centering
\caption{2013/14 IBU Biathlon World Cup -- Women's Individual}
\begin{tabular}{lccr}
\toprule
Athlete & \multicolumn{2}{c}{Event} & Total \\
 & $1$ & $2$ & score \\
\hline
Soukalov\'a & 60\small{/1} & 60\small{/1} & 120 \\
Domracheva & 38\small{/6} & 54\small{/2} & 92 \\
Kuzmina & 54\small{/2} & 30\small{/11} & 84\\
Skardino & 36\small{/7} & 36\small{/7} & 72\\
Hildebrand & 28\small{/13} & 43\small{/4} & 71\\
\bottomrule
\end{tabular}
\quad
\begin{tabular}{lccr}
\toprule
Athlete & \multicolumn{2}{c}{Event} & Total \\
 & $1$ & $2$ & score \\
\hline
\sout{Soukalov\'a} & \sout{60\small{/1}} & \sout{60\small{/1}} & \sout{120} \\
Domracheva & 40\small{/5} & 60\small{/1} & 100 \\
Kuzmina & 60\small{/1} & 31\small{/10} & 91\\
Hildebrand & 29\small{/12} & 48\small{/3} & 77\\
Skardino & 38\small{/6} & 38\small{/6} & 76\\
\bottomrule
\end{tabular}
\label{table:Soukalova}
\vspace{0.2cm}
\justify 
\footnotesize{\textit{Notes}: The left panel presents the official points/position; the right panel presents the points/position after a hypothetical disqualification of Soukalov\'a. The total scores given in the table are before the disqualification of another athlete, Iourieva, that occurred a few months after the race. With the most recent total scores we would still observe Hildebrand overtaking Skardino, but we would have to resort to tie-breaking to do it.}
\end{table}

In contrast to the previous paradoxes, this is one we can do something about. By picking the right set of scores we can ensure that the unanimous loser will come last, the unanimous winner first, and dropping either will leave the remaining order unchanged.

Let us formalise our key notions. 

\begin{definition}
Let $M$ be the number of potential athletes (either finite or $M=\infty)$. For every number $m$ of athletes, $m\leq M$, a \textbf{positional scoring rule}, or a \textbf{scoring rule} for short, is defined by a sequence of $m$ real numbers $s_1^m,\ldots,s_m^m$. For a profile, an athlete receives a score $s_j^m$ for position~$j$ in an individual ranking. The sum of scores across all rankings gives the athlete's total score. The total scores determine the overall ranking: athletes with higher total scores are ranked higher, athletes with equal total scores are ranked equally.

For example, \textbf{plurality} is the scoring rule with scores $(1,0,\ldots,0)$ for each $m$, while \textbf{antiplurality} corresponds to scores $(1,\ldots,1,0)$, and \textbf{Borda} to $(m-1,m-2,\ldots,1,0)$.
\end{definition}

Of course, it is possible that two athletes attain the same total score, and are thus tied in the final ranking. In general, this problem is unavoidable -- if two athletes perform completely symmetrically vis-\`a-vis each other, no reasonable procedure can distinguish between them. However, if things are not quite so extreme, ties can be broken via a secondary procedure -- for example, in the case of the IBU we have seen that ties are broken with the number of first place finishes. This gives rise to the notion of a generalised scoring rule, where ties in the initial ranking are broken with a secondary sequence of scores, any remaining ties with a third, and so on.

\clearpage

\begin{definition} Let $M$ be the number of potential athletes (either finite or $M=\infty)$. For every number $m$ of athletes, $m\leq M$, a \textbf{generalised scoring rule} is defined by $\overline{r}(m)$ sequences of $m$ real numbers $s_1^{m,r},\ldots,s_m^{m,r}$ -- one sequence for each tie-breaking round $r=1,\ldots,\overline{r}(m)$. For a profile, in  round $r$, an athlete $a$ receives score $s_j^{m,r}$ for position~$j$ in an individual ranking. The total sum of scores gives a total score $S^r_a$ of athlete~$a$. The total scores determine the overall ranking lexicographically: $a$ is ranked higher than $b$ if $S^r_a>S^r_b$ for some round $r$ and $S^l_a=S^l_b$ for all $l<r$. Athletes $a$ and $b$ are equally ranked if $S^r_a=S^r_b$ for all rounds $r\leq \overline{r}(m)$.

For example, for each $m$ athletes, \textbf{generalised plurality} has $m-1$ rounds with scores $(\overbrace{1,\ldots,1}^r, \overbrace{0, \ldots, 0}^{m-r})$ in round $r$. \textbf{Generalised antiplurality} has $(\overbrace{1,\ldots,1}^{m-r}, \overbrace{0,\ldots, 0}^r)$.\footnote{Generalised plurality is also known as lexicographic ranking or the Olympic medal count \citep{ChurilovFlitman06}; generalised antiplurality -- as the threshold rule \citep{AleskerovChistyakovKalyagin10}.}
\end{definition}

Note that by definition a scoring rule is a generalised scoring rule with only one tie-breaking round.

\begin{definition}
An athlete is a \textbf{unanimous loser} if the athlete is ranked last in every race. A generalised scoring rule satisfies \textbf{independence of unanimous losers} if it ranks the unanimous loser last in the overall ranking, and removing the unanimous loser from every race leaves the overall ranking of the other athletes unchanged.

Symmetrically, an athlete is a \textbf{unanimous winner} if the athlete is ranked first in every race. A generalised scoring rule satisfies \textbf{independence of unanimous winners} if it ranks the unanimous winner is ranked first in the overall ranking, and removing the unanimous winner from every race leaves the overall ranking of the other athletes unchanged.
\end{definition}

Observe that the order produced by a scoring rule is invariant under scaling and translation, e.g. the scores 4, 3, 2, 1 produce the same order as 8, 6, 4, 2 or 5, 4, 3, 2. We will thus say that scores $s_1,\dots,s_m$ and $t_1,\dots,t_m$ are \textbf{affinely equivalent} if there exists an $\alpha>0$ and a $\beta$ such that $s_j = \alpha t_j + \beta$.

The intuition behind the following result is clear: $t_1,\dots,t_k$ produces the same ranking of the first/last $k$ athletes, if and only if it is affinely equivalent to the first/last $k$ scores in the original ranking system. The proof of the theorem, and all subsequent theorems, can be found in \autoref{app:proof}, and full characterisations as ranking rules in \autoref{app:complete}.

\begin{restatable}{proposition}{affinelyequivalent}\label{prop:affinelyequivalent}
A scoring rule satisfies independence of unanimous losers if and only if $s^m_1>\ldots>s^m_m$, and the scores for $k$ athletes, $s^k_1,\dots,s^k_k$, are affinely equivalent to the first $k$ scores for $m$ athletes, $s^m_1,\dots,s^m_k$, for all $k<m\leq M$.

A scoring rule satisfies independence of unanimous winners if and only if $s^m_1>\ldots>s^m_m$ and the scores for $k$ athletes, $s^k_1,\dots,s^k_k$, are affinely equivalent to the last $k$ scores for $m$ athletes, $s^m_{m-k+1},\dots,s^m_m$, for all $k<m\leq M$.
\end{restatable}

Now we see why the biathlon scores are vulnerable to dropping unanimous winners but not unanimous losers -- since the scores for a smaller number of athletes are obtained by trimming the full list, every subsequence of the list of scores is indeed equivalent to itself. However if we drop the winner, then the subsequence 60, 54, 48, $\dots$, is certainly not affinely equivalent to 54, 48, 43, $\dots$.

What happens when we combine the two conditions? The property of affine equivalence is clearly an equivalence relation, so if the scores for $k$ athletes are affinely equivalent to the first $k$ scores for $m$ candidates and the last $k$ scores for $m$ candidates, then the first and last $k$ scores for $m$ candidates must be affinely equivalent to each other, and in particular the scores $s_1^m,\dots,s_{m-1}^m$ must be affinely equivalent to $s_2^m,\dots,s_{m}^m$. Given that we need not distinguish scores up to scaling and translation we can assume that the score for the last place, $s_m^m$, is zero, and $s_{m-1}^m=1$. The third-to-last athlete must then get a larger number of points than 1, say $s_{m-2}^m=1+p$. Now we have a sequence $0,1,1+p$, and we know it must be affinely equivalent to the sequence $1,1+p, s_{m-3}^m$. Since the only way to obtain the second sequence from the first is to scale by $p$ and add 1, it follows that $s_{m-3}^m=1+p+p^2$ and so on. Clearly if $p=1$ this sequence is just Borda, and some algebraic manipulation gives us a formula of $s_j^m=(p^{m-j}-1)/(p-1)$ for the $j$th position in the general case. This gives us the following family of scoring rules consisting of the geometric, arithmetic, and inverse geometric sequences.\footnote{Despite their natural formulation, geometric scoring rules have received very little attention to date. Recently we have discovered the characterisation of \citet[theorem~4.3]{FineFine74b}, which is almost identical to our \autoref{thm:geometricrules}, but it does not appear that anyone noticed this in the subsequent literature. The independence axioms of \citet{FineFine74b} do not require the ranking of a unanimous winner first and a unanimous loser last, as our axioms do, but instead their definition of scoring rules directly requires non-decreasing scores. 

\citet{Phillips2014} independently used geometric sequences to approximate Formula One World Championship scores, and Laplace suggested scoring with the sequence $2^{m-1},2^{m-2},\dots,1$ \citep[p. 261--263]{Daunou1995}, i.e.\ a geometric scoring rule with $p=2$. Laplace's motivation was that if we suppose that a voter's degree of support for a candidate ranked $j$th can be quantified as $x$, then we cannot say whether the voter's support for the $(j+1)$th candidate is $x-1,x-2$, or any other smaller value. As such, he proposed to take the average, or $x/2$. The recent work of \citet{Csato21scoring}, answering a question posed by an earlier version of this paper, investigates geometric scoring rules in the context of the threat of early clinch in Formula One racing.}

\begin{definition}
A \textbf{geometric scoring rule} is a generalised scoring rule that is defined with respect to a parameter $p$. The score of the $j$th position is affinely equivalent to:
\[s_j^m= \begin{cases} 
      p^{m-j}& 1<p<\infty, \\
      m-j & p=1, \\
      1-p^{m-j} & 0<p<1. 
   \end{cases}
\]
We include generalised plurality $(p\ria\infty)$ and generalised antiplurality $(p\ria~0)$ as edge cases.\footnote{Observe that for any fixed number of races $n$, choosing a $p\geq n$ will guarantee that no amount of $(j+1)$th places will compensate for the loss of a single $j$th place -- this rule will be precisely generalised plurality. Likewise, choosing $p\leq 1/n$ will give us generalised antiplurality. By including these rules in the family of geometric scoring rules, all we are asserting is that the organiser is allowed to choose a different value of $p$ if the length of the tournament, $n$, changes.}
\end{definition}

\bigskip

\begin{theorem}\label{thm:geometricrules}
A scoring rule satisfies independence of unanimous winners and independence of unanimous losers if and only if it is a geometric scoring rule.
\end{theorem}

\bigskip

Observe that the axioms we used are extremely weak individually. If $s_1^m,\dots,s_m^m$ is any monotone decreasing sequence of scores whatsoever, and we obtain $s_j^{m-1}$ by dropping $s_m^m$, we will satisfy independence of unanimous losers. Likewise, if we obtain $s_j^{m-1}$ by dropping $s_1^m$, we will satisfy independence of unanimous winners. In short our only restriction is that more points are awarded for the $j$th place than for the $(j+1)$th place, which in the context of a sporting event is hardly a restriction at all. If we want to satisfy both axioms, however, we are suddenly restricted to a class with just one degree of freedom.

It is easy to see that geometric scoring rules also satisfy two stronger properties one might label independence of unanimous winning/losing cliques. Suppose there is a clique of $k$ athletes that always come in the first (last) $k$ positions, but possibly in any order; adding or removing such a clique will not change the order of the other athletes. Such a property is relevant in sporting events such as Formula One racing, where the top spots are consistently taken by a small number of strong teams.

In the following sections we will explore this class. We shall see how our axioms allow new axiomatisations of well-known (generalised) scoring rules, and how geometric scoring rules compare to optimal rules for a given organiser's objective.

\section{New characterisations}\label{sec:newcharacterisations}

\subsection*{$p>1$: Convex rules and winning in every race}

The FIM motorcycle Grand Prix is another championship that uses a scoring system to select a winner. The 125cc category of the 1999 season had a curious outcome: the winner was Emilio Alzamora, who accumulated the largest amount of points, yet did not win a single race (\Cref{table:Alzamora}). This does not detract in any way from Alzamora's achievement -- he outperformed his competitors by virtue of his consistently high performance (compare with Melandri who performed well in the second half, and Azuma in the first), and if he did not take any unnecessary risks to clinch the first spot then he was justified in not doing so. However, racing is a spectator sport. If a fan attends a particular event then they want to see the athletes give their best performance on the day, rather than play it safe for the championship. 

\begin{table}
\centering
\caption{1999 Motorcycle Grand Prix -- 125cc: scoring system and event results}
\begin{tabular}{lccccccccccccccc}
\toprule
Position&1&2&3&4&5&6&7&8&9&10&11&12&13&14&15\\
\hline
Points&25&20&16&13&11&10&9&8&7&6&5&4&3&2&1\\
\bottomrule
\end{tabular}
\begin{tabular}{lccccccccccccccccr}
\\
\toprule
Rider&\multicolumn{16}{c}{Event number: points}&Total\\
&1&2&3&4&5&6&7&8&9&10&11&12&13&14&15&16&score\\
\hline
Alzamora&20&16&16&16&10&20&13&16&20&10&13&20&1&-&16&20&227\\
Melandri&-&-&-&10&20&16&8&11&\textbf{25}&\textbf{25}&\textbf{25}&-&\textbf{25}&16&20&\textbf{25}&226\\
Azuma&\textbf{25}&\textbf{25}&\textbf{25}&13&9&-&\textbf{25}&\textbf{25}&10&4&6&-&11&2&10&-&190\\
\bottomrule
\end{tabular}
\label{table:Alzamora}
\vspace{0.2cm}
\justify 
\footnotesize{\textit{Notes}: The scores for first place finishes are in bold. Observe that Azuma performed well in the first half of the tournament and Melandri in the second, while Alzamora performed consistently in both halves, yet never came first.}
\end{table}

Bernie Ecclestone, the former chief executive of the Formula One Group, was outspoken about similar issues in Formula One racing -- ``It's just not on that someone can win the world championship without winning a race.'' Instead of the scores then used, Ecclestone proposed a medal system. The driver who finished first in a race would be given a gold medal, the runner-up the silver, the third the bronze. The winner of the championship would be the driver with the most gold medals; in case of a tie, silver medals would be added, then the bronze, then fourth-place finishes, and so on.\footnote{
Bernie Ecclestone justified the medal system as follows:
\begin{quote}
The whole reason for this was that I was fed up with people talking about no overtaking. The reason there's no overtaking is nothing to do with the circuit or the people involved, it's to do with the drivers not needing to overtake.

If you are in the lead and I'm second, I'm not going to take a chance and risk falling off the road or doing something silly to get two more points.

If I need to do it to win a gold medal, because the most medals win the world championship, I'm going to do that. I will overtake you.
\end{quote}
From: \url{https://web.archive.org/web/20191116144110/https://www.rte.ie/sport/motorsport/2008/1126/241550-ecclestone/}} In other words, he proposed the generalised plurality system with $p\ria\infty$. And indeed, for every other geometric scoring rule, it is possible to construct a profile where the overall winner did not win a single race.

\begin{restatable}{proposition}{alzamoraparadox}\label{prop:alzamora}
For any $p<\infty$, there exist $n,m$, and a profile with $n$ races and $m$ athletes where the overall winner does not come first in any race.
\end{restatable}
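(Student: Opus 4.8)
The plan is to exhibit, for any fixed $p<\infty$, a profile on a suitable number of athletes and races in which some athlete $a$ never finishes first yet accumulates a strictly larger total score than every rival. Since geometric scores for finite $p$ are bounded — each place is worth a positive fraction, not an overwhelming amount, of first place — a steady competitor can out-point a collection of specialists who split the wins among themselves. The rider $a$ will always finish second; the remaining wins will be distributed evenly among $N$ other athletes, each of whom wins a $1/N$ fraction of the races and does poorly in the rest.

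Concretely, I would take $m$ athletes and choose $n$ races so that $N=m-1$ divides $n$. Normalise the scores (using linear equivalence, Proposition~\ref{prop:linearlyequivalent}) so that last place is worth $0$; write $s_1$ for the value of first place and $s_2$ for the value of second. In each race athlete $a$ comes second, scoring $s_2$, for a total of $n s_2$. Partition the $n$ races into $N$ blocks of $n/N$ races each; in block $i$, athlete $b_i$ comes first (scoring $s_1$ per race) while the other $b_j$'s are slotted into the low positions. Over the whole tournament each $b_i$ earns $s_1$ exactly $n/N$ times and otherwise earns at most $s_2$ (when not winning, a $b_j$ sits below $a$, so $b_j$'s per-race score in a block it does not win is at most the value of third place, which is at most $s_2$). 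Hence each $b_i$'s total is at most $\tfrac{n}{N}s_1 + n\bigl(1-\tfrac1N\bigr)s_2$. Athlete $a$ wins overall provided
\[
n s_2 \;>\; \frac{n}{N}s_1 + n\Bigl(1-\frac1N\Bigr)s_2,
\]
which simplifies to $s_2 > s_1/N$, i.e.\ $N > s_1/s_2$. Since $p<\infty$, the ratio $s_1/s_2$ is a finite number (for the normalised geometric scores on $m$ positions it equals $\tfrac{p^{m-1}-1}{p^{m-2}-1}$, which is finite for every finite $p$), so choosing $N=m-1$ large enough — equivalently, taking enough athletes — makes the inequality hold. Picking $n$ to be any multiple of $N$ then completes the construction.

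The only subtlety, and the step I would be most careful about, is bounding the non-winning scores of the $b_j$'s: I must make sure that when $b_j$ is not winning its block it is placed strictly below $a$ (so it gets at most the third-place score, not the second), and also handle the bookkeeping when $m$ is small — if $m-1$ is not already larger than $s_1/s_2$ one simply enlarges $m$, which is legitimate since the claim only asks for the existence of \emph{some} profile. One should also note that this argument produces a profile in which $a$ is the \emph{strict} winner, so it survives any tie-breaking and in particular shows the statement for generalised geometric scoring rules as well. For $p=1$ (Borda) the same computation gives $s_1/s_2$ bounded (it is $(m-1)/(m-2)$, close to $1$), so even two or three rival specialists suffice; for large finite $p$ one simply needs correspondingly more athletes, which is exactly the point of the "medal count is the only exception" phenomenon.
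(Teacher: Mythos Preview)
Your construction is the same one the paper uses --- athlete $a$ always second, the other $m-1$ athletes rotating through the remaining positions --- but the bookkeeping in your displayed inequality is wrong, and the error is not cosmetic. You bound a non-winning $b_j$'s per-race score by $s_2$ and then write
\[
n s_2 \;>\; \frac{n}{N}s_1 + n\Bigl(1-\frac1N\Bigr)s_2.
\]
This does \emph{not} simplify to $s_2 > s_1/N$; subtracting $n(1-\tfrac1N)s_2$ from both sides gives $\tfrac{n}{N}s_2 > \tfrac{n}{N}s_1$, i.e.\ $s_2 > s_1$, which is false for any strictly decreasing scores. The simplification you state would follow only if the non-winners scored $0$, but with $m>3$ athletes only one $b_j$ per race can sit in last place, so that bound is unavailable.

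The fix is to stop one step earlier in your own parenthetical: when $b_j$ is not winning it lies strictly below $a$, so it earns at most $s_3$, and you should keep $s_3$ in the inequality rather than weakening it to $s_2$. Then
\[
n s_2 \;>\; \frac{n}{N}s_1 + n\Bigl(1-\frac1N\Bigr)s_3
\quad\Longleftrightarrow\quad
N \;>\; \frac{s_1-s_3}{s_2-s_3},
\]
and the right-hand side is finite for every finite $p$ (for $p>1$ it equals $p+1$, independent of $m$), so taking $m$ large enough works. Better still, arrange the $b_j$'s cyclically so that over the $m-1$ races each one occupies every position except second exactly once; then each $b_j$'s total is \emph{exactly} $s_1+s_3+\cdots+s_m$, and you can compare directly with $a$'s total $(m-1)s_2$. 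This is precisely what the paper does (with $n=m-1$), splitting into cases $p<1$, $p=1$, $p>1$ and in the last case requiring $m>p^2/(p-1)$.
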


There is a natural dual concept to Ecclestone's criterion: rather than asking how many races an athlete must win to have a chance of winning the championship, we could ask after how many victories is the championship guaranteed.\footnote{To win the championship for sure, an athlete must come first in more than $n(s_1^m-s_m^m)/(2s_1^m-s_2^m-s_m^m)$ races; see \citet[][theorem~10]{KondratevNesterov20} and \citet[][theorem~1]{BaharadNitzan02}.} This leads us to the \textbf{majority criterion}, which requires that any athlete that won more than half the races (the \textbf{majority winner}) should also win the championship. The majority criterion together with independence of unanimous winners allows us to characterise generalised plurality.

\bigskip

\begin{restatable}{theorem}{generalisedplurality}\label{thm:generalisedplurality}
Generalised plurality is the only generalised scoring rule that satisfies independence of unanimous winners and always ranks the majority winner first.
\end{restatable}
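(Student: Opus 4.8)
The plan is to prove both implications, the uniqueness direction by induction on the number of athletes $k$, invoking each of the two axioms exactly once.

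\emph{Easy direction.} First I would check that generalised plurality has both properties. If $a$ wins more than half of the $n$ races, then its round-one score (under scores $(1,0,\dots,0)$) exceeds $n/2$ while every other athlete's round-one score is below $n/2$, so $a$ is already strictly first in round one; hence the majority winner is ranked first. For independence of unanimous winners, if $w$ is first in every race then $w$ has round-one score $n$ and every other athlete has round-one score $0$, so $w$ is strictly first and the all-zero round-one scores of the survivors do not affect their lexicographic comparison. Deleting $w$ moves every other athlete from position $j$ to position $j-1$ in each race, which turns the round-$r$ score vector of generalised plurality for $k$ athletes into its round-$(r-1)$ vector for $k-1$ athletes; since the discarded round contributed only zeros among the non-$w$ athletes, their relative order is unchanged (and the number of rounds drops from $k-1$ to $k-2$, as it must).

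\emph{Uniqueness: base case and strategy.} Now let $\mathcal R$ be a generalised scoring rule with both properties; I claim it induces the generalised-plurality ranking on every profile, and argue by induction on $k$. Some round of $\mathcal R$ has a non-constant score vector (otherwise $\mathcal R$ declares every profile a complete tie, violating the majority criterion); call the first such round the leading round. For $k\le 2$, if the leading round has scores $(a,b)$ then in a profile of $n=2t+1$ races where one athlete wins $t+1$ and the other $t$, all earlier constant rounds tie the two athletes and this round gives the majority winner a lead of exactly $a-b$, so the majority criterion forces $a>b$; since with two athletes every round's scores are a function of the win count (equal win counts forcing every round to tie), $\mathcal R$ then ranks by number of wins, which is generalised plurality.

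\emph{Uniqueness: inductive step.} Assume the claim for $k-1$ athletes. The first goal is to show that, up to positive affine rescaling of each round and removal of constant rounds, the leading round of $\mathcal R$ on $k$ athletes has score vector linearly equivalent to $(1,0,\dots,0)$. Here I would use the majority criterion with $n=2t+1$ races and $t$ large, on profiles in which a designated athlete $a$ wins $t+1$ races but is placed in a lowest-scoring position in the remaining $t$, while a rival $b$ occupies a highest-scoring non-first position in $a$'s winning races and position one in the rest (the other athletes filled in arbitrarily): any deviation of the leading round's scores from linear equivalence to $(1,0,\dots,0)$---a strict inequality either between $s_1$ and the other entries or among $s_2,\dots,s_k$---lets $b$ strictly beat $a$ in the leading round by an amount that grows with $t$, which no later tie-breaking round can repair, contradicting that the majority winner $a$ must be ranked first. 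Next I would use independence of unanimous winners to handle ties in first-place count: given any profile $Q$ on $k-1$ athletes, adjoin a new athlete $w$ finishing first in every race (pushing everyone down one position) to obtain a profile $Q^{+}$ on $k$ athletes; by the axiom the $\mathcal R$-ranking of $Q^{+}$ restricted to the original athletes equals the $\mathcal R$-ranking of $Q$, which by the inductive hypothesis is the generalised-plurality ranking of $Q$. But in $Q^{+}$ the non-$w$ athletes all have zero first-place finishes, so (the leading round being plurality, it ties them) their relative order under $\mathcal R$ is governed by the later rounds evaluated at positions $2,\dots,k$; letting $Q$ range over all profiles, this forces the generalised scoring rule on $k-1$ positions whose round-$r$ score vector is $(s_2^{k,r+1},\dots,s_k^{k,r+1})$ to induce the generalised-plurality ranking on $k-1$ items. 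Finally, reassembling: inside any class of athletes tied on first-place count the entries $s_1^{k,r}$ for $r\ge 2$ contribute a common additive constant and are therefore irrelevant, so $\mathcal R$ breaks each first-place tie exactly as generalised plurality on $k-1$ athletes does (under the position bijection $j\leftrightarrow j-1$), which matches the recursive description of generalised plurality on $k$ athletes; hence $\mathcal R$ is generalised plurality on $k$ athletes.

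\emph{Main obstacle.} The crux is the majority-criterion step inside the induction. Because $\mathcal R$ is a \emph{generalised} scoring rule, a tie produced in any single round may be resolved favourably in a later round, so the profile families must be engineered---via the amplification parameter $t$---so that any deviation from plurality-type scores becomes a \emph{strict} reversal in the leading round that no subsequent round can undo; one must also separately dispatch the small-$k$ cases where there are too few positions to run the construction. Everything else is bookkeeping: tracking the round-by-round linear equivalences and checking that the ``irrelevant first coordinate'' observation makes the reassembled rule literally coincide with generalised plurality.
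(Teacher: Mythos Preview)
Your approach is correct and genuinely different from the paper's. The paper fixes $k$ and inducts on the round index $r$, showing that the round-$r$ score vector must be linearly equivalent to $(\overbrace{1,\dots,1}^r,0,\dots,0)$; both axioms are invoked inside every step (independence of unanimous winners on a single-race profile to peel off $a_1,\dots,a_{r-1}$, then the majority criterion on what remains). You instead induct on $k$, using the majority criterion once to pin down the leading round and independence of unanimous winners once to reduce the tie-breaking to the $(k-1)$-athlete case. Your decomposition separates the roles of the two axioms more cleanly; the paper's is more concrete in that it exhibits the score vectors explicitly, which makes the final ``reassembly'' automatic.

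One step in your reassembly is not mere bookkeeping and deserves care. After showing that the tail rule with round-$r$ scores $(s_2^{k,r+1},\dots,s_k^{k,r+1})$ agrees with generalised plurality on every $(k-1)$-athlete profile $Q$, you want to conclude that in an \emph{arbitrary} $k$-athlete profile the tie-breaking among athletes with a common first-place count also agrees. But the pair of position-count vectors $(n_2^a,\dots,n_k^a)$, $(n_2^b,\dots,n_k^b)$ of two such athletes need not be simultaneously realizable as two athletes in any $(k-1)$-athlete profile---for instance $(3,1,0)$ versus $(3,0,1)$ when $k=4$---so you cannot simply look up their comparison in some $Q$. The fix is short but not trivial: both the tail rule and generalised plurality compare $a$ and $b$ through the difference vector $(n_j^a-n_j^b)_{j\ge 2}$ alone, and every nonzero integer vector $d$ with $\sum_j d_j=0$ is realized by the pair $v_j=\max(d_j,0)$, $w_j=\max(-d_j,0)$ (the Hall-type condition $v_j+w_j=|d_j|\le \sum_{i:d_i>0}d_i$ holds), so agreement on realizable pairs yields agreement everywhere.
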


\bigskip

The presence of the majority criterion in the above theorem is not surprising. We could expect as much given the axiomatisation of plurality by \citet[theorem~2]{Lepelley92} and \citet[theorem~4.1]{Sanver02}. But the fact that adding independence of unanimous winners allows us to pin down generalised plurality is interesting because generalised scoring rules are notoriously hard to characterise \citep{BossertSuzumura20}, but in this case two intuitive axioms suffice.

Here we run into a conundrum. Ecclestone's criterion is desirable in the case of a sporting event for the reasons we have mentioned -- the stakes are high in every race, and this encourages the athletes to fight for the top spot rather than settling for second place. The majority criterion, on the other hand, tells a different story. If a driver were to win the first $\floor{n/2 + 1}$ races, the championship is over. The remaining races will take place before an empty stadium.

It seems there is a trade-off between keeping the tension high in an individual race and over the course of the entire tournament, which could explain why the organisers of Formula One World Championship went through so many ranking systems over the years.\footnote{The organisers of Formula One World Championship have produced a study comparing historical results to the hypothetical outcome had Ecclestone's medal system been used (\url{https://web.archive.org/web/20100106134601/http://www.fia.com/en-GB/mediacentre/pressreleases/f1releases/2009/Pages/f1_medals.aspx}). While such comparisons should be taken with a grain of salt, since they do not take into account the fact that athlete's strategies would have been different under a different scoring system, it is nevertheless interesting that they find that 14 championships would have been shorter, while 8 would have been longer.

In a recent work, \citet{Csato21scoring} examines the trade-off between reducing the likelihood of an athlete winning the championship without coming first in any race, and delaying the point at which the championship is decided. The author compares the historic Formula One scoring schemes and geometric scoring rules on a synthetic dataset, and finds that the current Formula One system is indeed on the Pareto frontier.}

\subsection*{$p=1$: The Borda rule and top-winner reversal bias}

\citet{SaariBarney03} recount an amusing anecdote to motivate an axiom known as top-winner reversal bias. In a departmental election the voters were asked to rank three candidates. Mathematically, this is the same problem as ours -- to aggregate $n$ rankings into one final result. All voters ranked the candidates from best to worst, but the chair expected the votes to be ordered from worst to best. The ``winner'' was thus the candidate that ranked highest in terms of the voters' assessment of unsuitability, rather than suitability for the role. After the ensuing confusion, the votes were retallied... and the winner was unchanged. The same candidate was judged to be at once the best and worst for the role. The authors' story ended with the chair being promoted to a higher position, but in a sporting context we could expect a less polite outcome.

The relevant axiom here is \textbf{top-winner reversal bias}, which states that if a candidate $a$ is the unique winner with voters' preferences $R_1,\dots,R_n$, then if we invert the preferences of every voter then $a$ will no longer be the unique winner. Note that the axiom asks for less than one might expect -- we are not asking that the candidate formerly judged the best is now judged the worst, but merely that the same candidate cannot be the best in both cases.

By itself, this axiom is quite weak. If we, without loss of generality, set $s_1=1$ and $s_m=0$, then the only restriction is that for $1\leq j\leq m/2$, $s_{m-j+1}=1-s_j$ \citep[theorem~1]{SaariBarney03}.\footnote{\citet{Gardenfors73} introduces duality (the most demanding variant of reversal bias, also known as inversion and reversal symmetry) and verifies it for different modifications of the Borda rule. \citet{Morkeliunas77} studies interrelations between variants of neutrality, independence, and duality. \citet[theorem~4.2]{FineFine74b} and \citet[corollary~5]{LlamazaresPena15} independently describe the scoring rules that satisfy inversion and reversal symmetry. \citet[theorem~1]{SaariBarney03} describe the scoring rules that are invulnerable to different variants of the reversal bias paradox but omit the formal proof. 

Similar axioms are studied in the case of single-winner (as opposed to ranking) rules \citep[p.~157]{Morkeliunas82,Fishburn73book}. \citet[theorem~4]{HeckelmanRagan21} characterise the scoring rules that satisfy duality (invertibility in their terminology). However, the set of axioms in their characterisation is not complete. For instance, for the case of four candidates, a generalised scoring rule with the Borda vector in the first round and $(2,1,1,0)$ vector in the second round satisfies all their axioms, but is not even a scoring rule. Their characterisation will become correct if we add, for instance, an axiom of continuity introduced by \citet{Young75}.} In other words, we have $\floor{\frac{m-2}{2}}$ degrees of freedom. However, once we add either one of our independence axioms, we get the Borda rule uniquely.

\bigskip

\begin{restatable}{theorem}{Borda}\label{thm:borda}
Borda is the unique scoring rule that satisfies top-winner reversal bias and one of independence of unanimous winners or independence of unanimous losers.
\end{restatable}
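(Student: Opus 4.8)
The plan is to read the conclusion off two ingredients that are already available: Proposition~\ref{prop:linearlyequivalent}, and the known description of the scoring rules satisfying reversal symmetry quoted just before the statement. The latter says that, for profiles over $k$ candidates, reversal symmetry is equivalent to the score vector $(s_1^k,\dots,s_k^k)$ having constant opposite-pair sums, i.e. $s_j^k+s_{k+1-j}^k$ being independent of $j$ (after the normalisation $s_1^k=1$, $s_k^k=0$ this is exactly $s_{k+1-j}^k=1-s_j^k$). The first thing to record is that this property is invariant under linear equivalence: if $(t_1,\dots,t_k)=\alpha(u_1,\dots,u_k)+\beta$ with $\alpha>0$, then $t_j+t_{k+1-j}=\alpha(u_j+u_{k+1-j})+2\beta$, so $(t_j)$ has constant opposite-pair sums exactly when $(u_j)$ does.

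Now assume the scoring rule satisfies reversal symmetry together with independence of unanimous losers. By Proposition~\ref{prop:linearlyequivalent} we have $s_1^m>\dots>s_m^m$ and, for every $k\le m$, the vector $(s_1^k,\dots,s_k^k)$ is linearly equivalent to the prefix $(s_1^m,\dots,s_k^m)$. Reversal symmetry at $k$ candidates makes $(s_1^k,\dots,s_k^k)$ have constant opposite-pair sums, so by the invariance above every prefix $(s_1^m,\dots,s_k^m)$ does too. Writing $d_j:=s_j^m-s_{j+1}^m>0$ and equating the first two opposite pairs of the length-$k$ prefix, namely $s_1^m+s_k^m=s_2^m+s_{k-1}^m$, i.e. $d_1=d_{k-1}$, and letting $k$ run over $3,\dots,m$, we get $d_1=d_2=\dots=d_{m-1}$. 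Hence $(s_1^m,\dots,s_m^m)$ is a strictly decreasing arithmetic progression, so linearly equivalent to $(m-1,m-2,\dots,1,0)$; by Proposition~\ref{prop:linearlyequivalent} the scores for any $k\le m$ are then linearly equivalent to $(k-1,\dots,1,0)$, which is Borda.

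The case with independence of unanimous winners is handled identically, except that Proposition~\ref{prop:linearlyequivalent} now makes $(s_1^k,\dots,s_k^k)$ linearly equivalent to the suffix $(s_{m-k+1}^m,\dots,s_m^m)$, so it is every suffix of $(s_1^m,\dots,s_m^m)$ that has constant opposite-pair sums, and the same index bookkeeping forces all $d_j$ equal. For the converse, Borda is the $p=1$ member of the family in Theorem~\ref{thm:geometricrules}, hence satisfies both independence axioms, and its score vector $(k-1,\dots,1,0)$ plainly has constant opposite-pair sums, hence satisfies reversal symmetry. The only step that needs any care is the bookkeeping that turns ``every prefix (resp. suffix) has constant opposite-pair sums'' into ``all consecutive differences coincide''; note in particular that one genuinely uses reversal symmetry at all candidate counts $k\le m$, not just at $k=m$ (reversal symmetry for the full vector alone only forces the differences to be palindromic, $d_j=d_{m-j}$, which is compatible with non-arithmetic vectors such as $(3,2,0,-1)$). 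Everything else is a direct splicing of the two cited characterisations.
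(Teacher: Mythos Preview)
Your proof is correct and takes a genuinely different route from the paper's. The paper argues by induction on $k$: assuming the scores for $k-1$ athletes are Borda, Proposition~\ref{prop:linearlyequivalent} forces the $k$-athlete scores to be (linearly equivalent to) $(k-1,\ldots,2,1,s_k)$ for some $s_k<1$, and the paper then exhibits explicit profiles---with case splits on the sign of $s_k$ and the parity of $k$---in which a non-zero $s_k$ produces a unique winner who remains the unique winner after reversal. You instead read the conclusion directly off the cited Saari--Barney/Llamazares--Pe\~na characterisation: applying it at every $k$ and transporting the constant opposite-pair-sum condition through the linear equivalences of Proposition~\ref{prop:linearlyequivalent} forces every prefix (resp.\ suffix) of $(s_1^m,\dots,s_m^m)$ to have this property, whence $d_1=d_{k-1}$ for all $k$ and the full vector is arithmetic. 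Your route is shorter and avoids any profile construction, at the cost of invoking the external characterisation as a black box (the paper cites it but does not prove it, and the footnote notes that the formal proof in Llamazares--Pe\~na is for a stronger variant of the axiom); the paper's route is more self-contained and makes the failure of reversal symmetry concrete. Both proofs rely essentially on applying reversal symmetry at every $k\le m$, a point you rightly flag.
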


\bigskip

Within the context of geometric scoring rules, there is an easier way to see that Borda is the unique rule satisfying top-winner reversal bias. If $g_p$ is the geometric scoring rule with parameter $p$, and $p\neq 1$, then by a result of \citet[theorem~1]{Fishburn81DAM} there exists a profile $Q$ such that $g_p(Q)$ is the reverse order of $g_{1/p}(Q)$, $g_p(Q)=\mathit{rev}(g_{1/p}(Q))$. At the same time, it is easy to see that $g_p(\mathit{rev}(Q))=\mathit{rev}(g_{1/p}(Q))$. Combining the two we get $g_p(Q)=g_p(\mathit{rev}(Q))$, which violates top-winner reversal bias.

A more demanding version of reversal bias, known as \textbf{duality} \citep{Gardenfors73} and \textbf{inversion} \citep{FineFine74b}, states that if we invert the individual rankings of every voter then the aggregate ranking will be also inverted. \citet[corollary to theorem~4.3]{FineFine74b} use inversion to provide a characterisation of Borda similar to \autoref{thm:borda}.\footnote{In addition to using inversion instead of top-winner reversal bias to characterise the Borda rule, the independence axioms of \citet{FineFine74b} do not require the ranking of a unanimous winner first and a unanimous loser last, as our axioms do, but instead their definition of scoring rules directly requires non-decreasing scores.

In this paper we consider Borda, and all other scoring rules, as ranking rules -- they produce an ordering of the athletes from best to worst. If we are only interested in the version of Borda that selects the winner(s), then a difficult to find paper of \citet{Morkeliunas82} presents a characterisation of the Borda rule using variants of duality and independence of unanimous losers. \citet[theorem~5]{HeckelmanRagan21} rediscovered the characterisation, but used a more demanding axiom than the duality of \citet{Morkeliunas82} and additionally required an axiom of positive responsiveness. Moreover, the proof of \citeauthor{HeckelmanRagan21} relies on their theorem~4, which requires additional axioms, as we have shown in the previous note.}

\subsection*{$p<1$: Concave rules and majority loser paradox}

The beginning of modern social choice theory is often dated to Borda's memorandum to the Royal Academy \citep{Borda1781}, where he demonstrated that electing a winner by plurality could elect a \textbf{majority loser} -- a candidate that is ranked last by an absolute majority of the voters.\footnote{\citet[corollary~4]{LlamazaresPena15} and \citet[theorem~10]{KondratevNesterov20} provide equivalent characterisations of the scoring rules that never rank the majority loser first (in their notation, immune to the absolute loser paradox, and satisfy the majority loser criterion, respectively).} The extent to which such a result should be viewed as paradoxical depends on the context in which a ranking rule is used. In sports, this may be acceptable -- a sprinter who has three false starts and one world record is still the fastest man in the world. In a political context, however, voting is typically justified by identifying the will of the majority with the will of the people; it would be odd to argue that the will of the majority is to pick a candidate that the majority likes the least. Likewise, should a group recommendation system suggest that a group of friends watch a film that the majority detests, soon it would be just a group.

It turns out that the weak version of the criterion -- that the majority loser is never ranked first -- is characteristic of the concave geometric rules ($p\leq 1$). The strong version -- that the majority loser is always ranked last -- is satisfied only by generalised antiplurality ($p\ria~0$).


\begin{restatable}{theorem}{concaveGSR}\label{thm:concaveGSR}
Geometric scoring rules with parameter $0 < p\leq 1$ are the only scoring rules that satisfy independence of unanimous winners and independence of unanimous losers and never rank the majority loser first.
\end{restatable}

\begin{restatable}{theorem}{concaveGSRTwo}\label{thm:concaveGSRTwo}
Generalised antiplurality is the only generalised scoring rule that satisfies independence of unanimous losers and always ranks the majority loser last.
\end{restatable}


\section{Optimal scoring rules}\label{sec:optimal}

The practical relevance of the previous sections is that if the organiser accepts that our two axioms are desirable -- and they are very natural axioms -- then the problem of choosing a scoring rule is reduced to the choice of a single parameter,~$p$.

Unfortunately, the choice of even a single parameter is far from trivial. In the previous section we saw how an axiomatic approach can pin down the edge cases of generalised plurality ($p\ria\infty$), Borda ($p=1$) or generalised antiplurality ($p\ria~0$).\footnote{Theoretical frameworks where the best scoring rule was found to be different from Borda, plurality, and antiplurality are rare indeed. Some examples include \citet{Lepelley95}, \citet{LepelleyPierronValognes00,LepelleyMoyouwouSmaoui18}, \citet{CervoneGehrleinZwicker05}, \citet{Sitarz13}, \citet{Kamwa19}, \citet{DissKamwaMoyouwouSmaoui21} and \citet{Kilgour22}.} In applications where the properties these axioms represent are paramount, the question is then settled: if you are after a scoring rule that satisfies independence of unanimous losers and top-winner reversal bias, you must use Borda. There is no other. However, in the case of sports these extreme rules are rarely used. Whatever goals the organisers are pursuing, these are more complicated than simply satisfying an axiom.

In the remainder of this paper, we will take an empirical approach to selecting a scoring rule for an event. We introduce a model of the organiser's objective, assuming the goal is to select an athlete that maximises some measure of quality, which aggregates the athlete's cardinal results. By imposing four axioms we see that this aggregation function ($F_\lambda$) must be determined solely by a parameter $\lambda$, which can be interpreted as the organiser's preferences for peak performance versus consistency. It turns out that among all ordinal procedures for producing a ranking of athletes, it is precisely the scoring rules which rank the athletes in accordance to the expected values of $F_\lambda$, and these scores can be computed from empirical data. If the distribution of the athletes' cardinal results is uniform, then the optimal scoring rules are approximately geometric, but in general the two will differ. We conclude by computing these optimal scores for the IBU World Cup biathlon, PGA TOUR golf, and IAAF Diamond League athletics, and compare them to the best approximation via a geometric scoring rule.

\subsection*{The organiser's objective}

An organiser's goals can be complex. For a commercial enterprise the end goal is profit, whether from ad revenue or spectator fees. To that end they would prefer that athletes take risks and keep the audience on edge, rather play a safe and sure strategy. If a tournament lasts for a long time, the presence of consistently strong athletes -- crowd favourites -- could help hold the viewers' attention throughout the season. In this case the organiser would want a system that encourages athletes to perform consistently well in every race.
In a youth racing league, the focus could be that the drivers finish the race with engines and bodies intact -- the goal being that the drivers learn to finish the race, before trying to finish it in record time.\footnote{We have previously mentioned that Bernie Ecclestone is outspoken about the need of a ranking system that motivates racers to overtake in every race. For contrast, in the Tour de France the prize for the most aggressive athlete -- the prix de la combativit\'e -- is relatively unknown, while the most prestigious award is the Yellow Jersey, awarded to the athlete with the best overall time.

The Castrol Toyota Racing Series positions itself as incubating the next generation of racing talent, and the competitors tend to be very young. The 2018 season consisted of 14 drivers, and the score for the fourteenth position was 24. The winner was Robert Shwartzman with a total score of 916. Richard Verschoor came second with 911, but failed to complete the third race (and thus got no points). Had Verschoor finished in any position whatsoever, he would have taken the championship.}

Because of this, we want our model of the organiser's objective to be as general as possible. We assume that in each of the $n$ events, an athlete's performance in an event $i$ can be assessed as a cardinal quantity, $x_i$ (e.g.\ finishing time in a race, strokes on a golf course, score in target shooting). The athlete's aggregate performance is measured by a function $F:\mathbb{R}^n\ria\mathbb{R}$ that maps these $n$ cardinal quantities, $\bs{x}=(x_1,\ldots,x_n)$, into an overall measure of quality -- an aggregation function \citep{Grabisch09book,Grabisch11}. The space of such functions is too vast to be tractable, so we shall narrow it down by imposing four axioms on how a measure of quality should behave.

The first axiom has to do with the measurement of the cardinal qualities $x_i$. Suppose an athlete competes in the javelin throw, and in the $i$th round throws a distance of 95 metres. There are two natural ways in which we could record this. The first is to simply set $x_i=95$, the second is to compare the throw to the current world record of 98.48 and set $x_i=95-98.48=-3.48$. It would be absurd if the two approaches would rank our athlete differently vis-\`a-vis the other athletes. Thus we require the condition of \textbf{independence of the common zero}, which states that whenever $F(\bs{x})\geq F(\bs{y})$, it is also the case that $F(\bs{x}+\bs{c})\geq F(\bs{y}+\bs{c})$, where $\bs{c}=(c,\dots,c)$ and the notation $\bs{x}+\bs{c}$ denotes $(x_1+c,\dots,x_n+c)$.

The next two axioms deal with the intuition that our aggregation function is intended to measure quality, and hence higher values of $x_i$, the performance in an individual event, should contribute to a higher level of $F(\boldsymbol{x})$, the overall quality. The least we could ask for is that if an athlete performs (strictly) better in \emph{every} event, then their overall quality should also be (strictly) higher. This is the condition of \textbf{unanimity}, requiring that whenever $x_i\geq  y_i$ $(x_i>y_i)$ for all $i$, it is also the case that $F(\boldsymbol{x})\geq F(\boldsymbol{y})$ $(F(\boldsymbol{x})>F(\boldsymbol{y}))$.

Next, consider the admittedly odd situation where two javelin throwers, $a$ and $b$, obtain potentially different results on the first $q$ throws, but throw the javelin the exact same distance as each other in throws $q+1$ through $n$. For example, let $a$'s results on the first three throws be $(94, 90, 89)$, $b$'s results -- $(93, 93, 92)$, and for the sake of argument let us suppose that $F$ assigns a higher quality to $a$. It is natural to assume that this decision does not change if throws 4 through 6 are identical. As such, if the complete results are $(94, 90, 89, \mathit{70, 94, 90})$ for $a$ and $(93, 93, 92, \mathit{70, 94, 90})$ for $b$, we would still expect $F$ to assign a higher quality to $a$. This is the property of \textbf{separability}, stating that for $\boldsymbol{x}=(x_1,\ldots,x_q),\boldsymbol{y}=(y_1,\ldots,y_q)$, and $\boldsymbol{z}=(z_{q+1},\ldots,z_n)$, whenever $F(\boldsymbol{x})\geq F(\boldsymbol{y})$, it is also the case that $F(\boldsymbol{x}\boldsymbol{z})\geq F(\boldsymbol{y}\boldsymbol{z})$, where the notation $\boldsymbol{x}\boldsymbol{z}$ denotes $(x_1,\dots,x_q,z_{q+1},\dots,z_n)$.

The final condition perhaps has the most bite. We assume that the order of the results does not matter -- it should not matter whether an athlete throws 93 in round $i$ and 92 in round $q$, or vice versa; \textbf{anonymity} requires that $F(\boldsymbol{x})=F(\pi\boldsymbol{x})$, for any permutation $\pi$. This would have been an innocuous assumption in a political context, where it is standard to assume that all voters are equal, but it is a real restriction in sports as it is entirely natural for different events to be weighted differently. However, we justify this assumption since the three categories we examine in the next section (IBU World Cup biathlon, PGA TOUR golf, IAAF Diamond League athletics) do not distinguish between their events in scoring.

It turns out that the only continuous solution satisfying these four properties \citep[theorem~2.6, p.~44]{Moulin1991} is defined with respect to a parameter $\lambda$ and is the following:\footnote{Our version of the separability axiom implies Moulin’s (\citeyear{Moulin1991}) separability; this argument is well known, see e.g. lemma~18 in \citet{Kothiyal14}.}
\begin{equation*}\label{exponential-quality}
    F_{\lambda}(\boldsymbol{x})=  \sum\limits_{i=1}^n {u_\lambda(x_i)} =
    \begin{cases}
    \sum\limits_{i=1}^n {\lambda^{x_i}}, &\lambda>1,\\
    \sum\limits_{i=1}^n x_i, &\lambda=1,\\
    \sum\limits_{i=1}^n -{\lambda^{x_i}}, &0<\lambda<1.
    \end{cases}
\end{equation*}
As an added bonus, $F_{\lambda}$ enjoys a version of scale invariance. Since $\lambda^{\alpha x_i}=(\lambda^\alpha)^{x_i}$, it does not matter whether the race is measured in minutes or seconds, provided the organiser adjusts the value of $\lambda$ accordingly.

The parameter $\lambda$ can be interpreted as the organiser's preferences for peak performance versus consistency. With $\lambda=1$, the organiser values consistency and assesses athletes by their average performance. As $\lambda$ increases, the organiser is more willing to tolerate poor average performance for the possibility of observing an exceptional result, culminating in the lexmax rule as $\lambda\ria\infty$. As $\lambda$ decreases, the organiser is increasingly concerned about subpar performance, tending to the lexmin rule as $\lambda\ria0$. Other factors concerning the choice of $\lambda$ are discussed in Appendix~\ref{app:lambda}.

We shall thus assume that the organiser assesses the quality of the athletes via $F_{\lambda}$, and wishes to choose a sequence of scores such that the athletes with the highest quality have the highest total score.

\bigskip

\subsection*{Why scoring rules?}

At this point one may ask, if we have access to the cardinal values $x_i$, why bother with a scoring rule at all? In a political context, cardinal voting is problematic since voters may not know their utilities exactly, and in any case would have no reason to report them sincerely, but in sport these are non-issues -- we can measure $x_i$ directly, and a race protocol is incapable of strategic behaviour. Nevertheless, a cardinal approach has its problems even in sport. In a contest where athletes are operating near the limits of human ability, the cardinal difference between first and second place could be minuscule, and a race decided by milliseconds. On the other hand failing to complete a race, or completing it poorly for whatever reason, would be an insurmountable penalty. Ordinal rankings also allow the comparison of results between different races, while cardinal results would be skewed by external factors like wind, rain, or heat. This can explain why in practice ordinal procedures are more popular.

The advantages of a scoring rule over other ordinal procedures is, in addition to the axiomatic properties discussed before, the fact that if we are interested in maximising a sum of cardinal utilities (such as $F_{\lambda}$), then the optimal voting rule is a scoring rule, provided the utilities are drawn i.i.d.\ from a distribution symmetric with respect to athletes.

\clearpage

\begin{theorem}[{\citealp[p. 277--279]{ApesteguiaBallesterFerrer2011,Boutilier2015,Laplace1795}}]\label{thm:optimal}
Denote by $u_i^a$ the cardinal quality of athlete~$a$ in race~$i$. Denote by $u_i=(u_i^1,\ldots,u_i^m)$ the vector of cardinal qualities in race $i$ and $(u_i^{(1)},\ldots,u_i^{(m)})$ its reordering in non-increasing order. Suppose $u_1,\ldots,u_n$ are drawn independently and identically from a distribution with a symmetric joint cumulative distribution function (i.e., permutation of arguments does not change the value of this c.d.f.).

Consider a scoring rule with scores equal to the expected value of the corresponding order statistics:
\begin{equation*}
s_j=\EE[u_i^{(j)}]. 
\end{equation*}
Then the winner under this scoring rule is the athlete with the highest expected overall quality:
\begin{equation*}
    \max_{a}{\EE\left[\sum\limits_{i=1}^n u_i^a \,\middle\vert\, R_1(u_1),\ldots,R_n(u_n)\right]},
\end{equation*}
where expectation is conditional on $R_i(u_i)$ -- the ordinal ranking induced by $u_i$. If we make the further assumption that the cardinal qualities $u_i^a$ are drawn independently and identically (i.e.\ we further assume that the performances of athletes in a race are independent) from a distribution with a continuous density function, it is also the case that the total score of $a$ is equal to $a$'s expected overall quality.
\end{theorem}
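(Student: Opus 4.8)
The plan is to split the conditional expectation into a sum over races and then exploit the symmetry of the joint distribution across athletes within a single race. First I would apply linearity of expectation and the independence of the races:
\[ \mathbb{E}\left[\sum_{i=1}^n u_i^a \,\middle|\, R_1,\dots,R_n\right]=\sum_{i=1}^n\mathbb{E}\left[u_i^a \,\middle|\, R_1,\dots,R_n\right]=\sum_{i=1}^n\mathbb{E}\left[u_i^a \,\middle|\, R_i\right]. \]
The second equality holds because $u_1,\dots,u_n$ are drawn independently and both $u_i^a$ and $R_i=R_i(u_i)$ are functions of $u_i$ alone, so conditioning on $R_\ell$ for $\ell\neq i$ is irrelevant to the $i$-th summand. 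Thus the whole theorem reduces to the one-race identity: if $a$ is ranked $j$-th in race $i$, then $\mathbb{E}[u_i^a\mid R_i]=s_j$ with $s_j=\mathbb{E}[u_i^{(j)}]$.

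The core of the argument is a relabelling symmetry. Fix a race and drop the index $i$. For a permutation $\rho$ of the athletes, relabelling athlete $b$ as $\rho(b)$ leaves the joint law of $(u^1,\dots,u^m)$ invariant — this is exactly the hypothesis that the joint c.d.f.\ is symmetric — while it carries the event ``$b$ is ranked $j$-th'' to ``$\rho(b)$ is ranked $j$-th'' and the variable $u^b$ to $u^{\rho(b)}$. Letting $\rho$ range over all permutations, I conclude that $\mathbb{E}[u^b\mid b\text{ ranked }j\text{-th}]$ is one and the same number $s_j$ for every athlete $b$; since on the event ``$b$ ranked $j$-th'' one has $u^b=u^{(j)}$, and these events (ties aside) partition the sample space, averaging over $b$ identifies $s_j$ with $\mathbb{E}[u^{(j)}]$, matching the definition of the scores. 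Restricting $\rho$ to permutations that fix $a$ shows in addition that $\mathbb{E}[u^a\mid R]$ is unaffected by the relative ranking of the \emph{other} athletes, so it depends on the realised ranking only through $a$'s own position; combined with the previous point, $\mathbb{E}[u^a\mid R]=s_j$ whenever $a$ is ranked $j$-th. Summing back over races, $\mathbb{E}[\sum_i u_i^a\mid R_1,\dots,R_n]$ is precisely the total score the rule with scores $s_j=\mathbb{E}[u^{(j)}]$ assigns to $a$, so the maximiser of the former is the scoring-rule winner — the first assertion.

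Two matters need care. Ties: under the bare symmetric-c.d.f.\ assumption the event $\{u^a=u^b\}$ may have positive probability, so ``$a$'s rank'' must be read as the block of positions $a$ shares with the athletes tied with it, and symmetry among those athletes forces $\mathbb{E}[u^a\mid R]$ to equal the average of the $s_j$ over that block; with the scoring rule resolving such ties by averaging, the argmax conclusion is unaffected. The stronger hypothesis: if the $u_i^a$ are i.i.d.\ across both races and athletes with an absolutely continuous density, ties have probability zero, $u^{(j)}$ becomes the ordinary $j$-th order statistic of $m$ i.i.d.\ draws, and the computation above is an exact equality — each athlete's total score equals their conditional expected overall quality — which is the second assertion. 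I expect the only real obstacle to be the measure-theoretic bookkeeping in the symmetry step: verifying that relabelling genuinely pushes forward the conditional law of $u$ as claimed, and getting the tie conventions exactly right. Everything after that is linearity of expectation and the independence of the races.
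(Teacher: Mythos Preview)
The paper does not supply its own proof of this theorem: it is stated as a cited result (attributed to Apesteguia--Ballester--Ferrer and Boutilier et al.), and the appendix of proofs omits it entirely. There is therefore nothing in the paper to compare your attempt against.

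For what it is worth, your argument is correct and is the standard one. Reducing to a single race via independence and linearity, then using exchangeability to show that $\mathbb{E}[u^a\mid R]$ depends only on $a$'s position and equals $\mathbb{E}[u^{(j)}]$, is exactly how the result is proved in the sources the paper cites. Your handling of the two delicate points --- the pushforward of the conditional law under relabelling permutations, and the tie convention under the weaker symmetric-c.d.f.\ hypothesis --- is adequate, and your distinction between the two assertions (argmax agreement in general, exact equality of score and conditional expectation under the i.i.d.\ absolutely continuous hypothesis) matches the theorem statement.
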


\bigskip

Substituting $u_i^a=\lambda^{x_i^a}$ for $\lambda>1$, $u_i^a=x_i^a$ for $\lambda=1$ and $u_i^a=-\lambda^{x_i^a}$ for $0<\lambda<1$, it follows that if the organiser wishes to choose a winner based on $F_\lambda$, they should use a scoring rule.\footnote{The theorem statements of \citet[theorem~3.1]{ApesteguiaBallesterFerrer2011} and \citet[theorem~4.2]{Boutilier2015} assume that all values $u_i^a$ are non-negative, but the proofs work for any value of~$u_i^a$.} The \textbf{optimal scoring rule} for a given $\lambda$ can be computed by evaluating $\EE[u_i^{(j)}]$ on historical data.

\begin{example}

In \autoref{table:optimalscorescalculation}, we demonstrate how the optimal scoring sequence for the men's 100m sprint could be computed, assuming the only data we have available is from the 2015 IAAF Diamond League. If the organiser values consistent performance ($\lambda=1$), then $u_i^a=x_i^a$, so by \autoref{thm:optimal} the score awarded for the first position should equal the expected performance of the first-ranked athlete. Evaluating this on our data, we have $(-0.16-0.30-0.17-0.54-0.23-0.29)/6=-0.28$. Repeating the calculations for the remaining positions, the optimal scoring vector is $(-0.28, -0.38, -0.43, -0.46, -0.49, -0.54, -0.60, -0.63)$. If we desire a more visually appealing vector, recall that affinely equivalent scores produce identical rankings, so we can normalise the scores to range from 0 to 100, namely $(100,73,59,49,42,27,11,0)$.

If the organiser values the chance of exceptional performance more than consistency, then their measure of athlete quality is parameterised by a $\lambda>1$. The exact value is exogenous to our model, but as a consequence of \autoref{thm:optimal}, $\lambda$ has a natural numerical interpretation -- how much is an extra unit of performance worth? Choosing a $\lambda>1$ displays a willingness to award an athlete who completes a race with $x+1$ units of performance $\lambda$ times as many points as the athlete that completes the race with $x$ units. In \autoref{table:optimalscorescalculation} we measure performance in seconds, and one second is a colossal difference in the 100m sprint. Thus choosing a $\lambda$ as high as $100$ seems perfectly reasonable. With $\lambda=100$, $u_i^a=100^{x_i^a}$, so the score awarded for the first position ought to be $(100^{-0.16}+100^{-0.30}+100^{-0.17}+100^{-0.54}+100^{-0.23}+100^{-0.29})/6=0.31$, and the normalised vector is $(100,51,34,26,20,11,5,0)$.
\end{example}

\begin{table}
\centering
\caption{Men’s 100m of the IAAF Diamond League 2015}
\begin{tabular}{ccccccc|cc|cc}
\toprule
Position&\multicolumn{6}{c}{Event: lag behind world record}&\multicolumn{4}{c}{Optimal scores}\\
&Doha&Eugene&Rome&New York&Paris&London&\multicolumn{2}{c}{$\lambda=1$}&\multicolumn{2}{c}{$\lambda=100$}\\
\hline
1&-0.16&-0.30&-0.17&-0.54&-0.23&-0.29&-0.28&100&0.31&100\\
2&-0.38&-0.32&-0.40&-0.55&-0.28&-0.32&-0.38&73&0.19&51\\
3&-0.43&-0.41&-0.40&-0.57&-0.41&-0.34&-0.43&59&0.15&34\\
4&-0.45&-0.41&-0.48&-0.60&-0.44&-0.38&-0.46&49&0.13&26\\
5&-0.46&-0.44&-0.49&-0.66&-0.47&-0.40&-0.49&42&0.11&20\\
6&-0.49&-0.55&-0.50&-0.70&-0.50&-0.49&-0.54&27&0.09&11\\
7&-0.52&-0.69&-0.50&-0.82&-0.54&-0.50&-0.60&11&0.07&5\\
8&-0.56&-0.70&-0.56&-0.87&-0.60&-0.51&-0.63&0&0.06&0\\
\bottomrule
\end{tabular}
\label{table:optimalscorescalculation}
\vspace{0.2cm}
\justify 
\footnotesize{\textit{Notes}: The numbers on the left represent the difference in seconds between the world record (9.58) and the time of the athlete that finished first through eighth. On the right we see the raw and normalised optimal scoring sequence computed on this data for parameters $\lambda=1$ and $\lambda=100$.}
\end{table}

\subsection*{Parallels to geometric scoring rules}

The reader will notice that $F_\lambda$ bears a resemblance to a geometric scoring rule -- for $p,\lambda>1$ we raise a certain parameter to the power of a measure of performance in a given race (whether cardinal or ordinal), and sum the result across the races.

We arrived at similar results because we started with similar axioms. Scoring rules are characterised by anonymity, neutrality, and electoral consistency (\citealp{Smith73,Young74,Young75}, see \autoref{app:complete}). Anonymity and neutrality require that scoring rules treat races and athletes equally; in the cardinal setting we impose anonymity directly, and neutrality is implicit in the fact that we use the same $F_\lambda$ to measure the quality of every athlete. Electoral consistency guarantees that if an athlete is leading in the first $q$ and the last $n-q$ races of the tournament taken separately, then he is also the champion overall. Separability is similar in that it allows us to interpret $F_\lambda(\boldsymbol{x})\geq F_\lambda(\boldsymbol{y})$ as meaning that the first athlete is better in the first $q$ races ($\boldsymbol{x}$ versus $\boldsymbol{y}$) and (weakly) better in the last $n-q$ ($\boldsymbol{z}$ versus $\boldsymbol{z}$), then he is also better overall.

Crucially, independence of the common zero allows us to raise or lower the performance of all athletes by a common $c$ without affecting their relative ranking. It seems that padding the profile above or below with unanimous winners/losers is in some sense the ordinal equivalent of adding~$c$.

Formally, we can show that in the case of a uniform distribution, optimal scores are in fact approximately geometric.

\bigskip

\begin{restatable}{theorem}{OptimalUniform}\label{thm:OptimalUniform}
Let $x^1,\ldots,x^m$ be independently and uniformly distributed on $[a,b]$, and $x^{(1)}\geq\ldots\geq x^{(m)}$ be their reordering in non-increasing order. As $m\ria\infty$, the optimal scores $s_j=\EE\left[\lambda^{x^{(j)}}\right]$ for $\lambda>1$ and $s_j=\EE\left[-\lambda^{x^{(j)}}\right]$ for $0<\lambda<1$ converge to geometric scores with parameter $p=\lambda^{\frac{b-a}{m+1}}$.

For $\lambda=1$ the optimal scoring rule is exactly Borda, which has been known since \citet[p. 277--279]{Laplace1795}.
\end{restatable}

\bigskip

In the limit cases of $\lambda\ria\infty$ and $\lambda\ria0$, the optimal scoring rule tends to generalised plurality and antiplurality for a wide class of distributions.

\bigskip

\begin{restatable}{theorem}{pluralityLimit}\label{thm:pluralityLimit}
Let the number of potential athletes be fixed and finite ($M<\infty$), and for every number $m$ of athletes, $m\leq M$, their performances $x_i^1,\ldots,x_i^m$ in each competition~$i$ be drawn independently and identically from a distribution on $[a,b]$ such that density function $f$ and its derivative are bounded and continuous.

Suppose $b$ is finite and $f(b)>0$. Then as $\lambda\ria\infty$, the optimal scores $s_j=\EE\left[\lambda^{x_i^{(j)}}\right]$ tend towards (but are never equivalent to) plurality. If, in addition, the number of competitions $n$ is bounded from above and the first $M-1$ derivatives of $f$ are bounded and continuous, then there exists a finite $\overline{\lambda}$ such that for each $\lambda>\overline{\lambda}$ the optimal scoring rule is equivalent to generalised plurality.

Suppose $a$ is finite, and $f(a)>0$. Then as $\lambda\ria0$, the optimal scores $s_j=\EE\left[-\lambda^{x_i^{(j)}}\right]$ tend towards (but are never equivalent to) antiplurality. If, in addition, the number of competitions $n$ is bounded from above and the first $M-1$ derivatives of $f$ are bounded and continuous, then there exists a finite $\underline{\lambda}$ such that for each $0<\lambda<\underline{\lambda}$ the optimal scoring rule is equivalent to generalised antiplurality.
\end{restatable}

\bigskip


\section{Empirical evaluation}\label{sec:empirical}

How realistic is our assumption that the organiser assesses athlete performance by the aggregation function $F_{\lambda}$? We compared the actual scores used in the IBU World Cup biathlon (\autoref{OptimalGraphIBU}), the PGA TOUR golf, and the IAAF Diamond League athletics (\autoref{OptimalGraphPGA}). Details about the data and calculations can be found in \autoref{app:love_sport}.

\begin{figure}
\begin{center}
\includegraphics[width=8cm]{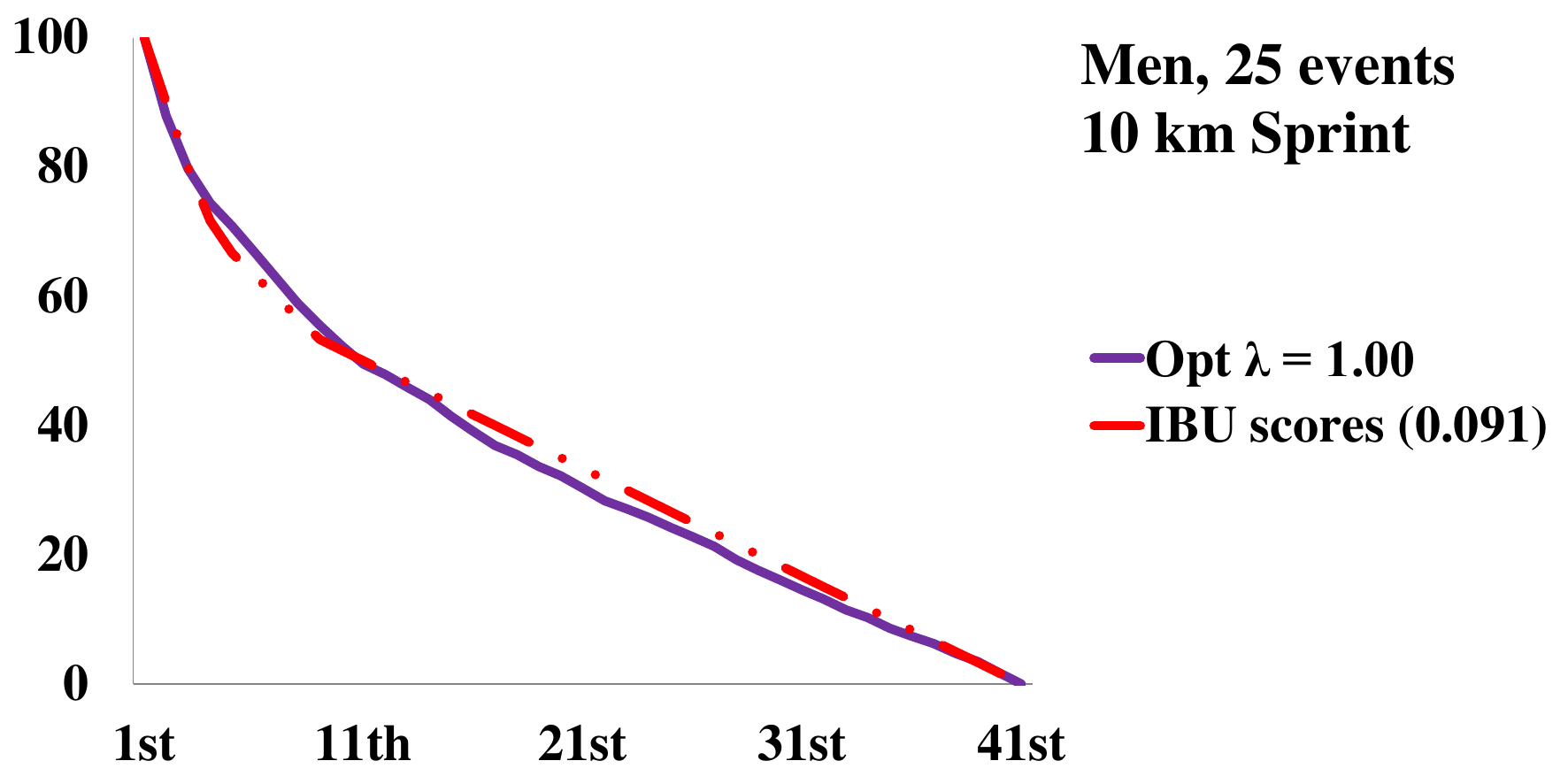}
\includegraphics[width=8cm]{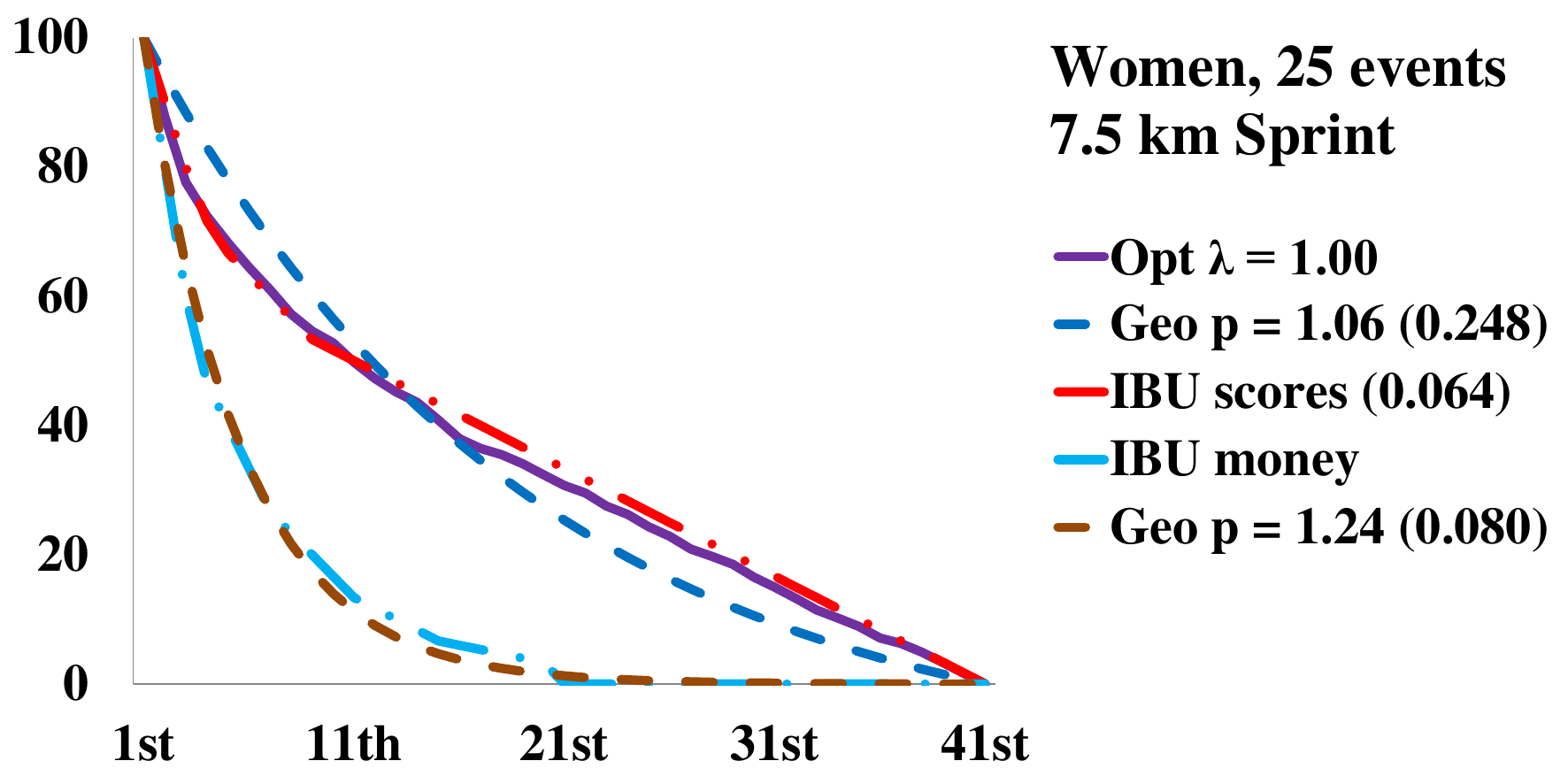}
\includegraphics[width=8cm]{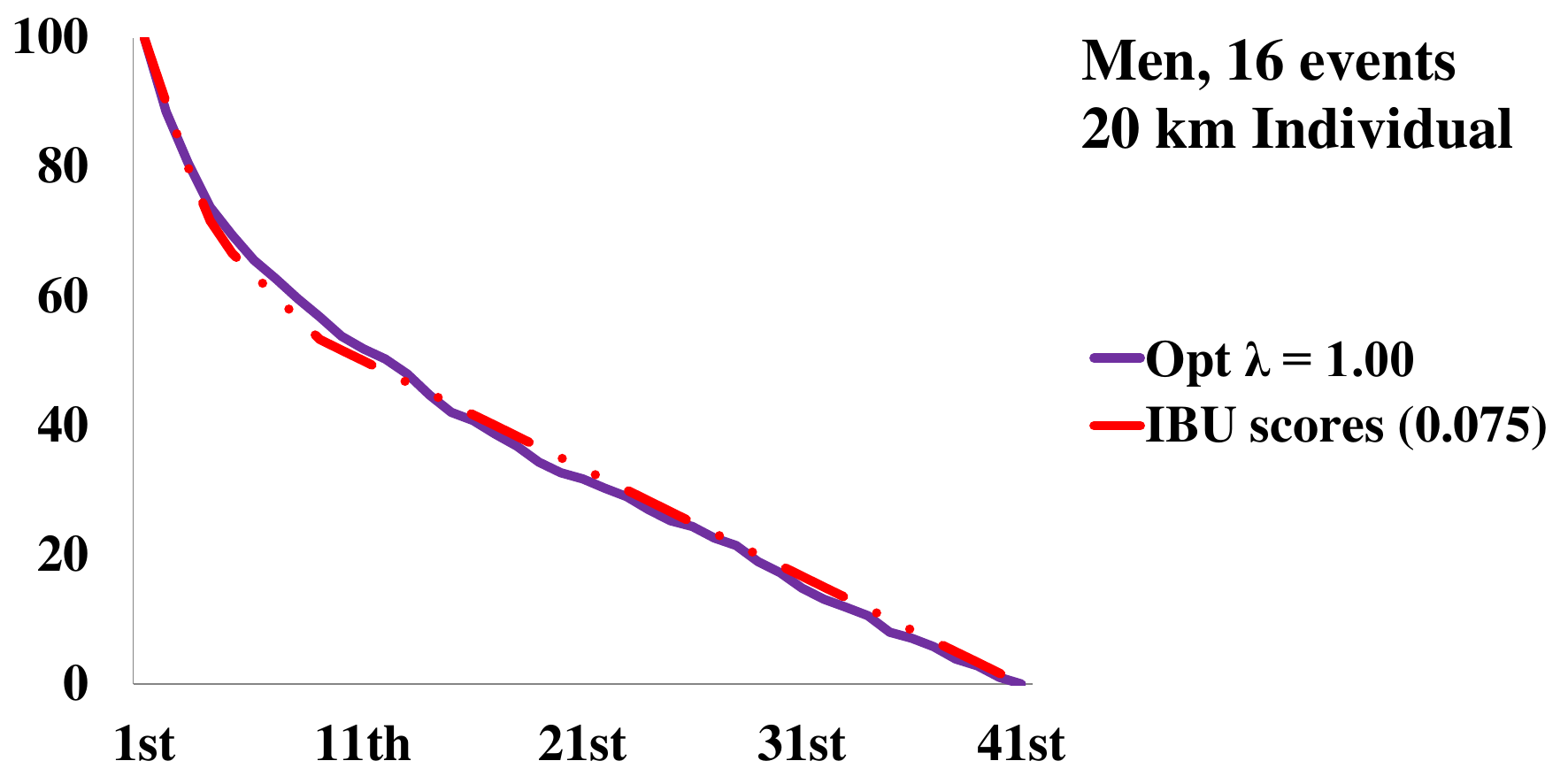}
\includegraphics[width=8cm]{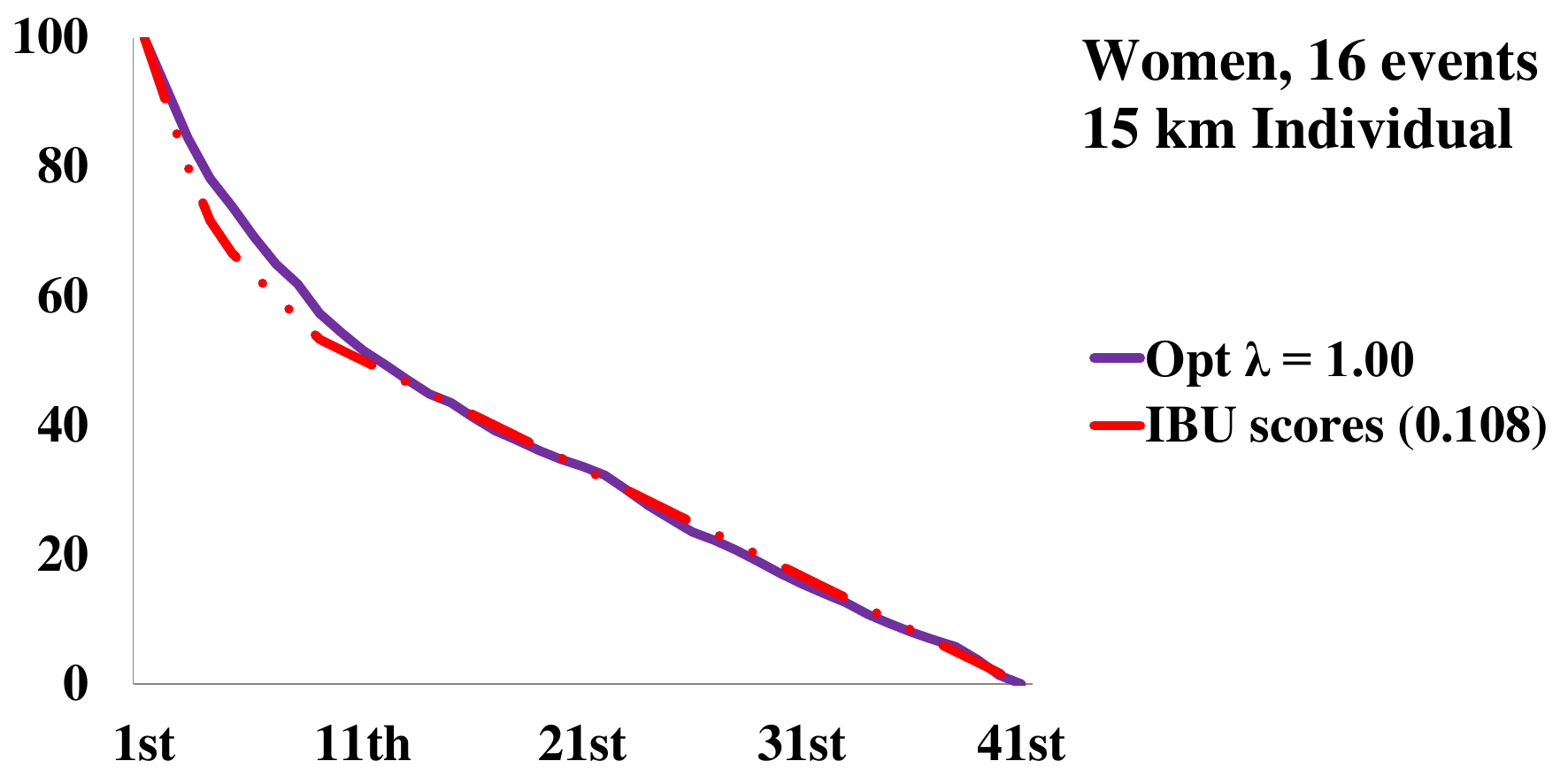}
\includegraphics[width=8cm]{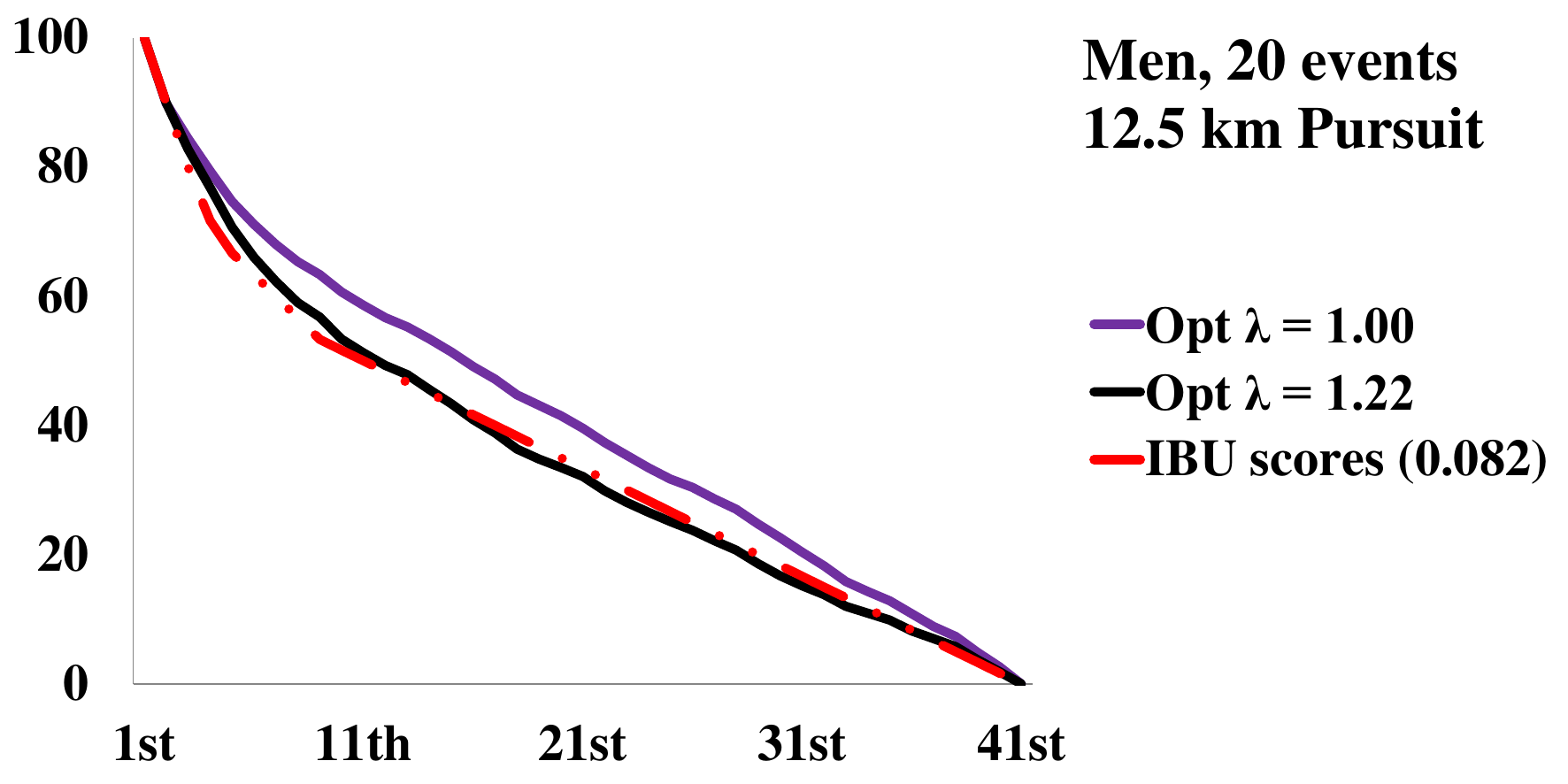}
\includegraphics[width=8cm]{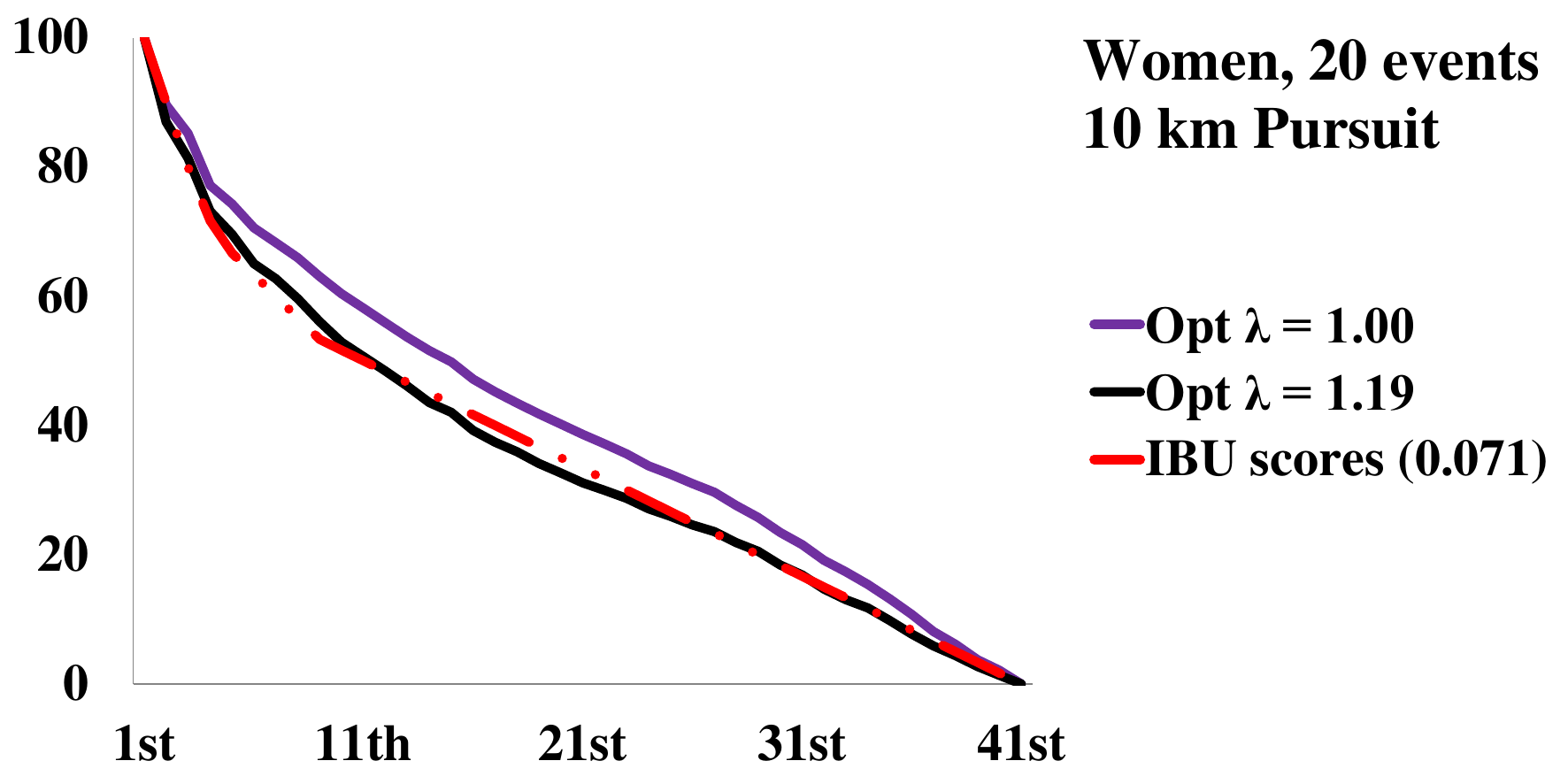}
\includegraphics[width=8cm]{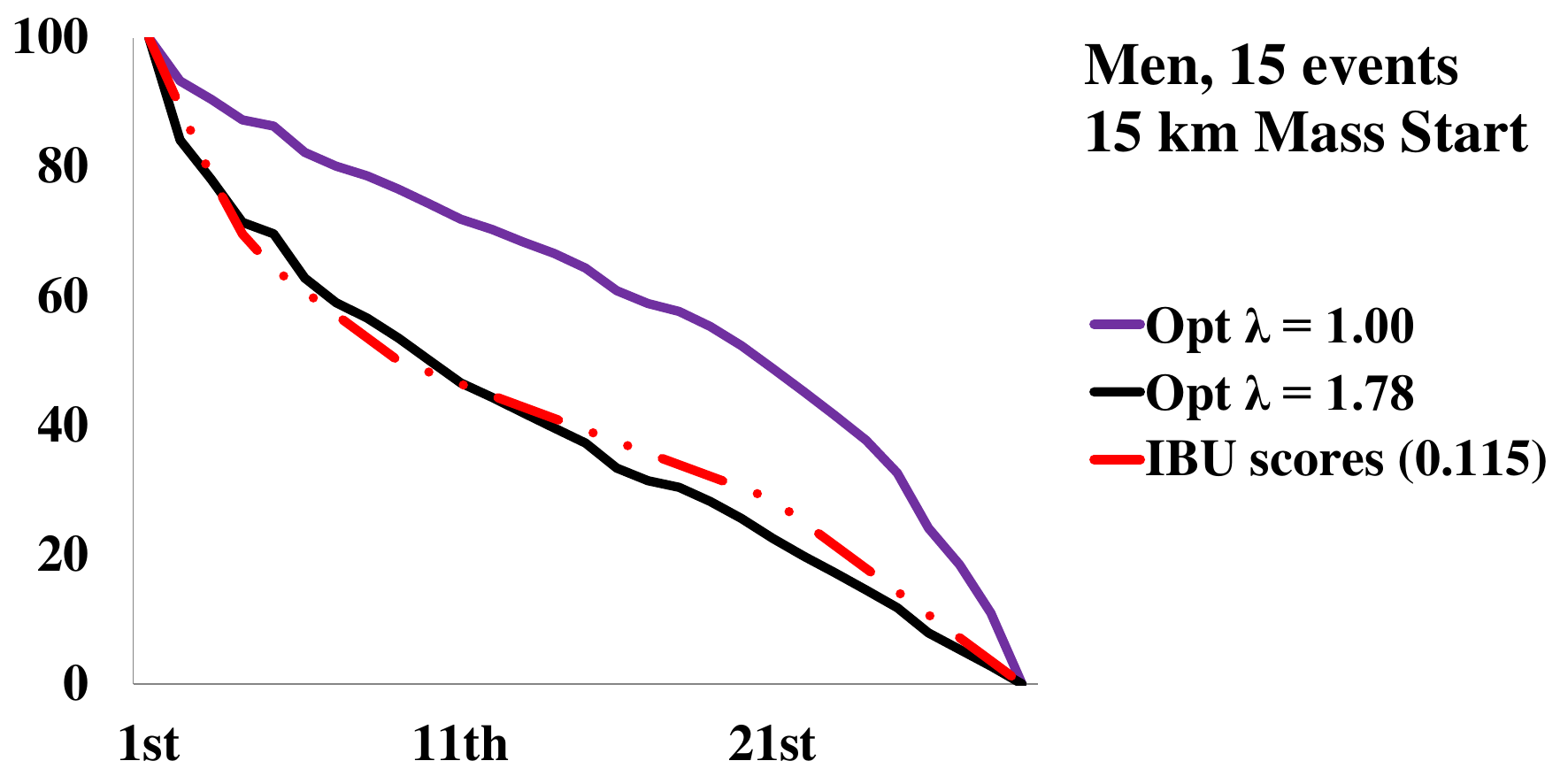}
\includegraphics[width=8cm]{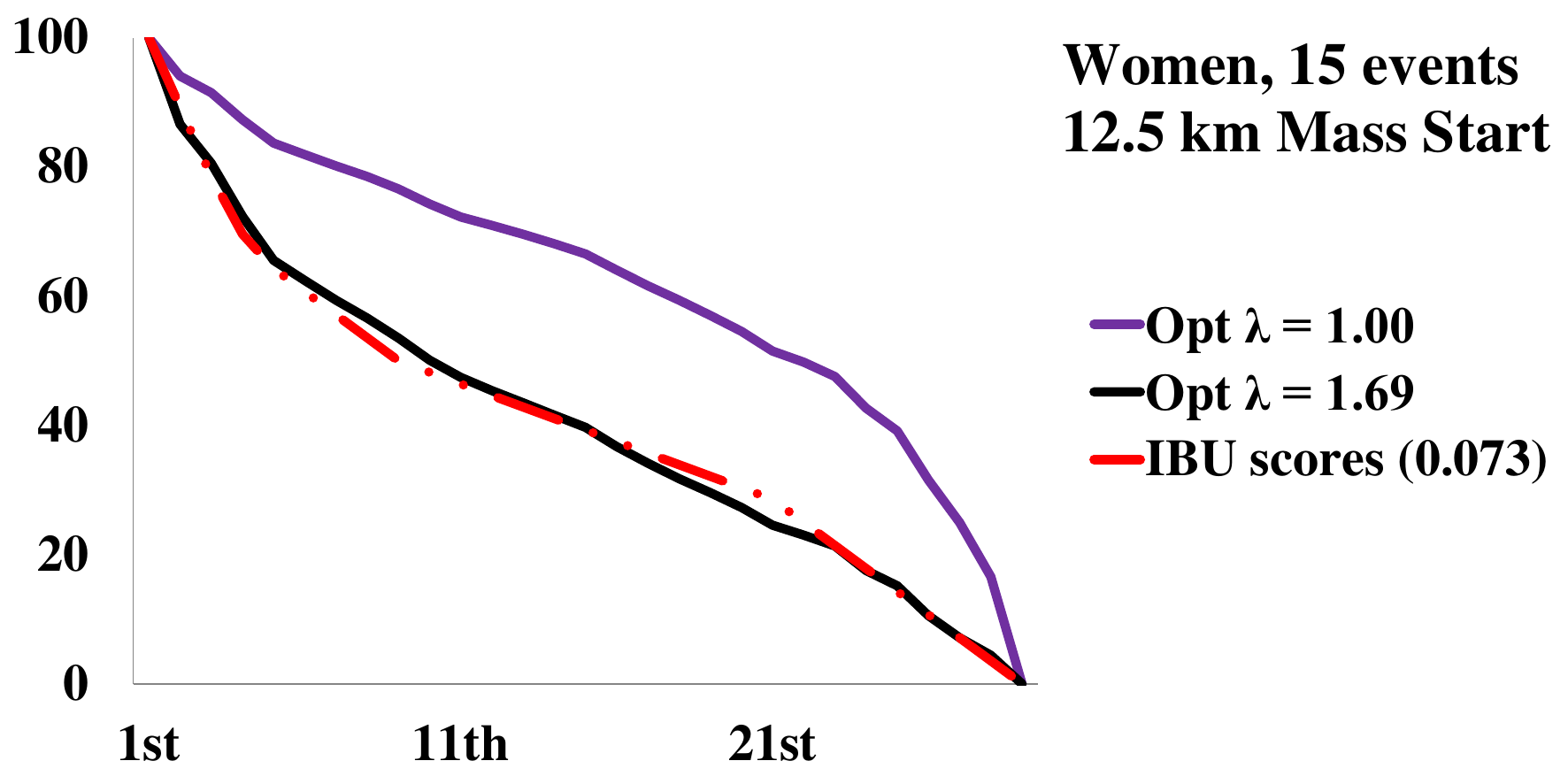}
\end{center}
\caption{Scores and prize money in IBU World Cup biathlon}
\label{OptimalGraphIBU}
\vspace{0.2cm}
\justify
\footnotesize{\emph{Notes}: Scores and prize money used in 2017/18, 2018/19 and 2019/20 seasons compared with the best approximations by geometric and optimal scores. Since there were only 7 Individual races in the three seasons (these figures can be found in \autoref{app:love_sport}), here we present results for 16 Individual races from 2014/15 to 2019/2020 seasons. The $x$-axis is the position, the $y$-axis the normalised score. Scores for first position were normalised to 100, for forty-first (or twenty-ninth in the mass start) position to~0. The optimal scores for $\lambda=1$ (purple solid, higher curve) and $\lambda>1$ (black solid, lower curve, performance measured in minutes) approximate the actual IBU scores used (red long dash two dots). Observe that the best approximations by p = 1.06 (blue dash, higher curve) and p~=~1.24 (brown dash, lower curve) illustrate that the actual IBU prize money awarded (light blue long dash dot) is close to be geometric, while the optimal scores are not. The approximation distance is in brackets and calculated by formula~(\ref{distance}), and denotes the distance to the first curve without brackets above the approximation in the legend.}
\end{figure}

\begin{figure}
\begin{center}
\includegraphics[width=8cm]{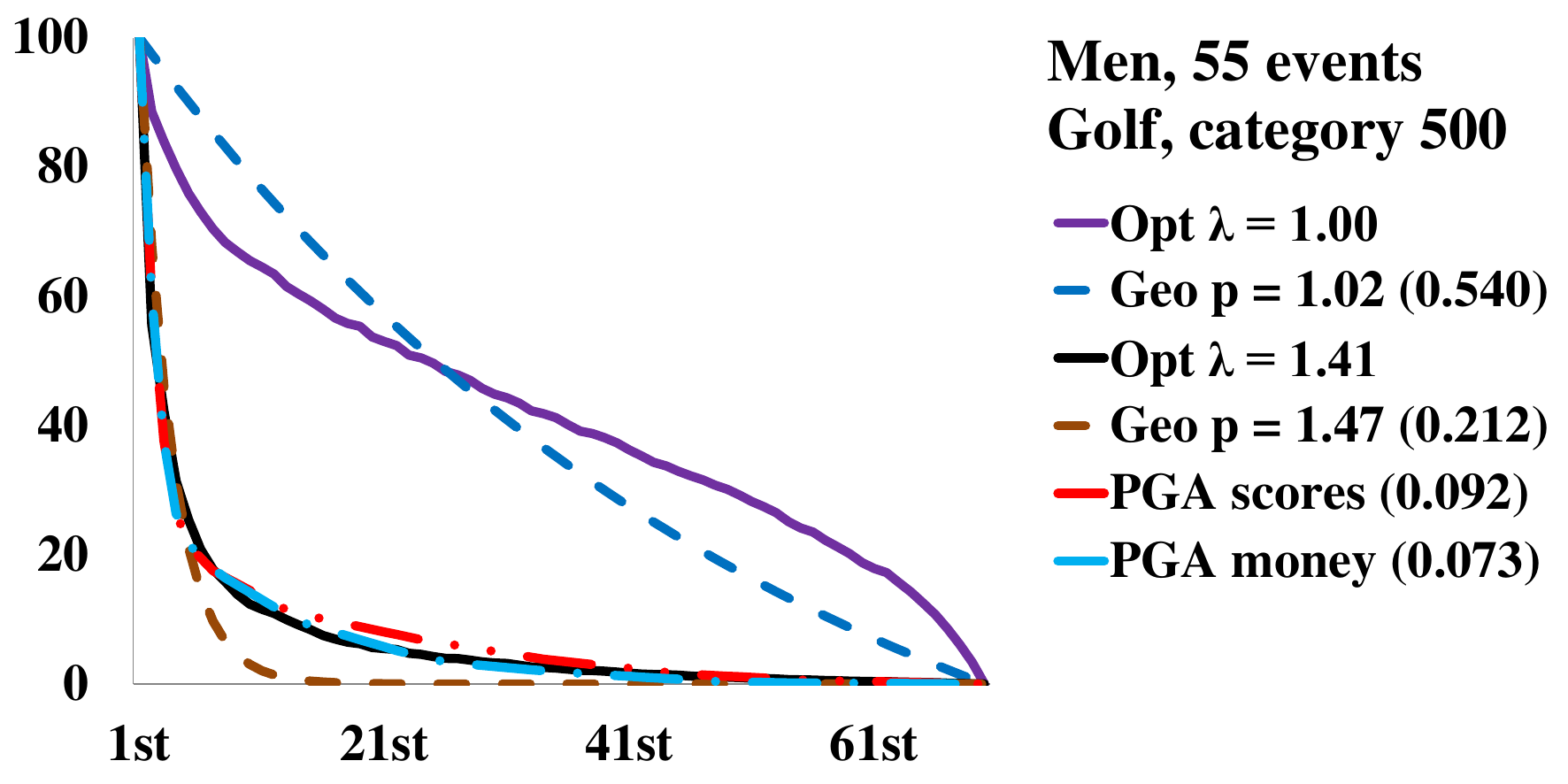}
\includegraphics[width=8cm]{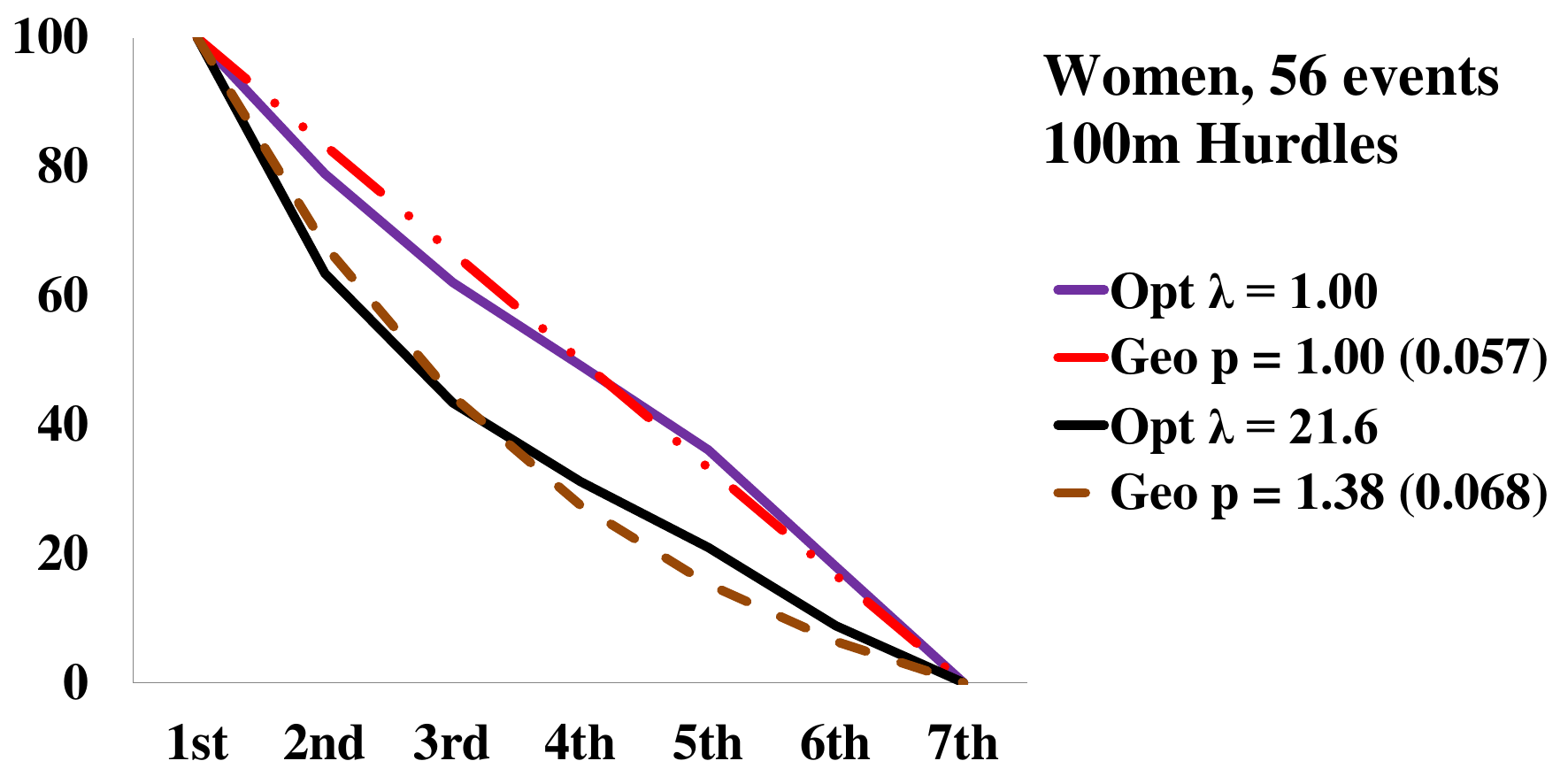}
\includegraphics[width=8cm]{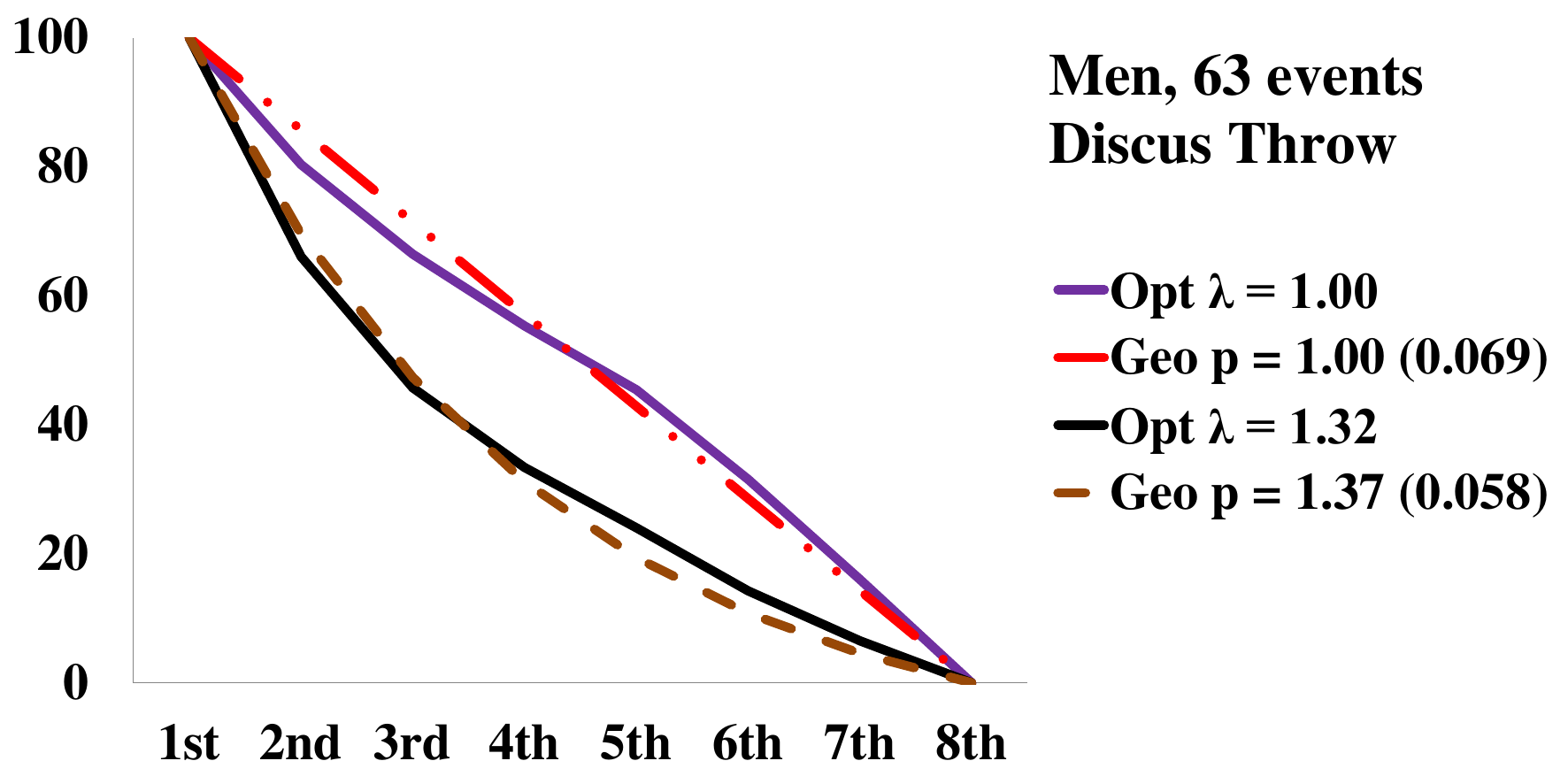}
\includegraphics[width=8cm]{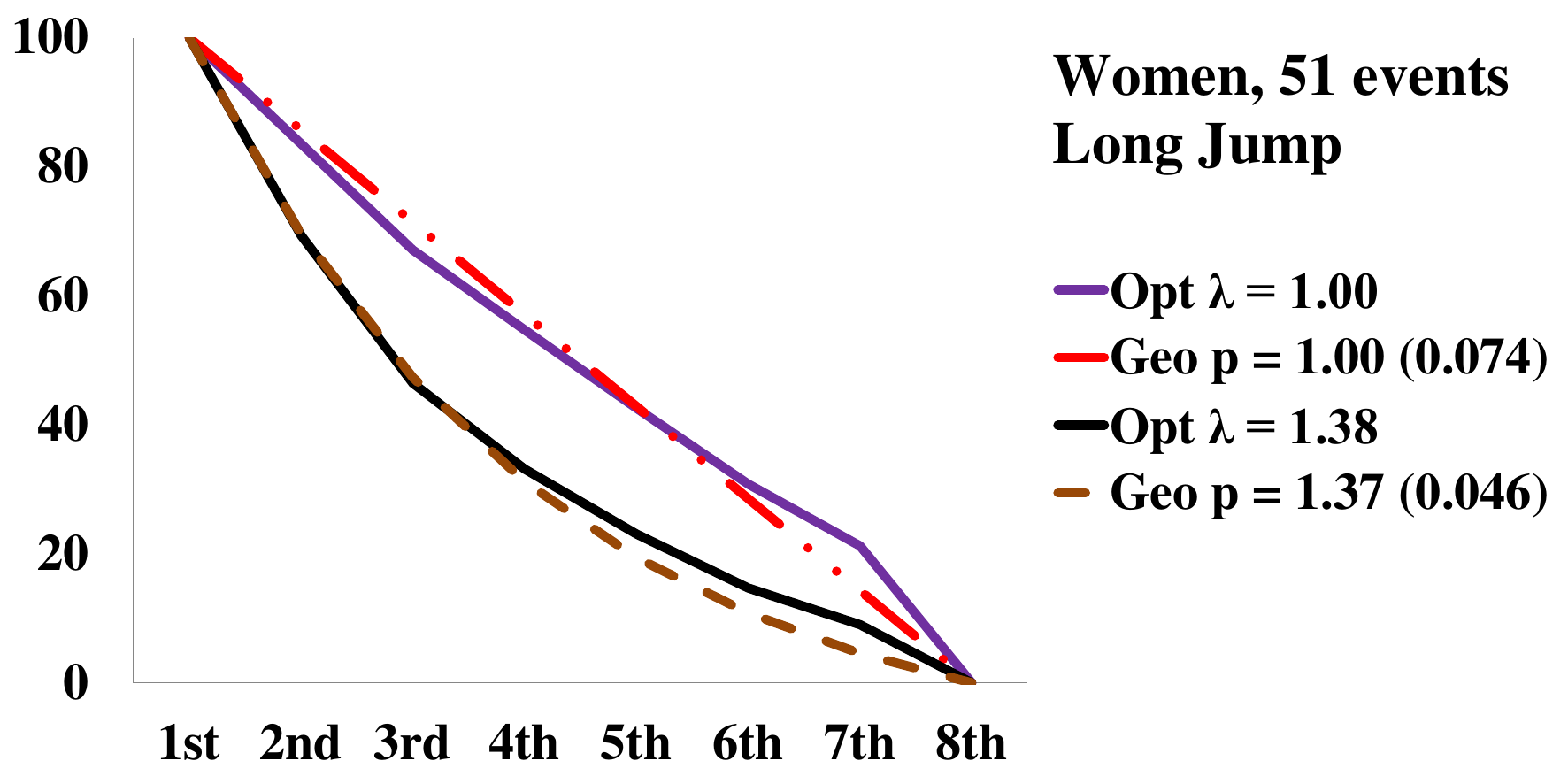}
\end{center}
\caption{Scores in PGA TOUR golf and IAAF Diamond League athletics}
\label{OptimalGraphPGA}
\vspace{0.2cm}
\justify
\footnotesize{\emph{PGA}: Scores and prize money used in 2017/18 and 2018/19 seasons compared with geometric and optimal scores. The $x$-axis is the position, the $y$-axis the normalised score. Scores for first position were normalised to 100, for seventieth position to~0. Observe that the optimal scores for $\lambda=1$ (purple solid, higher curve) illustrate the concave-convex nature of the performance distribution. The optimal scores for $\lambda=1.41$ (black solid, lower curve, performance measured in strokes) closely approximate both the actual PGA scores used (red long dash two dots) and prize money awarded (light blue long dash dot). The best approximations by p = 1.02 (blue dash, higher curve) and p = 1.47 (brown dash, lower curve) illustrate that the optimal scores are far from geometric. The approximation distance is in brackets and calculated by formula~(\ref{distance}), and denotes the distance to the first curve without brackets above the approximation in the legend.\\
\emph{IAAF}: The optimal scores for three athletic disciplines in 2010--2021 seasons approximated by geometric scores. The $x$-axis is the position, the $y$-axis the normalised score. Scores for first position were normalised to 100, for seventh (or eighth) position to~0. The eighth position is excluded to account for the discouragement effect in running (\citealp{Krumer21}). The effect is pronounced in our data, see \autoref{app:love_sport}. Observe that the actual Borda scores used since 2017 (geometric p~=~1, red long dash two dots) closely approximate the optimal scores for $\lambda=1$ (purple solid, higher curve). The curves for $\lambda>1$ (black solid, lower curve, performance measured in seconds for running, metres for throw, and decimetres for jump) illustrate how closely other geometric scores (brown dash) can approximate the optimal scores. The approximation distance is in brackets and calculated by formula~(\ref{distance}). Figures for all 24 analysed athletic disciplines can be found in \autoref{app:love_sport}.}
\end{figure}

We used the following distance measure to find the best approximations. Given a pair of scoring sequences, $s_1,\ldots,s_m$ and $t_1,\ldots,t_m$, we first normalise the scores so that $s_1=t_1=1$ and $s_m=t_m=0$. Then the distance is defined by:
\begin{equation}\label{distance}
    d(s,t)=\sqrt{\frac{1}{4(m-2)}\sum\limits_{j\neq z}(s_j-s_z-t_j+t_z)^2}.
\end{equation}
The factor $1/(4(m-2))$ normalises the distance between plurality and antiplurality to 1. Motivation for such a distance measure can be found in \autoref{app:love_sport}.

\subsection*{Actual scores are optimal in biathlon and golf}

In the sprint and individual categories of biathlon, the scores used are very closely approximated by the optimal scores for $\lambda=1$. In the pursuit category $\lambda=1$ is a passably close fit, but $\lambda=1.22$ and 1.19 for men and women respectively is much better. For the mass start, the scores for $\lambda=1$ are completely off the mark, but $\lambda=1.78$ and 1.69 fit the actual scores well. Geometric scores do a poor job of approximating both the optimal and actual scores, but the prize money is approximately geometric (see Women's Sprint in \autoref{OptimalGraphIBU}).

In golf (\autoref{OptimalGraphPGA}) both the actual scores and prize money are closely approximated by $\lambda=1.41$ (distance 0.092 and 0.073 respectively), while the closest geometric approximation (p = 1.56, distance 0.251 and 0.236) does not come close. The good fit of optimal scores in biathlon and golf is perplexing -- the focal case of $\lambda=1$ is not an issue, one can easily imagine that an organiser took a look at the average finishing times when deciding the scores. But it is at once hard to believe that an organiser decided to raise $1.41$ to the average numbers of strokes and sum the results across historical data, or that the similarity of the scores is a matter of chance. One may be tempted to suppose that the optimal scores are sufficiently flexible to approximate \emph{any} curve with the right choice of $\lambda$, but that is not the case -- if PGA used a geometric sequence with $p=1.02$, then the best approximation with an optimal scoring rule would be $\lambda=0.99$, with a distance of $0.538$. We discuss this phenomenon in the conclusion.

\clearpage

\subsection*{Optimal scores explain phenomena in golf and biathlon}

The resemblance of the scores and prize money in golf to optimal scores with $\lambda=1.4$ also shed light on empirical phenomena in the sport. A single ``race'' in golf (called a tournament) consists of four rounds. In a famous study \citet{Ehrenberg1990} find that a golfer who finishes the first three rounds trailing behind the other competitors is likely to perform poorly in the final round. The authors attribute this to the fact that the marginal monetary return on effort spent for a golfer who can expect to rank low is lower than for a golfer who can expect to rank high, which disincentivises those who are trailing from further effort. But why do marginal returns display this behaviour? The authors argue that this is due to the convexity of the prize structure -- ``the marginal prize received from finishing  second instead of third was 4.0 percent of the total tournament prize money, while the marginal prize received from finishing twenty-second instead of twenty-third was 0.1 percent of the total tournament prize money''. 

We can now see that there is more to the story. What is key here is not the convexity of the scores per se, but how the scores relate to the distribution of the athletes' cardinal performance. Intuitively, one can imagine that the convexity of the rewards is offset by the convexity of athlete performance -- while climbing from the 2nd to the 1st position will net a larger reward than climbing from the 50th to the 49th, climbing from the 50th to the 49th is a lot easier.

Observe that it is possible for optimal scores with parameter $\lambda=1$ to be convex (\autoref{OptimalGraphIBU}), but we argue that had PGA assigned prize money according to $\lambda=1$, we would not observe the effect of \citet{Ehrenberg1990}. Suppose athlete $a$ is performing poorly and knows their final cardinal quality in this race, $x_i^a$, will be low. The athlete must decide whether to accept $x_i^a$, or expend the extra bit of effort to finish with $x_i^a+\varepsilon$. By \autoref{thm:optimal}, at the end of all $n$ races $a$ can expect his total earnings to equal his overall quality -- the sum of $x_1^a,\dots,x_n^a$. If the athlete's performance in the $i$th race is $x_i^a+\varepsilon$ rather than $x_i^a$, this will translate to an expected $\varepsilon$ extra in prize money, regardless of the value of $x_i^a$. On the other hand, with $\lambda=1.4$, the athlete can expect to earn the sum of $1.4^{x^a_1},\ldots,1.4^{x^a_n}$. By putting in the extra effort he can substitute $1.4^{x_i^a+\varepsilon}$ for $1.4^{x_i^a}$, but the extra money here will very much depend on the value of $x_i^a$, and we could expect an athlete that is lagging to not expend the extra effort. 

 Optimal scores also explain the result of \citet{Shmanske07} and \citet{Hood2008}, who observed that golfers with a high variance in the number of strokes earn more than more consistent golfers, even if the mean performance of the consistent golfers is slightly better. The authors attribute this effect to the convexity of the prize money used, but again we claim that such a phenomenon would be absent with $\lambda=1$, regardless of how convex the prize money distribution may be. As a consequence of \autoref{thm:optimal}, we would expect a golfer's earnings to be determined \emph{solely} by their average performance. Variance does not enter into the equation. This reaffirms our interpretation of the choice of $\lambda$ being linked to the organiser's attitude towards peak performance -- by using $\lambda=1.4$ the organisers of the PGA TOUR are willing to reward inconsistent golfers for the possibility of exceptional performance, even if their mean performance suffers.

In the case of biathlon, we observed that the actual scores used resemble optimal scores with $\lambda=1$ in the case of the sprint, and $\lambda>1$ in mass start and pursuit (\autoref{OptimalGraphIBU}). A recent paper of \citet{Gurtler22} studied risk-taking in tournaments and in their interpretation athletes used a riskier strategy in mass start and pursuit than in sprint. This agrees with our interpretation of $\lambda=1$ as giving athletes an incentive to play a consistent strategy, and $\lambda>1$ to motivate them to aim for peak performance, even if it involves the risk of finishing poorly.

\subsection*{Geometric scores are optimal in athletics}

In \autoref{thm:OptimalUniform} we have shown that with a uniform distribution of athlete performance, the optimal scoring rule is approximately geometric. We can see this phenomenon in the data of the IAAF Diamond League (\autoref{OptimalGraphPGA}). The actual scoring rule used since 2017 in these events is Borda, which would be the optimal scoring rule for $\lambda=1$ if the distribution were uniform. In the 24 athletic disciplines studied (see \autoref{app:love_sport}), only in 5 was the distance between Borda and $\lambda=1$ greater than 0.1, and the largest distance was 0.146 (women's high jump, \autoref{OptimalGraphIAAF3} in the appendix). The distribution of athlete performance is remarkably uniform in most events. Presumably, this is because this is a well-understood sport where athletes perform near the limits of human performance -- the athletes are sampled from a very narrow slice of the distribution of possible human performance, and we would expect such a slice to be approximately uniform. To further demonstrate the convergence guaranteed by \autoref{thm:OptimalUniform}, we plot hypothetical curves for scores with a higher value of $\lambda$, and the best geometric approximation, in \autoref{OptimalGraphPGA}. Note that even though the theorem states that the rules converge as the number of athletes tends to infinity, the fit is very good even with $m=7,8$.


\section{Conclusion}

Scoring rules are omnipresent. They are used in group decisions \citep{DyerMiles76}, group recommender systems \citep{Masthoff15}, meta-search engines, multi-criteria selection, word association queries \citep{Dwork01}, sports competitions \citep{Stefani11,Csato21book}, awarding prizes \citep{Benoit92,Stein94,Corvalan18}, arbitrator selections \citep{BloomCavanagh86}, and even for aggregating results from gene expression microarray studies \citep{Lin10}. Many countries use scoring rules in political elections: most of them use plurality, while Slovenia, Nauru and Kiribati use non-plurality scores  \citep{Reilly02,FraenkelGrofman14}.

It is likely that scoring rules are popular because of their simplicity, yet choosing a scoring rule for a specific application is by no means simple. An axiomatic approach simplifies this search by narrowing the scope to the set of rules satisfying a certain combination of properties. In this paper, we establish that:\\
\begin{itemize}
    \item Two natural independence axioms reduce the search to a single parameter family -- the choice of $p$ determines the scores we need (Theorem~\ref{thm:geometricrules}). To our knowledge, this is the first characterisation of a non-trivial family of scoring rules, rather than a specific rule, in the literature.\footnote{\citet{ChebotarevShamis98} provide an extensive overview of previous axiomatisations of scoring rules. While there have been many relaxations of independence in the literature, e.g., independence of Pareto-dominated alternatives or reduction principle by \citet[p.~288]{LuceRaiffa57} and \citet[p.~148]{Fishburn71,Fishburn73book}, and independence of clones by \citet{Tideman87}, for scoring rules they rarely led to positive results. \citet{Richelson78} proved that plurality is the only scoring rule that satisfies independence of Pareto-dominated alternatives, \citet{Ching96} generalised it by dropping one of the redundant axioms, and for a more general framework \citet{Morkeliunas82} proved a theorem which implies the results of Richelson and Ching as corollaries. \citet[theorems~1, 2]{Ozturk20} showed that plurality is the only scoring rule that satisfies weaker versions of independence of clones. Borda's rule is the only scoring rule that satisfies a modified independence defined by \citet{Maskin20}.} This family is sufficiently broad: not only does it include a continuum of convex and concave scores, but also three of the most popular scoring rules: the Borda count, generalised plurality (medal count) and generalised antiplurality (threshold rule).\\
    \item We demonstrate how the choice of the parameter $p$ is constrained by the presence of other desirable axioms. The majority winner criterion pins down generalised plurality (\autoref{thm:generalisedplurality}), top-winner reversal bias -- Borda (\autoref{thm:borda}), and majority loser -- generalised antiplurality (\autoref{thm:concaveGSRTwo}). In \autoref{app:complete}, we provide a full characterisation of these rules among all ordinal ranking procedures.\\
    \item Finally, we consider the choice of $p$ in the context of a sporting competition on historical data. We introduce a model of the organiser's goal, and derive the optimal scoring rules for biathlon (\autoref{OptimalGraphIBU}), golf, and athletics (\autoref{OptimalGraphPGA}). These scores closely resemble the actual scores used by the organisers, and provide an explanation for the phenomena observed by \citet{Ehrenberg1990}, \citet{Shmanske07}, and \citet{Hood2008}. We see that geometric scoring rules approximate the optimal scores well in events where the distribution of athlete's performances is roughly uniform (\autoref{thm:OptimalUniform}).
\end{itemize}

\medskip

Our independence axioms have not received much attention in the literature, perhaps because of how weak they are individually. However, the points incenter \citep{Sitarz13}, best-worst \citep{GarciaLapresta10}, and antiplurality scoring rules violate independence of unanimous losers by \autoref{prop:affinelyequivalent}. In \autoref{app:violation}, we show that Nanson's rule \citep[p.~21]{Nanson1882,FelsenthalNurmi18book}, the proportional veto core \citep{Moulin81}, and even certain generalised scoring rules used in practice, such as average without misery \citep{Masthoff15} and veto-rank \citep{BloomCavanagh86}, also violate independence of unanimous losers.\footnote{Recently, \cite{BarberaCoelho22} showed that the shortlisting procedure \citep{deClippelEliazKnight14} and the voting by alternating offers and vetoes scheme \citep{Anbarci93} violate independence of unanimous losers.} It would be interesting to see where else these axioms can provide some insight. In the weighted version of approval-based multiwinner voting \citep{Thiele1895,Janson18}, if we apply independence of always-approved alternatives (analogous to our independence of unanimous winners), we will obtain geometric sequences of scores which include the top-$k$ rule and a refinement of the Chamberlin–Courant rule as particular cases. Similarly, in the weighted version of approval-based single-winner voting \citep{AlcaldeUnzuVorsatz09}, this axiom will lead to geometric sequences of scores which include approval voting and a refinement of plurality as particular cases. Recently, \citet[theorem~9]{BrandlPeters22} characterised approval voting by independence of never-approved alternatives (analogous to our independence of unanimous losers).

\subsection*{Future directions}

The most striking empirical finding in this paper is the close agreement of optimal scoring rules and the scores used in practice. The case of $\lambda=1$ could be explained away -- it is not a stretch to imagine that an organiser decided to look at average times when deciding on a scoring vector. It is less credible to suppose that an organiser decided to raise $\lambda$ to the power of the result, and take the sum of the outcomes, especially if $\lambda$ takes on seemingly random values like 1.22 and 1.78. To make things worse, consider that the IBU uses only two scoring vectors for eight categories (\autoref{OptimalGraphIBU}); the vectors are optimal in each case, but for different values of $\lambda$ (indeed, different values for men and women). We suspect there is some empirical process going on that causes athlete's results to converge to the scoring vector over time. This is a possibility that should be explored.

The problem of rank aggregation arises in many contexts, but historically the field was largely viewed through the lens of political elections. As a consequence the assumption that we should treat candidates and voters equally -- neutrality and anonymity -- generally goes unquestioned. In a sporting context both are much more demanding suppositions \citep{Stefani11,Csato20ijgs,Csato22unfairness}. Anonymity demands that we weigh every race equally, while there are compelling reasons why we might want to place greater weight on some events than others -- perhaps to recognise their difficulty, or to modulate viewer interest over the course of the championship. Relaxing anonymity raises the question of how we can axiomatise weighted counterparts of geometric scoring rules, and whether our independence axioms can provide additional insight on non-anonymous rules. Neutrality may be perfectly natural when it comes to ranking athletes, but the assumption of symmetric a priori performance of athletes in \autoref{thm:optimal} is a strong one. Clearly some athletes can be expected to perform better than others \citep{Broadie12}, and even the mere presence of an exceptional athlete can be enough to change the performance of the competitors \citep{Brown11}. It would be interesting to see what the optimal ranking rule would be in a more general setting.

Another peculiar feature of many sporting events is that both points and prize money are awarded after each event, and the principles governing the two could be very different. We have seen that, while in golf the scores and prize money are almost identical (\autoref{OptimalGraphPGA}), in biathlon the two are completely different (\autoref{OptimalGraphIBU}). This can lead to the phenomenon where the athlete that earns the most money is not, in fact, the champion.\footnote{In the PGA TOUR, the winner of the 2018 FedEx cup was Justin Rose, while the money leader was Justin Thomas, with 8.694 million to Rose's 8.130 (\url{https://web.archive.org/web/20200318190006/https://www.pgatour.com/news/2018/10/01/2018-2019-pga-tour-full-membership-fantasy-rankings-1-50.html},  \url{https://web.archive.org/web/20180924033715/https://www.pgatour.com/daily-wrapup/2018/09/23/tiger-woods-wins-2018-tour-championship-justin-rose-wins-fedexcup-playoffs.html}).} It would be interesting to see whether such incidents could be avoided, as well as what are other desirable features of prize structures. It does not appear that the axiomatic approach has been applied to prize structures, barring the recent works of \citet{DietzenbacherKondratev22} and \citet{PetroczyCsato21revenue}.

This paper was motivated by sports, where extreme results are valued, so we had little to say about concave geometric rules $(0<p\leq 1)$. An area where they may be of interest is group recommendation systems, where one of the guiding principles is balance between achieving high average utility in the group, and minimising the misery of the least happy member. It is easy to see that Borda ($p=1$) maximises rank-average utility, while generalised antiplurality ($p\ria~0$) minimises the misery of the least happy member. It is natural to suppose that rules with $0<p<1$ will find a middle ground between these two extremes, and it would be interesting to compare them to other procedures for achieving balance, such as average without misery \citep{Masthoff15}, the Nash product \citep{DyerMiles76,Airiau19}, or veto-based approaches \citep{IanovskiKondratev21}.

\bigskip

\printendnotes


\appendix

\section{Proofs}\label{app:proof}

\affinelyequivalent*
\begin{proof}
The result follows from theorem~1 of \citet{Fishburn81DAM} -- that if two scoring vectors are not affinely equivalent, then there exists a profile at which they lead to different rankings. We provide an independent, constructive proof.

The ``if'' part is straightforward. Let us prove the ``only if'' part.

\textbf{Step one:} That the scores are strictly decreasing.

For a fixed $k<m\leq M$, consider a profile $P_k$ consisting of just one race, $a_1\succ\ldots\succ a_k$. By independence of unanimous losers, $a_k$ must come last, so $s_k^k<s_j^k$ for all $j<k$. Moreover, the ranking of $a_1,\dots,a_{k-1}$ must be the same as the ranking in the profile $P_{k-1}$ with the single race $a_1\succ\ldots\succ a_{k-1}$. By independence of unanimous losers, $a_{k-1}$ must come last in $P_{k-1}$, so $a_{k-1}$ must come second-to-last in $P_k$, and thus $s_{k-1}^k<s_j^k$ for all $j<k-1$. By repeating this argument we establish that $s^k_1>\ldots>s^k_k$ for all $k\leq m$.

\textbf{Step two:} That the scores for $k$ athletes are affinely equivalent to the first $k$ scores for $m$ athletes.

For a fixed $k<m\leq M$, consider $s^m_1,\dots,s^m_k$ and $s^k_1,\dots,s^k_k$. Let $\alpha=(s_1^m-s_2^m)/(s_1^k-s_2^k)$ and $\beta=(s_1^k s_2^m-s_2^k s_1^m)/(s_1^k-s_2^k)$. Observe that the scores $\alpha s_1^k+\beta,\dots,\alpha s_k^k+\beta$ are affinely equivalent to $s_1^k,\dots,s_k^k$, and moreover:
\begin{align*}
    \alpha s_1^k+\beta =\frac{s_1^m-s_2^m}{s_1^k-s_2^k}s_1^k +\frac{s_1^k s_2^m-s_2^k s_1^m}{s_1^k-s_2^k} =\frac{s_1^ms_1^k-s_1^ms_2^k}{s_1^k-s_2^k} =s_1^m,\\
    \alpha s_2^k+\beta =\frac{s_1^m-s_2^m}{s_1^k-s_2^k}s_2^k +\frac{s_1^k s_2^m-s_2^k s_1^m}{s_1^k-s_2^k} =\frac{s_2^ms_1^k-s_2^ms_2^k}{s_1^k-s_2^k} =s_2^m.
\end{align*}

For convenience, we write $t_j = \alpha s_j^k+\beta$ for $j=3,\ldots,k$. It remains to show that $t_j=s_j^m$ for $j=3,\ldots,k$ to prove that the scores are affinely equivalent. 

Suppose for contradiction that $t_j > s_j^m$ for some $j$ (the case where $t_j < s_j^m$ is analogous).

Choose integers $n_2>0$ and $n_1$ such that:
\begin{equation}
\frac{s_2^m-t_j}{s_1^m-s_2^m} < \frac{n_1}{n_2} < \frac{s_2^m-s_j^m}{s_1^m-s_2^m}.
\label{eqn:sandwich}
\end{equation}
Let $n = |n_1|+n_2$. Construct a profile with $3n$ races and $m$ athletes as follows.

If $n_1\leq 0$, then in $|n_1|$ races $a$ has position~$1$ and $b$ has position~$2$. In $n_2$ races $a$ has position~$1$ and $b$ has position~$j$. In $n$ races $a$ has position~$2$ and $b$ has position~$j$. In $-n_1$ races $a$ has position~$j$ and $b$ has position~$2$. In $n+n_1$ races $a$ has position~$j$ and $b$ has position~$1$.

If $n_1>0$, then in $n$ races $a$ has position~$1$ and $b$ has position~$j$. In $n_2$ races $a$ has position~$2$ and $b$ has position~$j$. In $n_1$ races $a$ has position~$2$ and $b$ has position~$1$. In $n$ races $a$ has position~$j$ and $b$ has position~$1$. 


In both cases there are $m-k$ athletes who come last in the order $a_{k+1}\succ\ldots\succ a_m$ in every race, and the other athletes are ranked arbitrarily.

Observe than in a profile so constructed athlete $a$ finishes $n$ times in positions $1,2,j$. Athlete $b$ finishes first $n+n_1$ times, second $|n_1|-n_1$ times, and $j$-th $n+n_2$ times. The total score of $a$ is thus $n s_1^m +n s_2^m + n s_j^m$ and the total score of $b$ is $(n+n_1)s_1^m + (|n_1|-n_1)s_2^m + (n+n_2)s_j^m$. The difference between the total scores of $a$ and $b$ is $(n_2 s_2^m - n_2 s_j^m) - (n_1 s_1^m - n_1 s_2^m)$, which is positive by formula~(\ref{eqn:sandwich}). Thus, $a$ beats~$b$.

Now suppose we drop $a_{k+1},\ldots,a_m$ from the races. In the new race, $a$ attains $S_a=n s_1^k + n s_2^k + n s_j^k$ points, and $b$ attains $S_b=(n+n_1)s_1^k + (|n_1|-n_1)s_2^k + (n+n_2)s_j^k$. Clearly, $S_a-S_b > 0$ if and only if $\alpha S_a+3n\beta-(\alpha S_b +3n\beta) > 0$, so we multiply both totals by $\alpha$ and add $3n\beta$. We obtain $n s_1^m + n s_2^m + n t_j$ for $a$, and $(n+n_1)s_1^m + (|n_1|-n_1)s_2^m + (n+n_2)t_j$ for $b$. This gives us a difference of $(n_2 s_2^m - n_2 t_j) - (n_1 s_1^m - n_1 s_2^m)$, which is negative by (\ref{eqn:sandwich}), meaning that dropping the unanimous losers made $b$ overtake~$a$.

The argument for independence of unanimous winners is analogous.
\end{proof}

\bigskip

\alzamoraparadox*

\begin{proof}
We proceed by cases on the value of $p$. 

\textbf{Case one: $p<1$.}

Consider a profile of $n=m-1$ races, $m\geq 3$, where athlete $a$ comes second in every race, and has a total score of $(m-1)(1-p^{m-2})$. Every other athlete comes first, third, fourth, and so on, exactly once. This gives them a total score of $1-p^{m-1}+1-p^{m-3}+\ldots+1-1=m-1-p^{m-1}-(p^{m-2}-1)/(p-1)$. We want to show that the difference between the total scores of $a$ and every other athlete is positive, which is true if and only if:
\begin{align*}
 (m-1)(1-p^{m-2})&>   m-1-p^{m-1}-\frac{p^{m-2}-1}{p-1},\\
 -mp^{m-2}+p^{m-2}+p^{m-1}+\frac{p^{m-2}-1}{p-1}&>0,\\
 mp^{m-1}-mp^{m-2} - p^{m} +1&>0.
\end{align*}

If we take the derivative with respect to $p$, we get $m(m-1)p^{m-2}-m(m-2)p^{m-3}-mp^{m-1}$ that has the same sign as $(m-1)p-(m-2)-p^2$. This is a parabola with vertex at $p=(m-1)/2$ and roots at $1, m-2$. Thus for $0\leq p<1$, this is a monotonely decreasing function, reaching a minimum as $p\ria~1$. At $p=1$, $m\cdot1^{m-1}-m\cdot1^{m-2} - 1^{m} +1=0$, so for the relevant values of $p$ the difference is positive.

\textbf{Case two: $p=1$.}

Consider the profile of case one. Athlete $a$ has a total score of $(m-1)(m-2)$ while the other athletes $(m-1)+(m-3)+\ldots+1= m-1 + (m-3)(m-2)/2$. We want to show that the difference between the total scores is positive:
\begin{align*}
 (m-1)(m-2)&>   m-1 + \frac{(m-3)(m-2)}{2},\\
m^2 -3m +2 &> m-1 + \frac{m^2-5m +6}{2},\\
m^2-3m&>0.
\end{align*}
Which is true for $m\geq 4$.

\textbf{Case three: $p>1$.}

Consider the profile of case one, but with $m>p^2/(p-1)$. Athlete $a$ has a total score of $(m-1)p^{m-2}$, the other athletes $p^{m-1}+p^{m-3}+\ldots+1=p^{m-1}+(p^{m-2}-1)/(p-1)$. We want to show that the difference is positive:
\begin{align*}
 (m-1)p^{m-2}&>   p^{m-1}+\frac{p^{m-2}-1}{p-1},\\
 mp^{m-2}-p^{m-2}-p^{m-1}-\frac{p^{m-2}-1}{p-1}&>   0,\\
 mp^{m-1}-mp^{m-2} -p^{m} +1&>   0,\\
 mp^{m-2}(p-1) -p^{m} +1&>   0.
\end{align*}
Since we assumed that $m>p^2/(p-1)$:
\begin{align*}
    mp^{m-2}(p-1) -p^{m} +1&>   p^{m-2}p^2 -p^{m} +1>0.
\end{align*}
\end{proof}

\bigskip

\generalisedplurality*

\begin{proof}

That generalised plurality satisfies the majority criterion and independence of unanimous winners is straightforward. We shall prove the other direction.

Suppose a generalised scoring rule satisfies independence of unanimous winners and the majority criterion. Fix any $k\leq M$. We proceed by induction on rounds~$r$.

\textbf{Inductive hypothesis:} Suppose for all $l<r$, in the $l$th round the scores are 1 for the first $l$ positions and 0 elsewhere. We will show that in the $r$th round the scores $(s_1^{k,r},\ldots,s_k^{k,r})$ must rank the candidates that made it to the $r$th round in exactly the same order as $(\overbrace{1,\ldots,1}^r, \overbrace{0, \ldots, 0}^{k-r})$. 

For the base case we choose $r=0$, which is satisfied trivially.

In the $r$th round we are concerned with those candidates that were tied in the first $r-1$ rounds, and thus have exactly the same number of first places, second places, through to $(r-1)$th places. If $r=k$ we have a perfect tie, and there is nothing more we can do with scoring rules. Thus, we can assume that $1\leq r\leq k-1$.

Since the relevant candidates have the same number of $l$th places for all $l<r$, we can without loss of generality assume that $s_1^{k,r}=\ldots=s_r^{k,r}$, since the first $r-1$ scores will not change the relative total scores in any way. For convenience, we write $s_j = s_j^{k,r}$ for $j=1,\ldots,k$.

\textbf{Step one:} We shall first show that $s_r>s_j$, for all $j>r$.

Consider a profile consisting of one race, $a_1 \succ \ldots \succ a_k$. By the inductive hypothesis, it is clear that the candidates that made it to the $r$th round are $\{a_r,\ldots,a_k\}$. By repeatedly applying independence of unanimous winners, it follows that the aggregate ranking must be $a_1 \succ \ldots \succ a_k$, so $a_r$ must be ranked first among the remaining candidates. It follows that $s_r\geq s_j$ for all $j>r$.

A round with all scores equal is redundant. Hence, without loss of generality, assume $s_r>s_z$ for some $z>r$.

If $r=k-1$, the scores must be affinely equivalent to $(1,\ldots,1,0)$, and we are done. Assume then that $r<k-1$.

Suppose for contradiction $s_r=s_j$ for some $j>r$. Consider a profile consisting of three races. In all three races $a_{l}$, for $l<r$, is ranked in position $l$. In two races $a_r$ is ranked in position $r$ and $a_j$ in position $j$. In one race $a_r$ is ranked in position $z$ and $a_j$ in position $j$. Since neither $a_j$ nor $a_r$ have any $l$th positions for $l<r$, they have made it through to round $r$. The difference between the total scores of $a_j$ and $a_r$ is positive, $3 s_j - 2 s_r - s_z = s_r - s_z$. Thus, $a_j$ beats $a_r$. However, by applying independence of unanimous winners $r-1$ times we can delete $a_1$ through $a_{r-1}$ without changing the relative ranking of the remaining candidates, but at that point we run into a contradiction because $a_r$ is now the majority winner and should be ranked first. Hence, $s_r>s_j$ for all $j>r$.

\textbf{Step two:} Next, we will show that all other scores are equal.

Suppose for contradiction $s_j > s_z$ for some $j,z>r$. Choose an integer $n > (s_r - s_j) / (s_j - s_z) > 0$. Consider a profile consisting of $2 n + 1$ races. As before, the candidates $a_1\succ \ldots \succ a_{r-1}$ hold the first $r-1$ positions in all races, meaning $a_j$ and $a_r$ have made it to round $r$. In $n+1$ races $a_r$ has position $r$ and $a_j$ has position $j$. In $n$ races $a_r$ has position $z$ and $a_j$ has position~$r$. The difference between the total scores of $a_j$ and $a_r$ is positive, $n s_r + (n+1) s_j - (n+1) s_r - n s_z = s_j - s_r + n (s_j - s_z)>0$. Again we apply independence of unanimous winners and find a contradiction that $a_j$ beats the majority winner $a_r$. It must follow that, $s_j = s_z$ for all $j,z> r$ and the scores $(s_1,\ldots,s_k)$ are affinely equivalent to $(\overbrace{1,\ldots,1}^r, \overbrace{0, \ldots, 0}^{k-r})$.
\end{proof}

\bigskip

\Borda*

\begin{proof}

That Borda satisfies independence of unanimous losers follows from \Cref{thm:geometricrules}. Let us show that the rule satisfies top-winner reversal bias. For a race, if an athlete~$a$ gets $s_j^k=k-j$ points, then for the reversed result of the race the athlete gets $j-1=k-1-(k-j)$ points. Hence, for a profile with $n$ races, if $a$ gets $S_a$ total points, then for the reversed profile the athlete gets $(k-1)n-S_a$ total points. Thus, for a profile, if an athlete is the unique winner and has a higher total score than every other athlete, then for the reversed profile this athlete has a lower total score than every other athlete. We shall show that it is the only scoring rule which has these properties.

Suppose a scoring rule satisfies independence of unanimous losers and top-winner reversal bias. We proceed by induction on the number of athletes~$k$. 

\textbf{Inductive hypothesis:} Suppose that for $k-1$ athletes the scores are $(k-2,\ldots,1,0)$. We will show that for $k$ athletes the scores must be affinely equivalent to $(k-1,\ldots,1,0)$.

In the base case $k=2$, and the only scoring rule which satisfies independence of unanimous losers has scores affinely equivalent to $(1,0)$.

Consider $k\geq 3$. By \Cref{prop:affinelyequivalent}, the scores for $k$ athletes must be affinely equivalent to $(k-1,\ldots, 2, 1, s_k)$, with $s_k<1$.

If $s_k=0$, we are done. We will consider the two cases $s_k>0$ and $s_k<0$, and show that both lead to contradiction.

{\textbf{Case one:} $s_k>0$}

Consider a profile consisting of $2(k-1)$ races. In $k-1$ races $a_1$ finishes first and every other $a_j$ finishes once at every position except for the first position. In the other $k-1$ races the reverse is true -- $a_1$ always finishes last and every other $a_j$ finishes once at every position except for the last position. The total score of $a_1$ is thus $(k-1)^2 + (k-1)s_k$. The total score of every other $a_j$ is $k-2+\ldots+1+s_k+k-1+\ldots +1 = (k-1)^2 + s_k$ which is less than the total score of $a_1$. For the reversed profile the total scores are the same. Hence, $a_1$ wins in both profiles which contradicts top-winner reversal bias.

{\textbf{Case two:} $s_k<0$}

We consider subcases based on whether $k$ is odd or even.

\textit{Subcase one: $k$ is odd.}

Consider a profile consisting of $k-1$ races. Athlete $a_1$ always finishes in the middle position $(k+1)/2$, and every other $a_j$ finishes once at every position except this middle position. The total score of $a_1$ is thus $(k-1)^2/2$. The total score of every other $a_j$ is $k-1+\ldots+1+s_k - (k-1)/2 = (k-1)^2/2 + s_k$ which is less than the total score of $a_1$. For the reversed profile the total scores are the same, contradicting top-winner reversal bias.

\textit{Subcase two: $k$ is even.}

Consider a profile consisting of $2(k-1)$ races. In $k-1$ races $a_1$ finishes in position $k/2$ and every other $a_j$ finishes once at every position except the position $k/2$. The result of other $k-1$ races is the reverse -- $a_1$ always finishes in position $(k+2)/2$ and every other $a_j$ finishes once at every position except position $(k+2)/2$. The total score of $a_1$ is thus $(k-1)^2$. The total score of every other $a_j$ is $2(k-1 + \ldots+1+s_k) -k/2 - (k-2)/2 = (k-1)^2 + 2s_k$ which is less than the total score of $a_1$. For the reversed profile the total scores are the same, again contradicting top-winner reversal bias.

Both cases lead to contradiction. Hence, $s_k=0$ and we get the Borda scores for $k$ athletes.

The proof of the case of independence of unanimous winners is analogous.
\end{proof}

\bigskip

\concaveGSR*

To prove the theorem we will exploit the following auxiliary statement.

\begin{claim}\label{claim:concave}
For each $k\geq 3$ and $p>0$, the function below is strictly increasing in $p$:
\begin{equation*}
    (k-2)p^k-kp^{k-1}+kp-k+2.
\end{equation*}
\end{claim}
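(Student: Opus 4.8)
The plan is to prove the stronger statement that the derivative of the displayed polynomial is nonnegative on $(0,\infty)$ and vanishes only at $p=1$; since a continuously differentiable function with that property is strictly increasing, the claim follows. Write $g(p) = (k-2)p^k - kp^{k-1} + kp - k + 2$. Differentiating gives $g'(p) = k(k-2)p^{k-1} - k(k-1)p^{k-2} + k = k\bigl[(k-2)p^{k-1} - (k-1)p^{k-2} + 1\bigr]$, so it suffices to show that $h(p) := (k-2)p^{k-1} - (k-1)p^{k-2} + 1 \geq 0$ for all $p>0$, with equality only at $p=1$.

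First I would analyse $h$ through its own derivative: $h'(p) = (k-1)(k-2)p^{k-2} - (k-1)(k-2)p^{k-3} = (k-1)(k-2)p^{k-3}(p-1)$. For $k\geq 3$ the factor $(k-1)(k-2)$ is positive and $p^{k-3}>0$ for every $p>0$ (with $p^{k-3}$ read as the constant $1$ in the boundary case $k=3$), so $h'(p)$ has the same sign as $p-1$: $h$ is strictly decreasing on $(0,1)$ and strictly increasing on $(1,\infty)$. Hence $h$ attains its global minimum over $(0,\infty)$ at $p=1$, and substituting gives $h(1) = (k-2) - (k-1) + 1 = 0$. Therefore $h(p)\geq 0$ throughout $(0,\infty)$, with equality precisely at $p=1$.

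Combining the two steps, $g'(p) = k\,h(p) \geq 0$ on $(0,\infty)$, and $g'(p)=0$ only at $p=1$. Since $g'$ is continuous and strictly positive on $(0,\infty)\setminus\{1\}$, the function $g$ is strictly increasing on $(0,\infty)$, which is exactly the assertion of \Cref{claim:concave}. As a sanity check, for $k=3$ the polynomial equals $p^3-3p^2+3p-1=(p-1)^3$, which is manifestly strictly increasing and has derivative $3(p-1)^2$, consistent with the general computation.

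I do not expect a serious obstacle here; the only point requiring care is that $g'$ vanishes at $p=1$, so one cannot invoke ``positive derivative implies strictly increasing'' verbatim — one must instead observe that $g'$ is positive on $(0,\infty)\setminus\{1\}$, which already forces strict monotonicity. The boundary case $k=3$ (where the exponent $k-3$ becomes $0$) should be flagged explicitly so that the factor $p^{k-3}$ is understood as the constant $1$ and the sign argument for $h'$ still goes through.
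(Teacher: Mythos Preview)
Your proposal is correct and follows essentially the same approach as the paper: differentiate once, then analyse the sign of the resulting expression by differentiating again and observing that the second derivative factors as $(k-1)(k-2)p^{k-3}(p-1)$, so the first derivative has its unique minimum $0$ at $p=1$. You are somewhat more careful than the paper in explicitly noting that $g'(1)=0$ does not obstruct strict monotonicity and in flagging the $k=3$ boundary case, but the argument is the same.
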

\begin{proof}
Let us check the first order condition:
\begin{align*}
    (k-2)kp^{k-1}-k(k-1)p^{k-2}+k&>0,\\
    (k-2)p^{k-1}-(k-1)p^{k-2}+1&>0.
\end{align*}
To see that the above inequality is true for all $p\neq 1$, consider the second derivative:
\begin{align*}
    &(k-2)(k-1)p^{k-2}-(k-1)(k-2)p^{k-3}\\
    =&(k-2)(k-1)p^{k-3}(p-1),
\end{align*}
This is negative for $0<p<1$ and positive for $p>1$. Thus the first derivative is decreasing before hitting 0 at $p=1$, after which it increases -- meaning the first derivative is positive for all $p\neq 1$.
\end{proof}

\begin{proof}[Proof of \autoref{thm:concaveGSR}]
 By \autoref{thm:geometricrules}, we can restrict our attention to geometric scoring rules. We proceed by cases on the value of $p$. 

\textbf{Case one: $p<1$.}

The scores are $1-p^{k-1},\ldots,1-p,1-1$. The average total score is
\begin{equation*}
    \frac{n(k-1-p-p^2-\ldots-p^{k-1})}{k}=n\left(1-\frac{1-p^k}{k(1-p)}\right).
\end{equation*}
A majority loser gets zero points in more than half of the races and hence has a total score lower than $(1-p^{k-1})n/2$. We will show that this is lower than the average total score, and thus the majority loser cannot be ranked first. We wish to show:
\begin{align*}
    \frac{n(1-p^{k-1})}{2}&<n\left(1-\frac{1-p^k}{k(1-p)}\right),\\
    k(1-p)(1-p^{k-1})&<2k(1-p)-2(1-p^k),\\
    k-kp^{k-1}-kp+kp^k&<2k-2kp-2+2p^k,\\
    (k-2)p^k-kp^{k-1}+kp-k+2&<0.
\end{align*}
This is precisely the function from \autoref{claim:concave}. At $p=1$ the function is 0, and elsewhere it is increasing, thus it must be negative for $0<p<1$.

\textbf{Case two: $p=1$.}

The scores are $k-1,\ldots,1,0$. Any majority loser gets zero points in more than half the races and hence has a total score lower than $(k-1)n/2$, which equals to the average total score. This fact was the motivation behind Borda's proposal of his voting system.

\textbf{Case three: $p>1$.}

The scores are $p^{k-1},\ldots,p,1$. For each $k\geq 3$, we can construct a counterexample profile consisting of $2n_k(k-1)+1$ races. In $n_k(k-1)$ races athlete $a_k$ finishes first and every other $a_j$ finishes $n_k$ times at every position except for the first position. In the other $n_k(k-1)$ races the reverse is true -- $a_k$ always finishes last and every other $a_j$ finishes $n_k$ times at every position except for the last position. In the final race the ranking is $a_1,\ldots,a_k$. This will guarantee that $a_1$ is the highest scoring athlete out of $a_1,\dots,a_{k-1}$

We will show that for a large enough $n_k$, the total score of the majority loser $a_k$ is higher than the total score of $a_1$, the best of the other athletes. We wish to show:
\begin{align*}
    n_k(k-1)(p^{k-1}+1)+1&>n_k(p+1)(p^{k-2}+\ldots+1)+p^{k-1},\\
    n_k\left((k-1)(p^{k-1}+1)-\frac{(p+1)(p^{k-1}-1)}{p-1}\right)&>p^{k-1}-1,\\
    n_k\left((k-1)(p^{k-1}+1)(p-1)-(p+1)(p^{k-1}-1)\right)&>(p^{k-1}-1)(p-1),\\
    n_k((k-2)p^k-kp^{k-1}+kp-k+2)&>(p^{k-1}-1)(p-1).
\end{align*}
The coefficient of $n_k$ on the left is positive since this is the function from \autoref{claim:concave}, which is 0 at $p=1$ and increasing elsewhere, and $p>1$ in this case. Thus for a large enough $n_k$ the left hand side will dominate the right.
\end{proof}

\bigskip

\concaveGSRTwo*

\begin{proof}

It is clear that generalised antiplurality satisfies these properties. To see that it is the only generalised scoring rule to do so, suppose $f$ is a generalised scoring rule that satisfies independence of unanimous losers and always ranks the majority loser last. Let $g$ be a generalised scoring rule defined by $g(P)=\mathit{rev} (f(\mathit{rev}(P)))$, where $\mathit{rev}(P)$ is the profile formed by reversing every race result in $P$, and $\mathit{rev}(R)$ is the ranking formed by reversing $R$.

Observe that we can obtain the scoring vector for round $r$ in $g$ by reversing the vector for round $r$ in $f$, and multiplying the entries by -1. The majority winner in $P$ is the majority loser in $\mathit{rev}(P)$ and the unanimous winner in $P$ is the unanimous loser in $\mathit{rev}(P)$, so $g$ satisfies independence of unanimous winners and always ranks the majority winner first. By \autoref{thm:generalisedplurality}, $g$ must be generalised plurality. Thus the scoring vector in round $r$ of $g$ is $(\overbrace{1,\ldots,1}^r, \overbrace{0, \ldots, 0}^{k-r})$, and since $f(P)=\mathit{rev} (g(\mathit{rev}(P)))$, the scoring vector of $f$ is $(\overbrace{0,\ldots,0}^{k-r}, \overbrace{-1, \ldots, -1}^{r})$, which is affinely equivalent to the scores for generalised antiplurality.
\end{proof}

\bigskip

\OptimalUniform*

\begin{proof}
Let $m>4$ and $c=b-a$. We will show that for $\lambda>1$, $\lambda=1$, and $1>\lambda>0$ respectively:
\begin{equation*}
    1<\frac{\EE\left[\lambda^{x^{(j)}}\right]}{\lambda^{a+\frac{c(m+1-j)}{m+1}}}<1+\frac{(c\ln\lambda)^2}{8(m+2)}+\frac{(c\ln\lambda)^3}{18\sqrt{3}(m+2)(m+3)}+\frac{\lambda^c(c\ln\lambda)^4}{128(m+2)(m+4)},
\end{equation*}
\begin{equation*}
    \EE\left[x^{(j)}\right]=a+\frac{c(m+1-j)}{m+1},
\end{equation*}
\begin{equation*}
    1<\frac{\EE\left[-\lambda^{x^{(j)}}\right]}{-\lambda^{a+\frac{c(m+1-j)}{m+1}}}<1+\frac{(c\ln\lambda)^2}{8(m+2)}+\frac{(-c\ln\lambda)^3}{18\sqrt{3}(m+2)(m+3)}+\frac{\lambda^{-c}(c\ln\lambda)^4}{128(m+2)(m+4)}.
\end{equation*}

For the $[0,1]$-uniform distribution, from \citet[p.~36]{DavidNagaraja04} we get:
\begin{align*}
    \EE\left[x^{(j)}\right]=&\frac{m+1-j}{m+1}=p_j,\\
    \EE\left[(x^{(j)}-p_j)^2\right]=&\frac{p_j(1-p_j)}{m+2}\leq\frac{1}{4(m+2)},\\
    \left|\EE\left[(x^{(j)}-p_j)^3\right]\right|=&\left|\frac{2p_j(1-2p_j)(1-p_j)}{(m+2)(m+3)}\right|\leq\frac{\sqrt{3}}{9(m+2)(m+3)},\\
    \EE\left[(x^{(j)}-p_j)^4\right]=&\frac{3p_j^2(1-p_j)^2}{(m+2)^2}\\
    &+\frac{6p_j(1-p_j)}{(m+2)(m+3)(m+4)}\left[(1-2p_j)^2-\frac{(m+3)p_j(1-p_j)}{m+2}\right]\\
    =&\frac{3p_j(1-p_j)\left(p_j(1-p_j)(m-5)+2\right)}{(m+2)(m+3)(m+4)}.
\end{align*}
$\EE\left[(x^{(j)}-p_j)^4\right]$ is maximised at $p_j=1/2$. It follows that:
\begin{equation*}
    \EE\left[(x^{(j)}-p_j)^4\right]\leq\frac{3}{16(m+2)(m+4)}.
\end{equation*}

The $k$-th derivative of $\lambda^x$ is $\lambda^x(\ln\lambda)^k$. If we apply Taylor series expansion with the remainder in Lagrange's form we get:
\begin{multline*}
    \lambda^{x^{(j)}}=\lambda^{p_j}+(x^{(j)}-p_j)\lambda^{p_j}\ln\lambda+\frac{1}{2}(x^{(j)}-p_j)^2\lambda^{p_j}(\ln\lambda)^2+\frac{1}{6}(x^{(j)}-p_j)^3\lambda^{p_j}(\ln\lambda)^3\\+\frac{1}{24}(x^{(j)}-p_j)^4\lambda^{z}(\ln\lambda)^4,
\end{multline*}
for some $z\in[0,1]$. Hence, for $\lambda>1$:
\begin{multline*}
    \lambda^{x^{(j)}}\leq\lambda^{p_j}+(x^{(j)}-p_j)\lambda^{p_j}\ln\lambda+\frac{1}{2}(x^{(j)}-p_j)^2\lambda^{p_j}(\ln\lambda)^2+\frac{1}{6}(x^{(j)}-p_j)^3\lambda^{p_j}(\ln\lambda)^3\\+\frac{1}{24}(x^{(j)}-p_j)^4\lambda(\ln\lambda)^4,
\end{multline*}
\begin{align*}
    \EE\left[\lambda^{x^{(j)}}\right]&\leq\lambda^{p_j}+\frac{\lambda^{p_j}(\ln\lambda)^2}{8(m+2)}+\frac{\lambda^{p_j}(\ln\lambda)^3}{18\sqrt{3}(m+2)(m+3)}+\frac{\lambda(\ln\lambda)^4}{128(m+2)(m+4)},\\
    \frac{\EE\left[\lambda^{x^{(j)}}\right]}{\lambda^{p_j}}&<1+\frac{(\ln\lambda)^2}{8(m+2)}+\frac{(\ln\lambda)^3}{18\sqrt{3}(m+2)(m+3)}+\frac{\lambda(\ln\lambda)^4}{128(m+2)(m+4)}.
\end{align*}

And for $0<\lambda<1$:
\begin{multline*}
    \lambda^{x^{(j)}}\leq\lambda^{p_j}+(x^{(j)}-p_j)\lambda^{p_j}\ln\lambda+\frac{1}{2}(x^{(j)}-p_j)^2\lambda^{p_j}(\ln\lambda)^2+\frac{1}{6}(x^{(j)}-p_j)^3\lambda^{p_j}(\ln\lambda)^3\\+\frac{1}{24}(x^{(j)}-p_j)^4(\ln\lambda)^4,
\end{multline*}
\begin{align*}
    \EE\left[\lambda^{x^{(j)}}\right]&\leq\lambda^{p_j}+\frac{\lambda^{p_j}(\ln\lambda)^2}{8(m+2)}+\frac{\lambda^{p_j}(-\ln\lambda)^3}{18\sqrt{3}(m+2)(m+3)}+\frac{(\ln\lambda)^4}{128(m+2)(m+4)},\\
    \frac{\EE\left[\lambda^{x^{(j)}}\right]}{\lambda^{p_j}}&<1+\frac{(\ln\lambda)^2}{8(m+2)}+\frac{(-\ln\lambda)^3}{18\sqrt{3}(m+2)(m+3)}+\frac{\lambda^{-1}(\ln\lambda)^4}{128(m+2)(m+4)}.
\end{align*}
This establishes the upper bound. For the lower bound, observe that for each $\lambda\neq 1$, $\lambda^x$ is convex and hence:
\begin{equation*}
    \EE\left[\lambda^{x^{(j)}}\right]>\lambda^{\EE\left[x^{(j)}\right]}=\lambda^{p_j}.
\end{equation*}

For the $[a,b]$-uniform distribution, $c=b-a$, and we have
\begin{equation*}
    \frac{\EE\left[\lambda^{a+cx^{(j)}}\right]}{\lambda^{a+cp_j}}=\frac{\EE\left[(\lambda^c)^{x^{(j)}}\right]}{(\lambda^c)^{p_j}},
\end{equation*}
from which the desired bounds follow.
\end{proof}

\bigskip

\pluralityLimit*

To prove the theorem, we will exploit the following auxiliary statements.

\begin{claim}\label{claim:pluralityLimit}
Consider a fixed $m>j\geq 1$. Let $x^1,\ldots,x^m$ be drawn independently and identically from a distribution on $[a,b]$ such that density function $f$ and its first $j$ derivatives are bounded and continuous, $b$ is finite, and $f(b)>0$. Let $x^{(1)}\geq\ldots\geq x^{(m)}$ be their reordering in non-increasing order. Then
\begin{equation*}
    \lim_{\lambda\rightarrow\infty}{\frac{\EE\left[\lambda^{x^{(j+1)}}\right]}{\EE\left[\lambda^{x^{(j)}}\right]}}=0.
\end{equation*}
\end{claim}

\begin{proof}

From \citet[p.~34]{DavidNagaraja04} we have:
\begin{equation*}
    \EE\left[\lambda^{x^{(j)}}\right]=\frac{m!}{(j-1)!(m-j)!}\int_a^b{\lambda^xf(x)(F(x))^{m-j}(1-F(x))^{j-1}dx}.
\end{equation*}
It follows that:
\begin{align*}
    \frac{\EE\left[\lambda^{x^{(j+1)}}\right]}{\EE\left[\lambda^{x^{(j)}}\right]}=\frac{(m-j)}{j}&\left(\frac{\int_a^b{\lambda^xf(x)(F(x))^{m-j-1}(1-F(x))^{j}dx}}{\int_a^b{\lambda^xf(x)(F(x))^{m-j}(1-F(x))^{j-1}dx}}\right)\\
    =\frac{(m-j)}{j}&\left(\frac{\int_a^b{(1-F(x))\lambda^xf(x)(F(x))^{m-j-1}(1-F(x))^{j-1}dx}}{\int_a^b{\lambda^xf(x)(F(x))^{m-j}(1-F(x))^{j-1}dx}}\right)\\
    =\frac{(m-j)}{j}&\left(\frac{\int_a^b{\lambda^xf(x)(F(x))^{m-j-1}(1-F(x))^{j-1}dx}}{\int_a^b{\lambda^xf(x)(F(x))^{m-j}(1-F(x))^{j-1}dx}}\right.\\
    &\quad\left.-\frac{\int_a^b{\lambda^xf(x)(F(x))^{m-j}(1-F(x))^{j-1}dx}}{\int_a^b{\lambda^xf(x)(F(x))^{m-j}(1-F(x))^{j-1}dx}}\right)
    \\
    =\frac{(m-j)}{j}&\left(\frac{\int_a^b{\lambda^xf(x)(F(x))^{m-j-1}(1-F(x))^{j-1}dx}}{\int_a^b{\lambda^xf(x)(F(x))^{m-j}(1-F(x))^{j-1}dx}}-1\right).
\end{align*}

Hence, it is sufficient to show that
\begin{equation*}
    \lim_{\lambda\rightarrow\infty}{\frac{\int_a^b{\lambda^xg_{m-j-1,j-1}(x)dx}}{\int_a^b{\lambda^xg_{m-j,j-1}(x)dx}}}=1,
\end{equation*}
where $g_{k,s}(x)=f(x)(F(x))^k(1-F(x))^s$, for $k=0,\ldots,m-1$, and $s=0,\ldots,m-2$. 

Since $F(b)=1$, we have:
\begin{enumerate}
    \item $g_{k,0}(b)=f(b)$,
    \item for $s>0$, $g_{k,s}(b)=\ldots=g_{k,s}^{(s-1)}(b)=0$,
    \item $g_{k,s}^{(s)}(b)=s!(-1)^s(f(b))^{s+1}$.
\end{enumerate} Integrating by parts $j$ times, 
\begin{align*}
    \int_a^b{\lambda^xg_{k,j-1}(x)dx}&=\frac{\lambda^bg_{k,j-1}(b)}{\ln\lambda}-\frac{\lambda^ag_{k,j-1}(a)}{\ln\lambda}-\frac{1}{\ln\lambda}\int_a^b{\lambda^xg_{k,j-1}^{(1)}(x)dx}\\
    &=\sum\limits_{r=0}^{j-1}{\frac{(-1)^r\lambda^bg_{k,j-1}^{(r)}(b)}{(\ln\lambda)^{r+1}}}+o\left(\frac{\lambda^b}{(\ln\lambda)^j}\right)\\
    &=\frac{\lambda^b(j-1)!(f(b))^j}{(\ln\lambda)^j}(1+o(1)),
\end{align*}
which proves the claim.
\end{proof}

\clearpage

\begin{claim}\label{claim:antipluralityLimit}
Consider a fixed $m>j\geq 1$. Let $x^1,\ldots,x^m$ be drawn independently and identically from a distribution on $[a,b]$ such that density function $f$ and its first $j$ derivatives are bounded and continuous, $a$ is finite, and $f(a)>0$. Let $x^{(1)}\geq\ldots\geq x^{(m)}$ be their reordering in non-increasing order. Then
\begin{equation*}
    \lim_{\lambda\rightarrow 0}{\frac{\EE\left[-\lambda^{x^{(m-j)}}\right]}{\EE\left[-\lambda^{x^{(m-j+1)}}\right]}}=0.
\end{equation*}
\end{claim}

\begin{proof}

For $k=1,\ldots,m$, let $z^k=-x^k$ and $z^{(m-k+1)}=-x^{(k)}$ be their reordering in non-increasing order. By letting $\lambda=1/\alpha$, we have that:
\begin{equation*}
    \lim_{\lambda\rightarrow 0}{\frac{\EE\left[-\lambda^{x^{(m-j)}}\right]}{\EE\left[-\lambda^{x^{(m-j+1)}}\right]}}=\lim_{\alpha\rightarrow\infty}{\frac{\EE\left[\alpha^{-x^{(m-j)}}\right]}{\EE\left[\alpha^{-x^{(m-j+1)}}\right]}}  =\lim_{\alpha\rightarrow\infty}{\frac{\EE\left[\alpha^{z^{(j+1)}}\right]}{\EE\left[\alpha^{z^{(j)}}\right]}}=0,
\end{equation*}
where the last equality is true by \autoref{claim:pluralityLimit}: we have a fixed $m>j\geq 1$; $z^1,\ldots,z^m$ are drawn independently and identically from a distribution on $[-b,-a]$ whose density function $f_z(x)=f(-x)$ and its first $j$ derivatives are bounded and continuous; $-a$ is finite; $f_z(-a)=f(a)>0$; and $z^{(1)}\geq\ldots\geq z^{(m)}$ is the reordering in non-increasing order. 
\end{proof}

\begin{proof}[Proof of \autoref{thm:pluralityLimit}]
 It follows immediately by placing $j=1$ in the claims above and from the fact that whenever the scores are positive and $\frac{s_{j+1}}{s_j}<\frac{1}{n}$ for all $j=1,\ldots,m-1$, the scoring rule is equivalent to generalised plurality, and whenever the scores are negative and $\frac{s_{m-j}}{s_{m-j+1}}<\frac{1}{n}$ for all $j=1,\ldots,m-1$, the scoring rule is equivalent to generalised antiplurality.
\end{proof}

\bigskip

\section{Complete characterisations}\label{app:complete}

Let $A$ be the countable set of potential athletes (either finite or infinite). For a finite set of $m$ athletes, $C\subset A$, a \textbf{profile} $P$ on $C$ is a vector of $m!$ nonnegative integers (indexed by the set of strict rankings on $C$, each such integer denotes the number of races with the corresponding strict ranking of athletes). An \textbf{anonymous ranking rule} $R$ associates with each finite set of athletes $C\subset A$ and each profile $P$ on $C$ a weak ranking $R(P)$ on $C$.

For a bijection $\sigma:C\ria \sigma(C)$ and a weak ranking $L$ on $C$, denote by $\sigma(L)$ the ranking on $\sigma(C)$ that ranks athlete $\sigma(a)$ higher than $\sigma(b)$ if and only if $L$ ranks $a$ higher than $b$. Given finite sets of athletes $C\subset A,\sigma(C)\subset A$, a bijection $\sigma:C\ria \sigma(C)$ and a profile $P$ on $C$, denote by $\sigma(P)$ the profile on $\sigma(C)$ such that the number of races with a strict ranking $L$ in $P$ equals the number of races with strict ranking $\sigma(L)$ in $\sigma(P)$, for all $L$ in~$P$. 

An anonymous ranking rule $R$ satisfies \textbf{neutrality} if $R(\sigma(P))=\sigma(R(P))$, for all finite sets of athletes $C\subset A,\sigma(C)\subset A$, each profile $P$ on $C$, and each bijection $\sigma:C\ria \sigma(C)$. 

An anonymous ranking rule $R$ satisfies \textbf{electoral consistency} if for each finite set of athletes $C\subset A$, each pair of profiles $P$ and $Q$ on $C$ and each pair of athletes $a$ and $b$ from $C$ the next two conditions hold:
\begin{enumerate}
    \item If $R(P)$ and $R(Q)$ rank $a$ higher or equal to $b$ then $R(P+Q)$ ranks $a$ higher or equal to $b$. $P+Q$ is understood as standard vector addition;
    \item If $R(P)$ ranks $a$ higher than $b$ and $R(Q)$ ranks $a$ higher or equal to $b$ then $R(P+Q)$ ranks $a$ higher than $b$.
\end{enumerate}

An anonymous ranking rule $R$ satisfies the \textbf{Archimedean property} if for each finite set of athletes $C\subset A$, each pair of profiles $P$ and $Q$ on $C$, whenever $R(P)$ ranks athlete $a$ higher than $b$, there exists an $n'$ such that $R(nP+Q)$ ranks $a$ higher than $b$, for all integers $n>n'$.

\begin{proposition}
An anonymous ranking rule satisfies neutrality and electoral consistency if and only if it is a generalised scoring rule. 

An anonymous ranking rule satisfies neutrality, electoral consistency and Archimedean property if and only if it is a scoring rule.
\end{proposition}

\begin{proof}
For each finite set of athletes $C\subset A$, we can apply the theorem~1 of \citet{Smith73}. By our definition of neutrality, the scoring vectors will be the same for every set of athletes $C'\subset A$ whenever $|C'|=|C|$.
\end{proof}

Using the proposition above, we can generalise \autoref{prop:affinelyequivalent}, \autoref{thm:geometricrules}, \autoref{thm:generalisedplurality}, \autoref{thm:borda}, \autoref{thm:concaveGSR}, \autoref{thm:concaveGSRTwo}.

\begin{proposition}
An anonymous ranking rule satisfies neutrality, electoral consistency, Archimedean property and independence of unanimous losers if and only if it is a scoring rule with $s^m_1>\ldots>s^m_m$, and the scores for $k$ athletes, $s^k_1,\dots,s^k_k$, are affinely equivalent to the first $k$ scores for $m$ athletes, $s^m_1,\dots,s^m_k$, for all $k<m\leq M$.

An anonymous ranking rule satisfies neutrality, electoral consistency, Archimedean property and independence of unanimous winners if and only if it is a scoring rule with $s^m_1>\ldots>s^m_m$ and the scores for $k$ athletes, $s^k_1,\dots,s^k_k$, are affinely equivalent to the last $k$ scores for $m$ athletes, $s^m_{m-k+1},\dots,s^m_m$, for all $k<m\leq M$.
\end{proposition}

\begin{proposition}
An anonymous ranking rule satisfies neutrality, electoral consistency, Archimedean property, independence of unanimous winners and independence of unanimous losers if and only if it is a geometric scoring rule with parameter $0 < p <\infty$.
\end{proposition}

\begin{proposition}
Generalised plurality is the only anonymous ranking rule that satisfies neutrality, electoral consistency, independence of unanimous winners and always ranks the majority winner first.
\end{proposition}

\begin{proposition}
Borda is the only anonymous ranking rule that satisfies neutrality, electoral consistency, Archimedean property, top-winner reversal bias and one of independence of unanimous winners or independence of unanimous losers.
\end{proposition}

\begin{proposition}
Geometric scoring rules with parameter $0 < p\leq 1$ are the only anonymous ranking rules that satisfy neutrality, electoral consistency, Archimedean property, independence of unanimous winners and independence of unanimous losers and never rank the majority loser first.
\end{proposition}

\begin{proposition}
Generalised antiplurality is the only anonymous ranking rule that satisfies neutrality, electoral consistency, independence of unanimous losers and always ranks the majority loser last.
\end{proposition}

\bigskip

\clearpage

\section{Violation of independence of unanimous losers}\label{app:violation}

In this section we show that some well-known ordinal procedures do not satisfy independence of unanimous losers.

\textbf{Nanson's rule} (\citeyear{Nanson1882}) eliminates candidates round by round. In each round the candidates with more than the average Borda scores proceed to the next round, until the remaining candidates get equal Borda scores and are declared the winners. Consider a profile where 7 individual rankings are $acdbe$ (from first-ranked $a$ to the last-ranked $e$), 7 -- $bacde$, 7 -- $cdbae$, 1 -- $bcade$, and 1 -- $bacde$. In the first round, $a$ gets 61 points, $b$ -- 57, $c$ -- 68, $d$ -- 44, $e$ -- 0, the average score is 46, and thus $a, b$, and $c$ proceed to the second round, where $a$ gets 22 points, $b$ -- 25, $c$ -- 22, the average score is 23 and thus $b$ wins. However, if we remove the unanimous loser $e$, then in the first round $a$ gets 38 points, $b$ -- 34, $c$ -- 45, $d$ -- 21, the average score is 34.5 and hence $b$ cannot win anymore.

The \textbf{proportional veto core} is defined as follows by \citet{Moulin81}. For a profile with $n$ voters and $m$ candidates, a candidate $a$ is \textbf{blocked} if there exists a coalition of $t$ voters and a subset of $k$ candidates such that each voter in the coalition ranks each candidate in the subset higher than $a$, and $n(m-k)<mt$. All candidates that are not blocked are declared the winners. Consider a profile with $n=3$ voters and $m=3$ candidates, where 2 individual rankings are $bac$, and 1 -- $abc$. It is easy to verify that $a$ is not blocked. However, if we remove the unanimous loser $c$, then $a$ is blocked, because $t=2$ voters rank $k=1$ candidates ($b$) higher than $a$, $m=2$ and $n(m-k)=3\cdot(2-1)<2\cdot 2=mt$.

For the case of $n=2$ voters and $m=2k+1$ candidates, the \textbf{veto-rank} used in arbitrator selection \citep{BloomCavanagh86} can be defined as a generalised scoring rule. It assigns 1 point for the first $k+1$ positions, and 0 points for the last $k$ positions. The tie-breaking scores are $k,k-1,\ldots,1,0,\ldots,0$. This rule also can be seen as an ordinal variant of average without misery used in group recommendations \citep{Masthoff15}. Consider a profile with individual rankings $abcdefg$ and $decbafg$. The best candidates in the first round, $b,c$, and $d$, get 2 points, and in the tie-breaking round $b$ and $c$ get 2 points, whereas $d$ gets 3 points and thus wins. However, if we remove the unanimous losers $g$ and $f$, then in the first round only $c$ gets 2 points and thus wins.

\bigskip

\section{Computing optimal scores (Online Appendix)}\label{app:love_sport}

\subsection{Data description}\label{app:DataDescription}

The results for the IBU World Cup biathlon are based on the 2017/18, 2018/19 and 2019/20 seasons, however since there were only 7 Individual races in the three
seasons, we use the 2014/15 to the 2019/2020 seasons for the Individual category. The data was downloaded from \url{https://www.biathlonworld.com}.  The actual IBU scores used for sprint, pursuit, and individual are 60, 54, 48, 43, 40, 38, 36, 34, 32, 31, $\dots$, 1, then 0 for the remaining positions. The scores used for the mass start are 60, 54, 48, 43, 40, 38, 36, 34, 32, 31, 30, $\dots$, 22, 21, 20, 18, 16, $\dots$, 2. The actual IBU prize-money (in euros) awarded in 2019/20  for the first twenty positions is 15,000, 12,000, 9,000, 7,000, 6,000, 5,000, 4,000, 3,500, 3,000, 2,500, 2,000, 1,750, 1,500, 1,250, 1,000, 900, 800, 700, 600, 500, and then 0 for the remaining positions. Since at least 29 biathletes completed each mass start, we restricted ourselves to 29 positions in this category and 41 positions in other categories. In race $i$, the cardinal performance $x_i^a$ of biathlete $a$ was calculated as their lag behind the race winner in minutes.

The results of Category 500 golf events of the PGA TOUR in 2017/18 (29 events) and 2018/19 (26 events) seasons were downloaded from \url{https://www.pgatour.com}. The actual PGA scores for first seventy positions are 500, 300, 190, 135, 110, 100, 90, 85, $\ldots$, 60, 57, 55, $\ldots$, 37, 35.5, $\ldots$, 22, 21, $\ldots$, 11, 10.5, $\ldots$, 6, 5.8, $\ldots$ 3. The actual PGA prize-money (in percent of the total purse) is 18, 10.9, 6.9, 4.9, 4.1, 3.625, 3.375, 3.125, 2.925, $\ldots$, 1.925, 1.825, $\ldots$, 1.125, 1.045, 0.965, 0.885, 0.805, 0.775, $\ldots$, 0.595, 0.57, 0.545, 0.52, 0.495, 0.475, $\ldots$, 0.295, 0.279, 0.265, 0.257, 0.251, 0.245, 0.241, 0.237, 0.235, $\ldots$, 0.205. We restricted ourselves to seventy positions because at least seventy competitors completed each event. In event $i$, the cardinal performance $x_i^a$ of competitor $a$ was taken as his lag behind the event winner in the number of strokes.

The results of twenty four athletic disciplines of the IAAF Diamond League in the 2010--2021 seasons were downloaded from \url{https://www.diamondleague.com}. The 2020 season contains no data because of the COVID pandemic. We dropped the results of long and triple jump, shot put, discus and javelin throw in the last season because after the rule change in 2021 the final ranking of the top three athletes in each event reflects only one last attempt and thus can be different from the order of their best attempts. We dropped events where less than eight athletes finished or where the result for the eighth position was lower than the standard for a ``Candidate for Master of Sport'' under the Unified Sports Classification System of Russia (there were three of these: 6.41 and 5.99 metres for  men's long jump, 5.32 metres for women's long jump). The descriptive statistics are in \autoref{table:descriptive}

The actual IAAF scores since 2017 are 8, 7, 6, 5, 4, 3, 2, 1, 0 (no points in season finales); in 2016 the vector 10, 6, 4, 3, 2, 1, 0 was used (double points in the season finale); and in 2010--2015 the vector 4, 2, 1, 0 (double points in season finales). In each event $i$, the cardinal performance $x_i^a$ of athlete $a$ was taken as their final result in seconds (running), metres (throw and put), and decimetres (jump and vault).

\begin{table}
\centering
\caption{IAAF Diamond League descriptive statistics}
\begin{tabular}{rccccc}
\toprule
&Total&Included&Unit of&Mean&Standard\\
Discipline&events&events&measure&result&deviation\\
\hline
men 100&77&65&seconds&10.06&0.128\\
women 100&72&60&seconds&11.11&0.167\\
men 200&72&58&seconds&20.34&0.318\\
women 200&74&56&seconds&22.76&0.381\\
men 400&72&56&seconds&45.22&0.584\\
women 400&71&53&seconds&51.16&0.897\\
men 110H&72&55&seconds&13.30&0.185\\
women 100H&72&56&seconds&12.78&0.192\\
men 400H&72&57&seconds&49.07&0.769\\
women 400H&72&58&seconds&55.05&0.968\\
men high jump&73&68&decimetres&22.82&0.470\\
women high jump&71&62&decimetres&19.20&0.498\\
men pole vault&73&58&decimetres&56.75&1.508\\
women pole vault&74&57&decimetres&46.06&1.369\\
men long jump&63&56&decimetres&80.46&2.034\\
women long jump&64&51&decimetres&66.68&1.816\\
men triple jump&64&51&decimetres&169.0&4.506\\
women triple jump&64&46&decimetres&142.9&3.672\\
men shot put&61&54&metres&20.92&0.734\\
women shot put&60&36&metres&18.67&0.801\\
men discus throw&64&63&metres&64.62&2.121\\
women discus throw&64&53&metres&62.73&2.893\\
men javelin throw&64&56&metres&83.12&3.531\\
women javelin throw&64&50&metres&62.25&2.917\\
\hline
All disciplines&1649&1335&&&\\
\bottomrule
\end{tabular}
\label{table:descriptive}
\vspace{0.2cm}
\justify 
\footnotesize{\textit{Notes}: Mean and standard deviation are calculated for positions from first to seven.}
\end{table}

In the case of the running disciplines, we restricted our analysis to the first seven positions. This is due to the discouragement effect \citep[][p.~525]{Ehrenberg1990,Krumer21,Frick03}, according to which athletes reduce their efforts when they perceive they are lagging behind the leaders.  To check the presence of this effect in our data, in \autoref{table:Running8Gaps} and \autoref{table:Running9Gaps} we calculated the average and median gaps between adjacent positions. When calculating the average (but not median) gaps, we ignored results of the eighth and ninth positions that were lower than the standard for a Candidate for Master of Sport. These were: 11.59, 12.08, 12.46 (men's 100m), 23.35, 23.69, 26.30, 28.80, 80.88 (men's 200m), 62.69 (men's 400m), 15.64, 16.82, 19.26 (men's 110m hurdles), 15.32, 15.80, 15.85, 20.81, 21.62, 25.11 (women's 100m hurdles), 78.90 (men's 400m hurdles), 65.78, 90.61 (women's 400m hurdles). Without ignoring such results, the average gaps between the last two positions would have been even larger. Both the average and median gaps are almost symmetric around the middle positions, with the only exception that the gap between the last two positions is about twice as large as the gap between the first two positions. Thus we can confirm that the discouragement effect is pronounced in our data. From \autoref{table:Running8GapsBefore2016}, \autoref{table:Running8GapsSince2017}, and \autoref{table:Running8and9MedianGapsRatio}, we can observe that the effect has intensified since 2017. The ratio of the last and second to last median gaps have increased in most cases, from two to three on average. At this stage we have insufficient data to speculate whether this is linked to the rule change in 2017.

\begin{table}
\centering
\caption{Running: 8 athletes, 2010--2021, median and average gaps}
\begin{tabular}{rcccccccc}
\toprule
Discipline&Included&\multicolumn{7}{c}{Adjacent positions: median gap in seconds}\\
&events&1-2&2-3&3-4&4-5&5-6&6-7&7-8\\
\hline
men 100&40&0.050&0.040&0.030&0.050&0.030&0.030&0.070\\
women 100&39&0.050&0.060&0.060&0.040&0.030&0.060&0.080\\
men 200&45&0.150&0.070&0.100&0.090&0.110&0.100&0.220\\
women 200&42&0.135&0.095&0.160&0.065&0.140&0.090&0.285\\
men 400&43&0.240&0.190&0.090&0.160&0.170&0.190&0.400\\
women 400&44&0.305&0.270&0.255&0.180&0.285&0.255&0.620\\
men 110H&32&0.070&0.045&0.050&0.050&0.050&0.085&0.150\\
women 100H&39&0.060&0.050&0.040&0.040&0.050&0.060&0.140\\
men 400H&43&0.200&0.250&0.220&0.210&0.200&0.270&0.590\\
women 400H&46&0.310&0.310&0.400&0.210&0.285&0.360&0.975\\
\hline
All disciplines&413&10.76&8.85&8.95&7.70&8.54&9.77&21.48\\
\bottomrule
\end{tabular}
\begin{tabular}{rcccccccc}
\\
\toprule
Discipline&\multicolumn{7}{c}{Adjacent positions: average gap in seconds}&Total\\
&1-2&2-3&3-4&4-5&5-6&6-7&7-8&gap\\
\hline
men 100&0.066&0.044&0.032&0.052&0.047&0.046&0.100&0.386\\
women 100&0.077&0.076&0.070&0.050&0.040&0.067&0.131&0.511\\
men 200&0.170&0.117&0.110&0.113&0.128&0.137&0.226&1.000\\
women 200&0.200&0.141&0.155&0.093&0.159&0.136&0.327&1.210\\
men 400&0.294&0.247&0.190&0.193&0.211&0.272&0.540&1.947\\
women 400&0.387&0.376&0.274&0.243&0.338&0.358&0.894&2.870\\
men 110H&0.109&0.073&0.066&0.065&0.059&0.100&0.188&0.660\\
women 100H&0.102&0.068&0.059&0.055&0.079&0.089&0.258&0.710\\
men 400H&0.313&0.403&0.264&0.296&0.243&0.348&0.762&2.629\\
women 400H&0.429&0.437&0.402&0.274&0.347&0.623&0.923&3.435\\
\hline
All disciplines&14.93&12.42&10.50&9.73&10.81&13.54&28.06&100\\
\bottomrule
\end{tabular}
\label{table:Running8Gaps}
\vspace{0.2cm}
\justify 
\footnotesize{\textit{Notes}: The normalised mean for all disciplines was calculated after each value in our data was multiplied by 100 and divided by the total gap.}
\end{table}

\begin{table}
\centering
\caption{Running: 9 athletes, 2010-2021, median and average gaps}
\begin{tabular}{rccccccccc}
\toprule
Discipline&Included&\multicolumn{8}{c}{Adjacent positions: median gap in seconds}\\
&events&1-2&2-3&3-4&4-5&5-6&6-7&7-8&8-9\\
\hline
men 100&25&0.050&0.040&0.020&0.030&0.020&0.010&0.030&0.090\\
women 100&21&0.050&0.080&0.050&0.030&0.030&0.030&0.060&0.080\\
men 200&13&0.100&0.080&0.050&0.050&0.060&0.080&0.070&0.150\\
women 200&14&0.110&0.115&0.085&0.115&0.125&0.055&0.125&0.260\\
men 400&13&0.270&0.140&0.200&0.090&0.150&0.220&0.210&0.740\\
women 400&9&0.180&0.300&0.200&0.170&0.090&0.080&0.180&0.810\\
men 110H&23&0.050&0.050&0.040&0.050&0.030&0.060&0.050&0.170\\
women 100H&17&0.070&0.050&0.040&0.040&0.060&0.030&0.050&0.140\\
men 400H&14&0.220&0.150&0.080&0.155&0.195&0.150&0.605&0.475\\
women 400H&12&0.490&0.415&0.205&0.175&0.240&0.345&0.130&0.560\\
\hline
All disciplines&161&9.94&9.04&6.24&6.14&6.46&6.21&9.35&21.40\\
\bottomrule
\end{tabular}
\begin{tabular}{rccccccccc}
\\
\toprule
Discipline&\multicolumn{8}{c}{Adjacent positions: average gap in seconds}&Total\\
&1-2&2-3&3-4&4-5&5-6&6-7&7-8&8-9&gap\\
\hline
men 100&0.067&0.051&0.042&0.033&0.031&0.037&0.043&0.095&0.399\\
women 100&0.068&0.095&0.076&0.040&0.048&0.044&0.080&0.111&0.562\\
men 200&0.154&0.149&0.065&0.115&0.112&0.096&0.135&0.188&1.015\\
women 200&0.151&0.136&0.114&0.126&0.150&0.099&0.192&0.291&1.259\\
men 400&0.328&0.160&0.188&0.156&0.177&0.258&0.307&0.639&2.214\\
women 400&0.408&0.290&0.413&0.226&0.199&0.250&0.406&0.950&3.141\\
men 110H&0.098&0.061&0.050&0.060&0.054&0.062&0.077&0.215&0.678\\
women 100H&0.068&0.086&0.049&0.061&0.077&0.055&0.081&0.143&0.621\\
men 400H&0.351&0.246&0.181&0.200&0.248&0.214&0.579&0.489&2.508\\
women 400H&0.547&0.433&0.365&0.216&0.272&0.403&0.322&0.933&3.490\\
\hline
All disciplines&13.90&11.68&9.40&8.40&9.16&9.21&13.71&24.54&100\\
\bottomrule
\end{tabular}
\label{table:Running9Gaps}
\vspace{0.2cm}
\justify 
\footnotesize{\textit{Notes}: The normalised mean for all disciplines was calculated after each value in our data was multiplied by 100 and divided by the total gap.}
\end{table}

\begin{table}
\centering
\caption{Running: 8 athletes, 2010--2016, median and average gaps}
\begin{tabular}{rcccccccc}
\toprule
Discipline&Included&\multicolumn{7}{c}{Adjacent positions: median gap in seconds}\\
&events&1-2&2-3&3-4&4-5&5-6&6-7&7-8\\
\hline
men 100&21&0.040&0.040&0.040&0.040&0.040&0.030&0.060\\
women 100&25&0.050&0.060&0.060&0.040&0.030&0.060&0.070\\
men 200&29&0.140&0.090&0.110&0.090&0.120&0.100&0.230\\
women 200&26&0.115&0.070&0.120&0.060&0.150&0.065&0.240\\
men 400&34&0.240&0.190&0.110&0.155&0.160&0.190&0.345\\
women 400&28&0.265&0.265&0.260&0.195&0.315&0.310&0.570\\
men 110H&23&0.070&0.040&0.050&0.050&0.060&0.090&0.130\\
women 100H&28&0.065&0.050&0.035&0.035&0.045&0.055&0.140\\
men 400H&29&0.180&0.190&0.170&0.190&0.220&0.190&0.590\\
women 400H&31&0.390&0.280&0.290&0.230&0.310&0.460&0.870\\
\hline
All disciplines&274&10.47&8.54&8.66&7.45&9.48&9.83&20.06\\
\bottomrule
\end{tabular}
\begin{tabular}{rcccccccc}
\\
\toprule
Discipline&\multicolumn{7}{c}{Adjacent positions: average gap in seconds}&Total\\
&1-2&2-3&3-4&4-5&5-6&6-7&7-8&gap\\
\hline
men 100&0.065&0.051&0.037&0.050&0.053&0.055&0.064&0.375\\
women 100&0.079&0.074&0.074&0.050&0.042&0.072&0.130&0.523\\
men 200&0.159&0.122&0.117&0.121&0.125&0.145&0.254&1.043\\
women 200&0.170&0.109&0.144&0.088&0.167&0.111&0.260&1.049\\
men 400&0.287&0.261&0.213&0.195&0.189&0.259&0.511&1.914\\
women 400&0.352&0.302&0.293&0.269&0.387&0.431&0.930&2.965\\
men 110H&0.106&0.073&0.073&0.054&0.067&0.103&0.161&0.636\\
women 100H&0.099&0.068&0.054&0.062&0.076&0.078&0.260&0.698\\
men 400H&0.242&0.283&0.237&0.241&0.262&0.305&0.849&2.420\\
women 400H&0.488&0.408&0.348&0.306&0.365&0.719&0.967&3.602\\
\hline
All disciplines&14.52&11.80&10.86&9.79&11.55&14.09&27.39&100\\
\bottomrule
\end{tabular}
\label{table:Running8GapsBefore2016}
\vspace{0.2cm}
\justify 
\footnotesize{\textit{Notes}: The normalised mean for all disciplines was calculated after each value in our data was multiplied by 100 and divided by the total gap.}
\end{table}

\begin{table}
\centering
\caption{Running: 8 athletes, 2017--2021, median and average gaps}
\begin{tabular}{rcccccccc}
\toprule
Discipline&Included&\multicolumn{7}{c}{Adjacent positions: median gap in seconds}\\
&events&1-2&2-3&3-4&4-5&5-6&6-7&7-8\\
\hline
men 100&17&0.060&0.040&0.020&0.050&0.030&0.040&0.080\\
women 100&12&0.055&0.070&0.060&0.040&0.030&0.055&0.095\\
men 200&13&0.160&0.050&0.080&0.060&0.100&0.060&0.180\\
women 200&14&0.210&0.190&0.180&0.075&0.110&0.110&0.360\\
men 400&9&0.300&0.140&0.090&0.200&0.180&0.150&0.590\\
women 400&13&0.230&0.290&0.200&0.170&0.290&0.160&0.500\\
men 110H&7&0.110&0.050&0.030&0.090&0.030&0.100&0.420\\
women 100H&10&0.075&0.055&0.060&0.035&0.060&0.075&0.225\\
men 400H&11&0.400&0.510&0.240&0.430&0.200&0.310&0.630\\
women 400H&12&0.155&0.565&0.445&0.125&0.285&0.195&1.020\\
\hline
All disciplines&118&12.29&10.94&8.45&8.23&7.94&8.75&26.76\\
\bottomrule
\end{tabular}
\begin{tabular}{rcccccccc}
\\
\toprule
Discipline&\multicolumn{7}{c}{Adjacent positions: average gap in seconds}&Total\\
&1-2&2-3&3-4&4-5&5-6&6-7&7-8&gap\\
\hline
men 100&0.066&0.038&0.024&0.055&0.038&0.038&0.145&0.404\\
women 100&0.078&0.069&0.062&0.050&0.037&0.058&0.132&0.487\\
men 200&0.198&0.128&0.090&0.082&0.136&0.135&0.169&0.938\\
women 200&0.251&0.188&0.175&0.088&0.156&0.179&0.397&1.434\\
men 400&0.319&0.196&0.100&0.186&0.296&0.322&0.649&2.067\\
women 400&0.354&0.517&0.242&0.198&0.282&0.193&0.744&2.530\\
men 110H&0.126&0.054&0.057&0.091&0.047&0.104&0.317&0.797\\
women 100H&0.120&0.062&0.073&0.035&0.083&0.126&0.271&0.770\\
men 400H&0.564&0.604&0.287&0.419&0.221&0.380&0.659&3.134\\
women 400H&0.269&0.499&0.531&0.179&0.344&0.433&0.780&3.036\\
\hline
All disciplines&15.87&13.08&9.83&9.08&10.28&12.73&29.12&100\\
\bottomrule
\end{tabular}
\label{table:Running8GapsSince2017}
\vspace{0.2cm}
\justify 
\footnotesize{\textit{Notes}: The normalised mean for all disciplines was calculated after each value in our data was multiplied by 100 and divided by the total gap. We included only the events with actual Borda scores used, i.e., season finales were excluded from this table.}
\end{table}

\begin{table}
\centering
\caption{Running: 8 and 9 athletes, median gaps between three last positions}
\begin{tabular}{rcccc|cccc}
\toprule
&\multicolumn{4}{c}{2010-2016}&\multicolumn{4}{c}{2017-2021}\\
&Included&Second&Last&&Included&Second&Last&\\
Discipline&events&to last&gap&Ratio&events&to last&gap&Ratio\\
\hline
men 100&31&0.030&0.090&3.000&24&0.025&0.075&3.000\\
women 100&33&0.060&0.080&1.333&18&0.055&0.095&1.727\\
men 200&31&0.100&0.190&1.900&17&0.060&0.180&3.000\\
women 200&29&0.070&0.280&4.000&18&0.105&0.345&3.286\\
men 400&36&0.190&0.465&2.447&13&0.180&0.590&3.278\\
women 400&30&0.310&0.575&1.855&16&0.205&0.760&3.707\\
men 110H&31&0.070&0.140&2.000&15&0.050&0.330&6.600\\
women 100H&32&0.050&0.140&2.800&15&0.070&0.240&3.429\\
men 400H&34&0.230&0.575&2.500&14&0.330&0.555&1.682\\
women 400H&34&0.465&0.660&1.419&15&0.200&0.790&3.950\\
\hline
All disciplines&321&&&2.207&165&&&3.140\\
\bottomrule
\end{tabular}
\label{table:Running8and9MedianGapsRatio}
\vspace{0.2cm}
\justify 
\footnotesize{\textit{Notes}: The ratio for all disciplines is computed as the geometric mean. All season finales were excluded from this table.}
\end{table}

\newpage

\subsection{Data analysis}\label{app:DataAnalysis}

We used $u_i^a=\lambda^{x_i^a}$ for $\lambda>1$, $u_i^a=x_i^a$ for $\lambda=1$ and $u_i^a=-\lambda^{x_i^a}$ for $0<\lambda<1$ as the measure of quality of athlete $a$ in event $i$. In each event $i$, these qualities were reordered in non-increasing order $u_i^{(1)},\ldots,u_i^{(m)}$, i.e., $u_i^{(j)}$ is quality of the athlete that finished at position~$j$. By Theorem~\ref{thm:optimal}, the optimal scores are the expectations of the corresponding random variables, so we estimate them according to the sample mean: $s_j=(u_1^{(j)}+\ldots+u_n^{(j)})/n$ for each $j=1,\ldots,m$.

In the proof of \autoref{thm:OptimalUniform} we bound the ratio of the optimal scores to the geometric approximation by a formula of $m,\lambda$, and $b-a$ from above. If we were to substitute the extreme values of the men's 100m ($\lambda=100, m=7, b-a=(10.192-9.916)\cdot8/6=0.369$) and women's shot put ($\lambda=2.091, m=8, b-a=(19.884-17.156)\cdot9/7=3.508$) disciplines into the formula, the bounds would be 1.045 and 1.128, respectively. Note that these bounds are generous and the convergence is faster in practice.

We picked the geometric approximation to a scoring sequence on the following basis. Let $s_1,\ldots,s_m$ be an arbitrary sequence of scores, and $g_1(p),\ldots,g_m(p)$ the geometric sequence with parameter $p$. We normalise the scores so that $s_1=g_1(p)=1$ and $s_m=g_m(p)=0$. To choose $p$, we first suppose that all athletes are a priori equally strong. Then we fix a pair of athletes $a,b$. In a given event, $a$ finishes at position $j$ and $b$ at position $z$ with probability $1/m(m-1)$. Their score differences in this event are $s_j-s_z$ and $g_j(p)-g_z(p)$ respectively. Both are random variables and their difference has expectation 0, so we choose a $p$ that minimises the variance:
\begin{equation*}
    p=\argmin_{p'}\sum\limits_{j\neq z}(s_j-s_z-g_j(p')+g_z(p'))^2.
\end{equation*}

The argument above justifies a distance measure that we use to compare closeness between any pair of normalised scoring sequences:
\begin{equation}\label{distance2}
    d(s,t)=\sqrt{\frac{1}{4(m-2)}\sum\limits_{j\neq z}(s_j-s_z-t_j+t_z)^2},
\end{equation}
where the factor $1/(4(m-2))$ normalises the distance between plurality and antiplurality to~1.

In other words, for the geometric approximation to a normalised scoring sequence, we choose a parameter $p$ that minimises the distance to the scoring sequence. Similarly, for the optimal approximation to a scoring sequence, we choose a parameter $\lambda$ that minimises the distance to the scoring sequence.

Note that if we normalised the scoring sequences so that their sums are equal, $s_1+\ldots+s_m=t_1+\ldots+t_m=1$, and $s_m=t_m=0$, then we could have motivated another distance measure:
\begin{equation*}\label{distance3}
    r(s,t)=\sqrt{\frac{(m-1)}{(m-2)}\sum\limits_{j}(s_j-t_j)^2}.
\end{equation*}

\subsection{Choice of $\lambda$}\label{app:lambda}

We have briefly argued that a choice of $\lambda>1$ can be interpreted as the organiser valuing the possibility of exceptional performance more than consistency, while a $\lambda<1$ represents that an organiser is more concerned that an athlete never performs poorly in any given event.

If we think in terms of prize money rather than score, there is a more direct interpretation of $\lambda>1$: how much more is an organiser willing to pay an athlete whose performance is one cardinal unit higher? Thus in golf a $\lambda=1.4$ displays a willingness to pay an athlete who completes a score with one strike less, 1.4 times more prize money. In the hypothetical example of using $\lambda=100$ in men's 100m sprint (\autoref{OptimalGraphIAAF}), an athlete that finishes the event one second earlier will be rewarded one hundred times more. Note that despite the incredibly high choice of $\lambda$, the resulting scoring vector is not particularly convex -- this reflects the fact that one second is a very long time in this event, so valuing it by a factor of 100 is not as extreme as it may sound. In women's 200m (\autoref{OptimalGraphIAAF2}), by contrast, a $\lambda$ of only 4.72 produced a similar degree of convexity. The intuition here is that variance in the 200m event is roughly three times as high as in 100m, so the reward for a one second lead in 100m should be commensurate with the reward for a three second lead in 200m, and $100^1=4.64^3$.

The particular values of $\lambda$ in these examples were chosen by assuming that the men's 100m standard deviation (0.128) has the same level of ``quality'' as the women's 200m standard deviation (0.381), so we wanted $\lambda_m,\lambda_w$ such that $\lambda_m^{0.128}=\lambda_w^{0.381}$.

\subsection{Results}\label{app:results}

Here we include the figures that were omitted in the main text of the paper. The results for Individual races of the IBU World Cup biathlon in 2017/18, 2018/19 and 2019/20 seasons are presented in \autoref{OptimalGraphIBUappendix}. The results for twenty four athletic disciplines of the IAAF Diamond League in the 2010--2021 seasons are presented in \autoref{OptimalGraphIAAF}, \autoref{OptimalGraphIAAF2}, and \autoref{OptimalGraphIAAF3}. 

All data and calculations are available from the authors on request.

\begin{figure}
\begin{center}
\includegraphics[width=8cm]{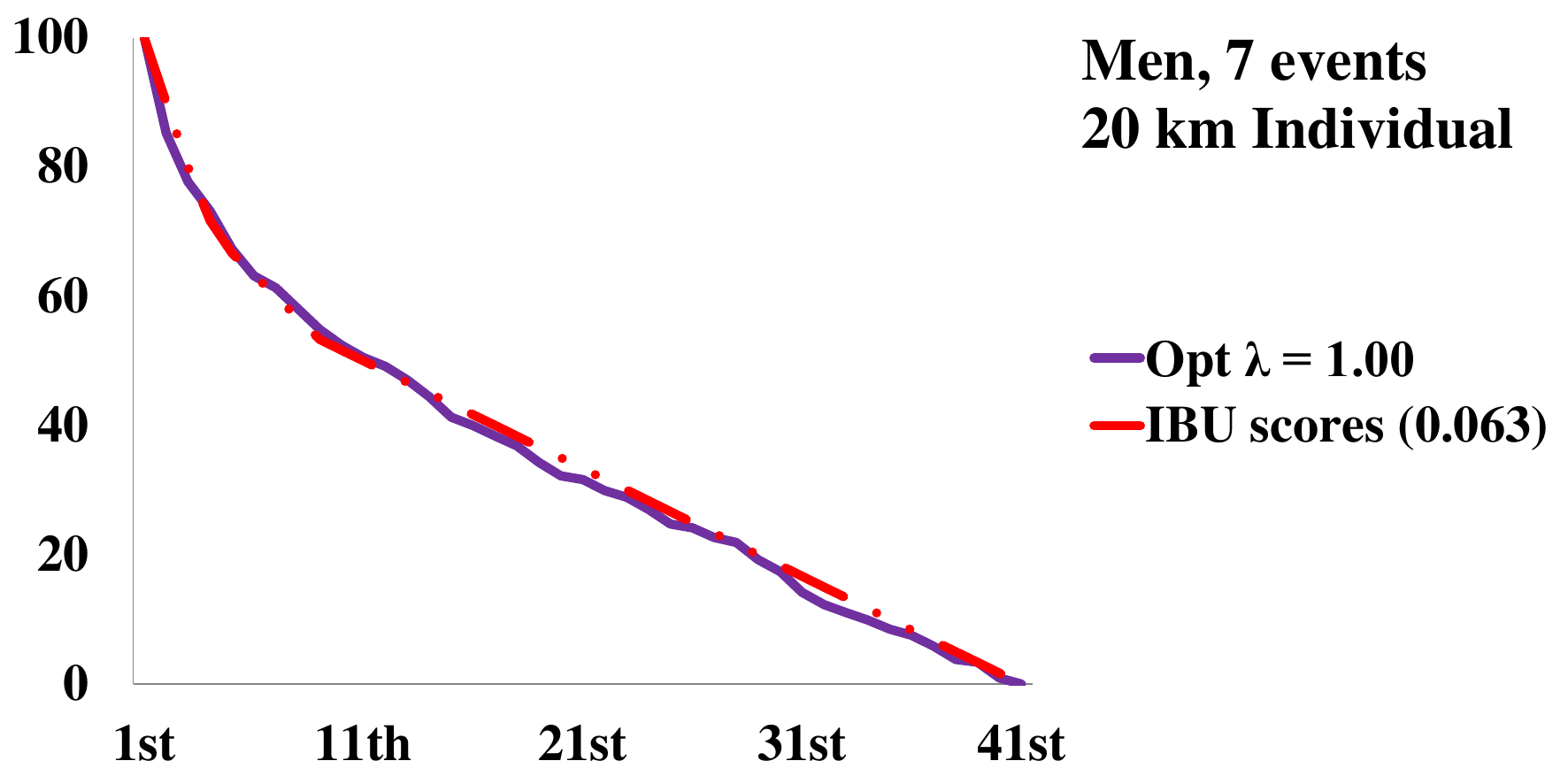}
\includegraphics[width=8cm]{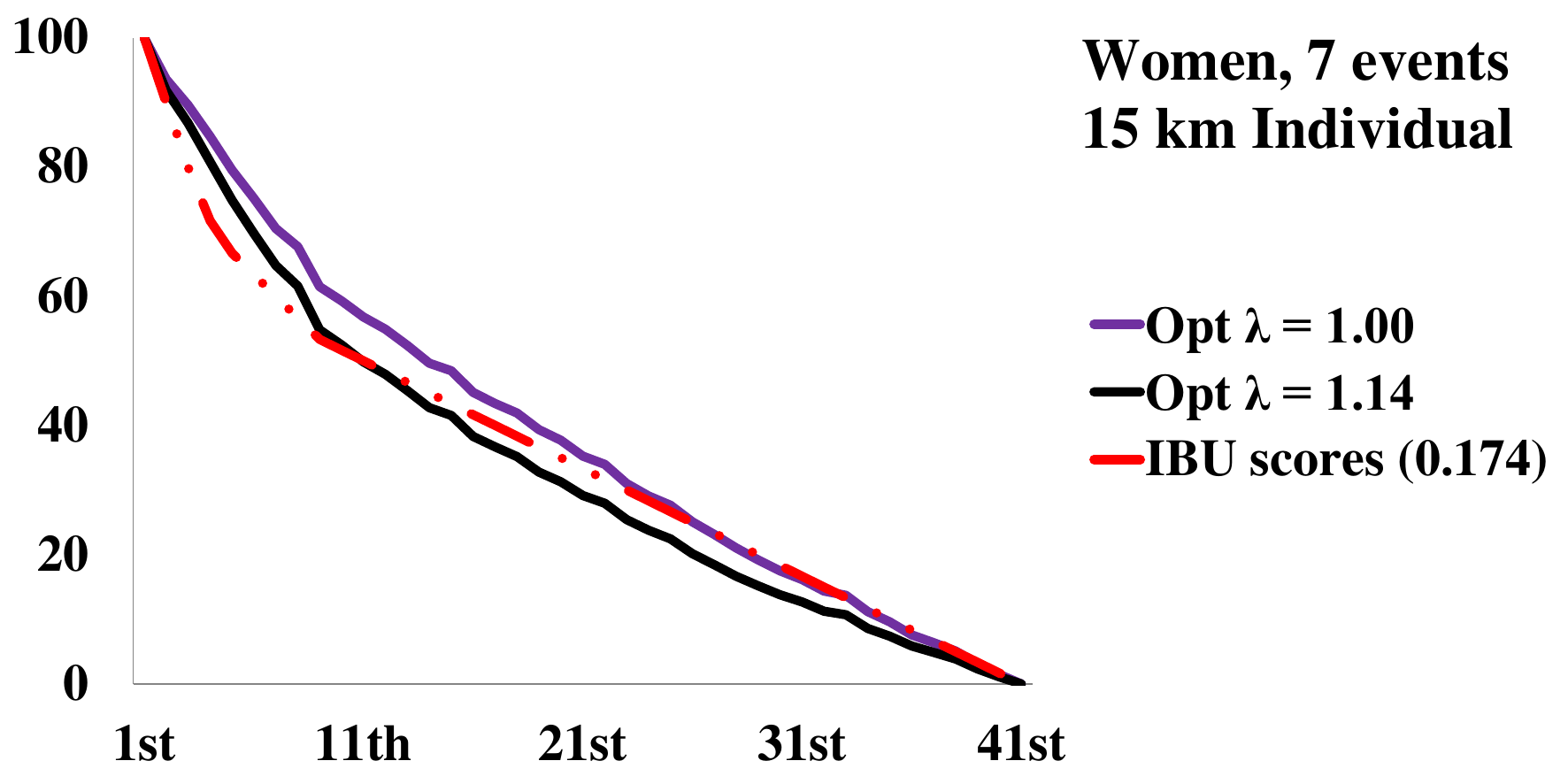}
\end{center}
\caption{Scores in IBU World Cup biathlon}
\label{OptimalGraphIBUappendix}
\vspace{0.2cm}
\justify
\footnotesize{\emph{Notes}: Scores used in 2017/18, 2018/19 and 2019/20 seasons compared with optimal scores. The $x$-axis is the position, the $y$-axis the normalised score. Scores for first position were normalised to 100, for forty-first position to~0. The optimal scores for $\lambda=1$ (purple solid, higher curve) and $\lambda=1.14$ (black solid, lower curve, performance measured in minutes) approximate the actual IBU scores used (red long dash two dots). The approximation distance is in brackets and calculated by formula~(\ref{distance2}).}
\end{figure}

\begin{figure}
\begin{center}
\includegraphics[width=8cm]{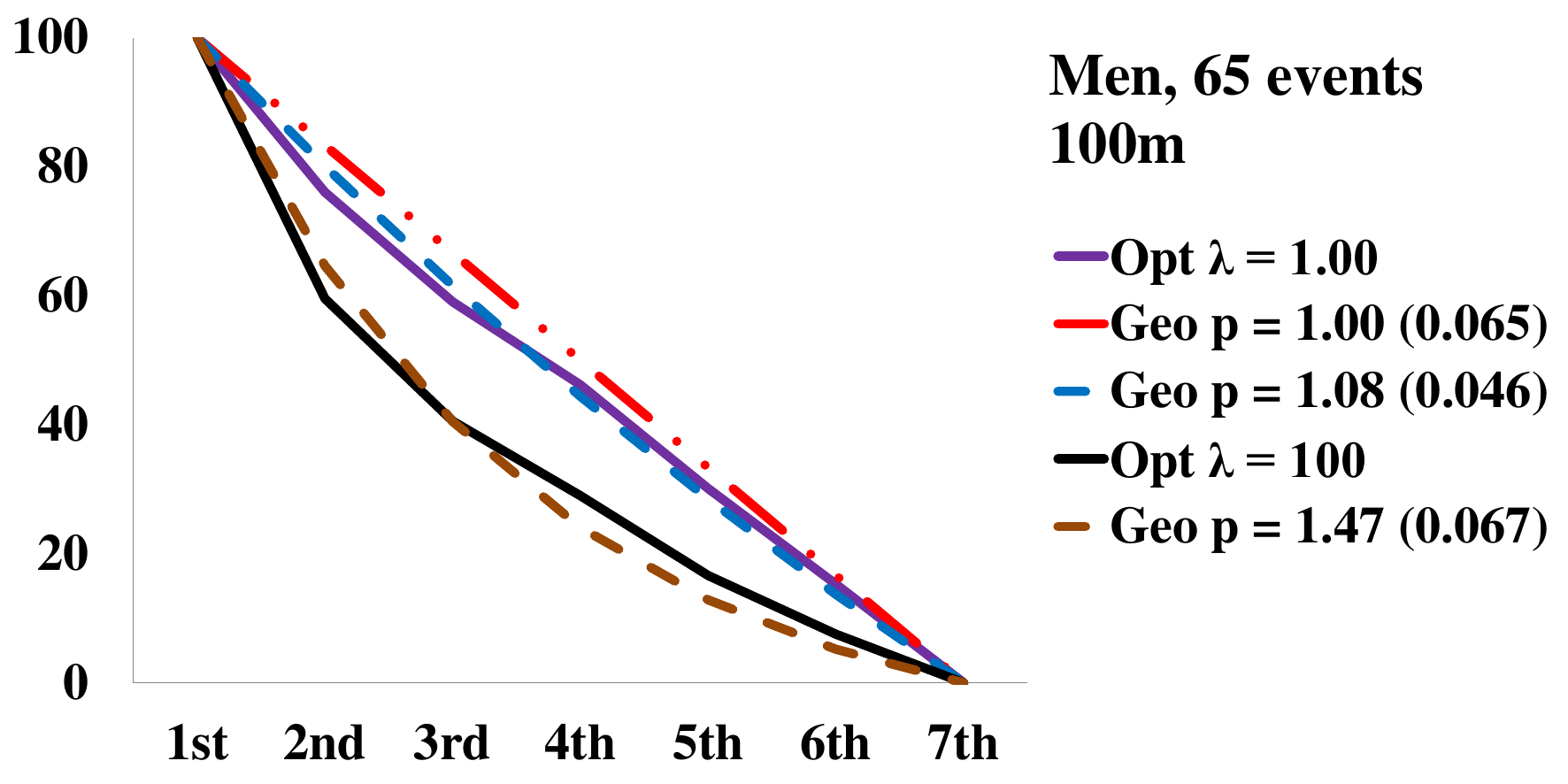}
\includegraphics[width=8cm]{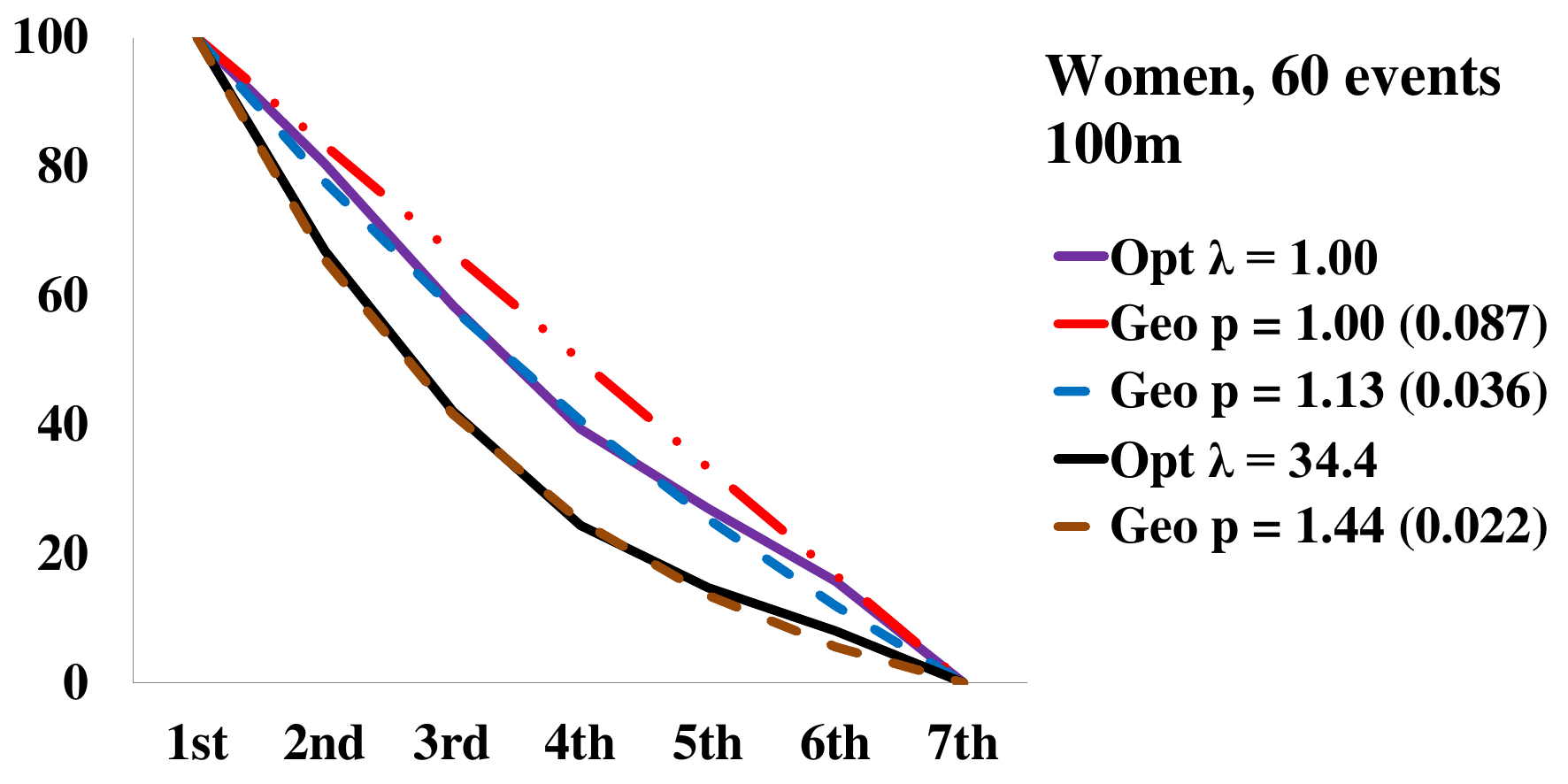}
\includegraphics[width=8cm]{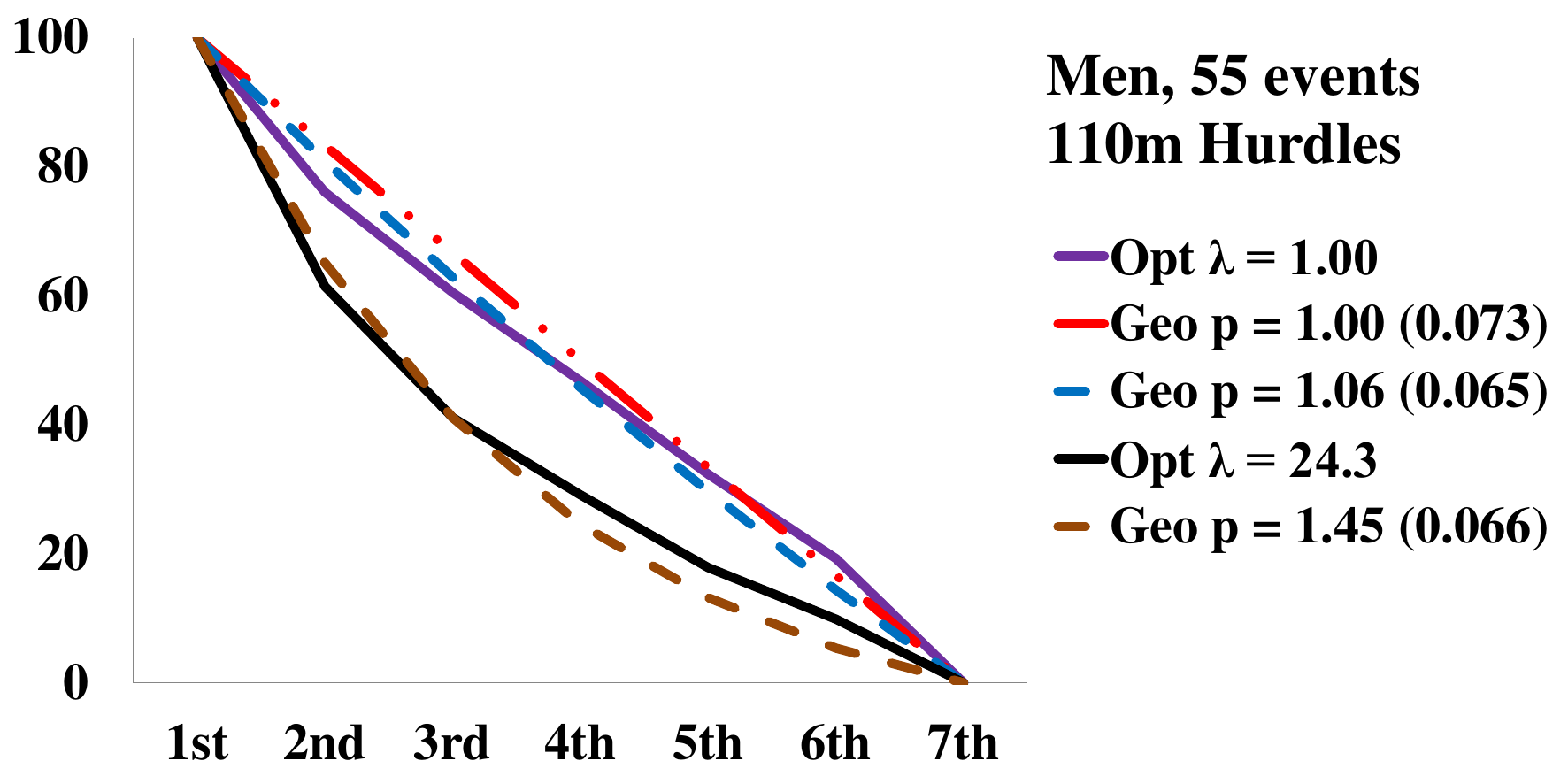}
\includegraphics[width=8cm]{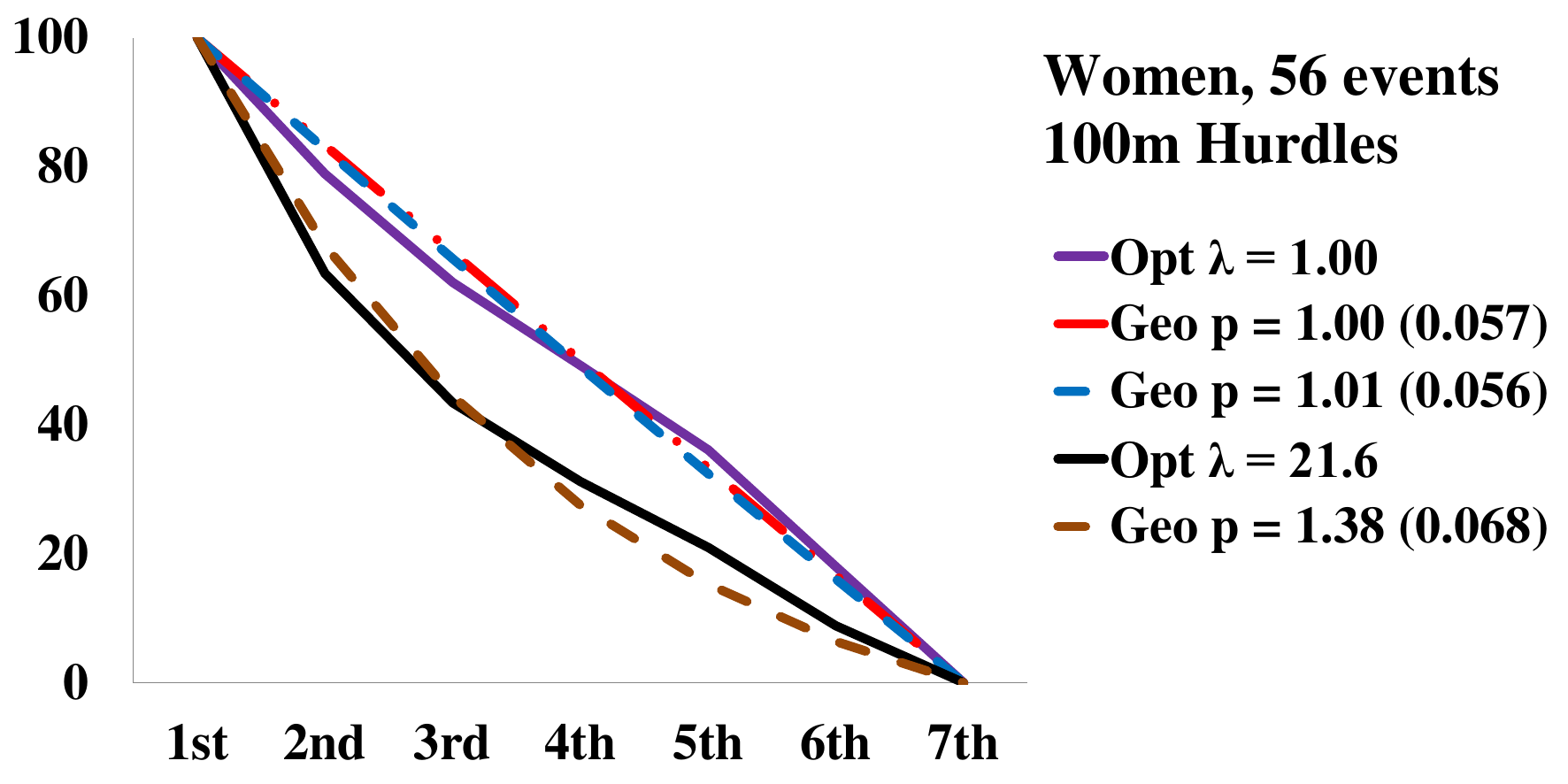}
\includegraphics[width=8cm]{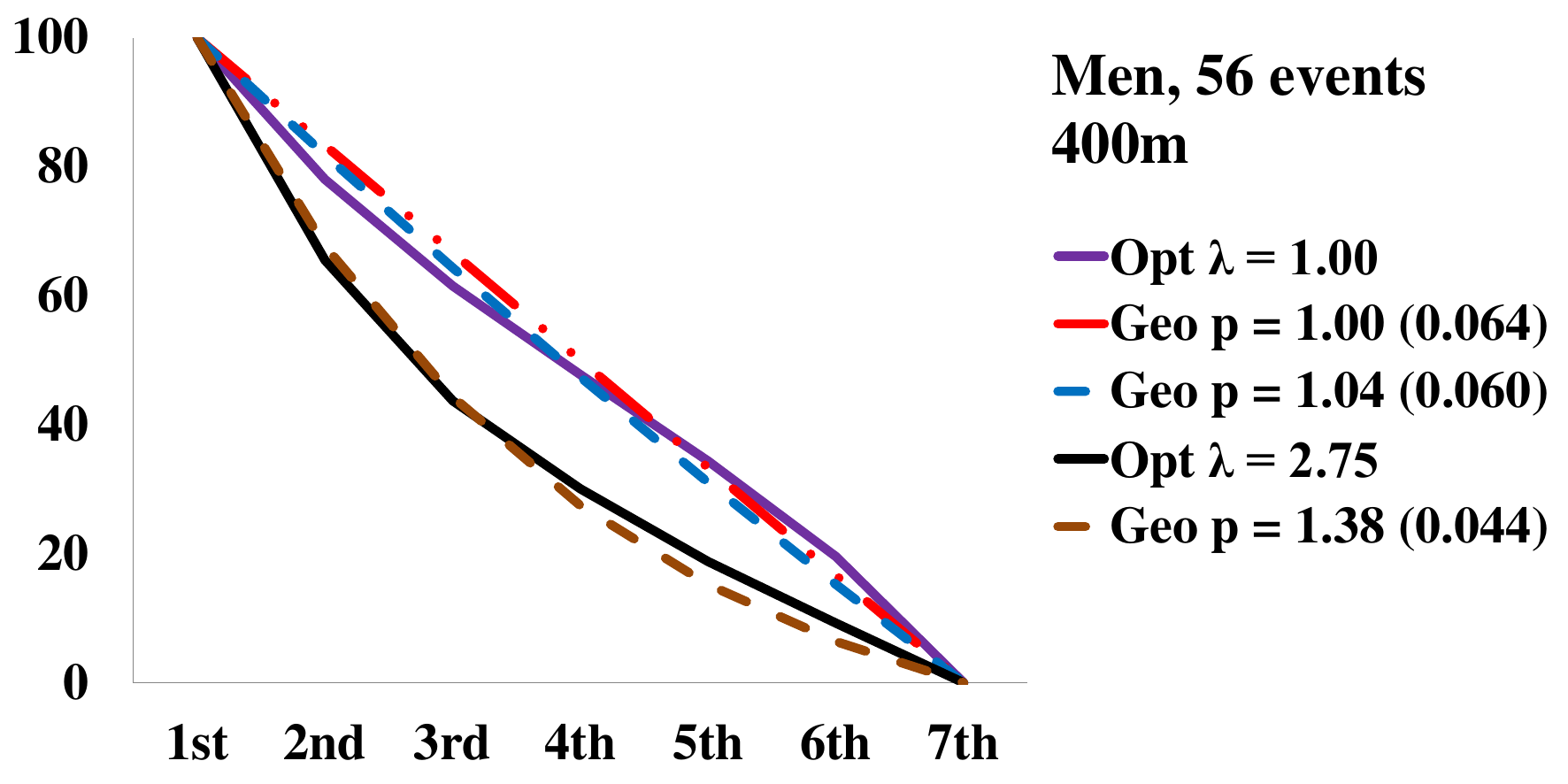}
\includegraphics[width=8cm]{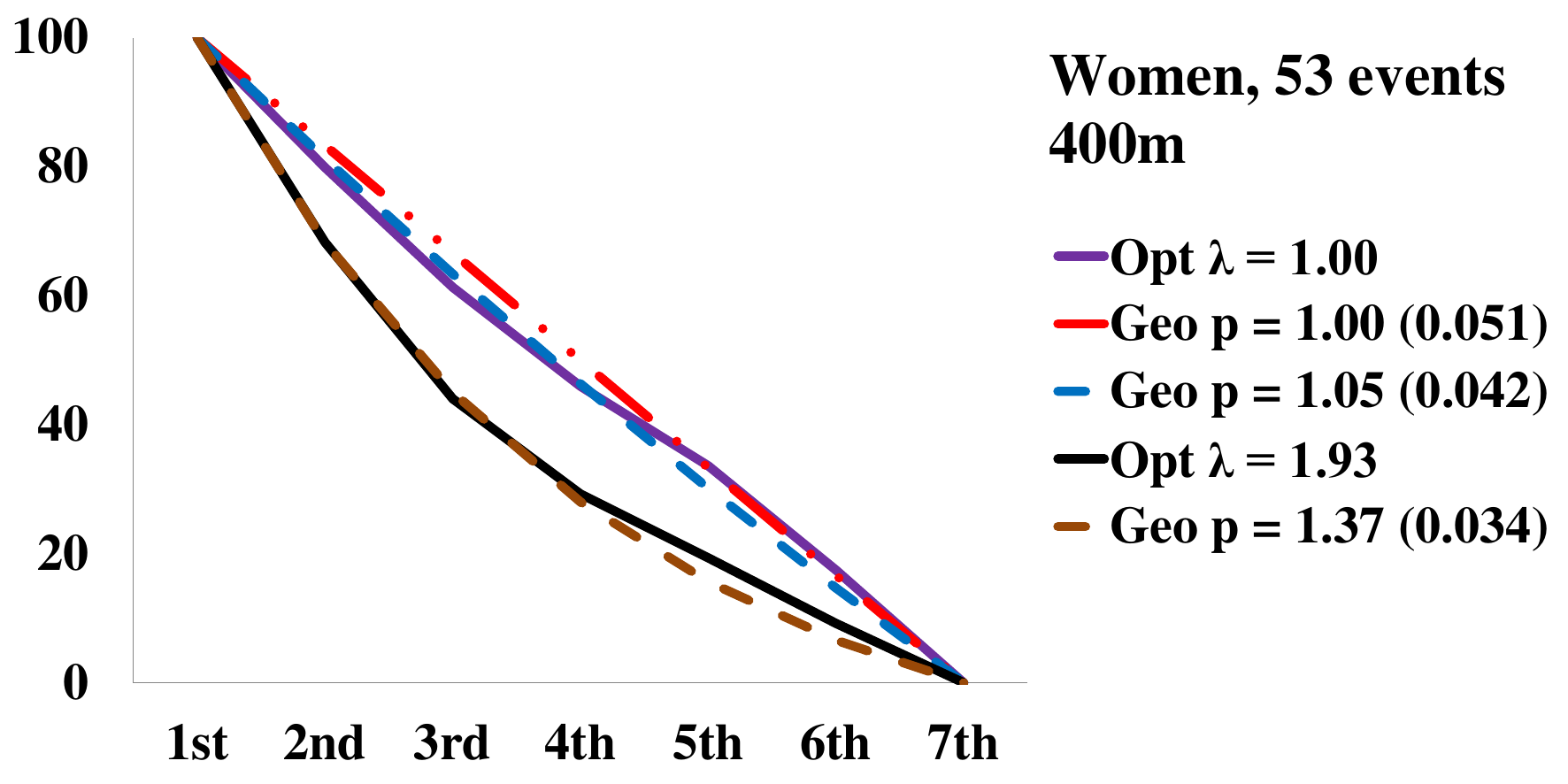}
\includegraphics[width=8cm]{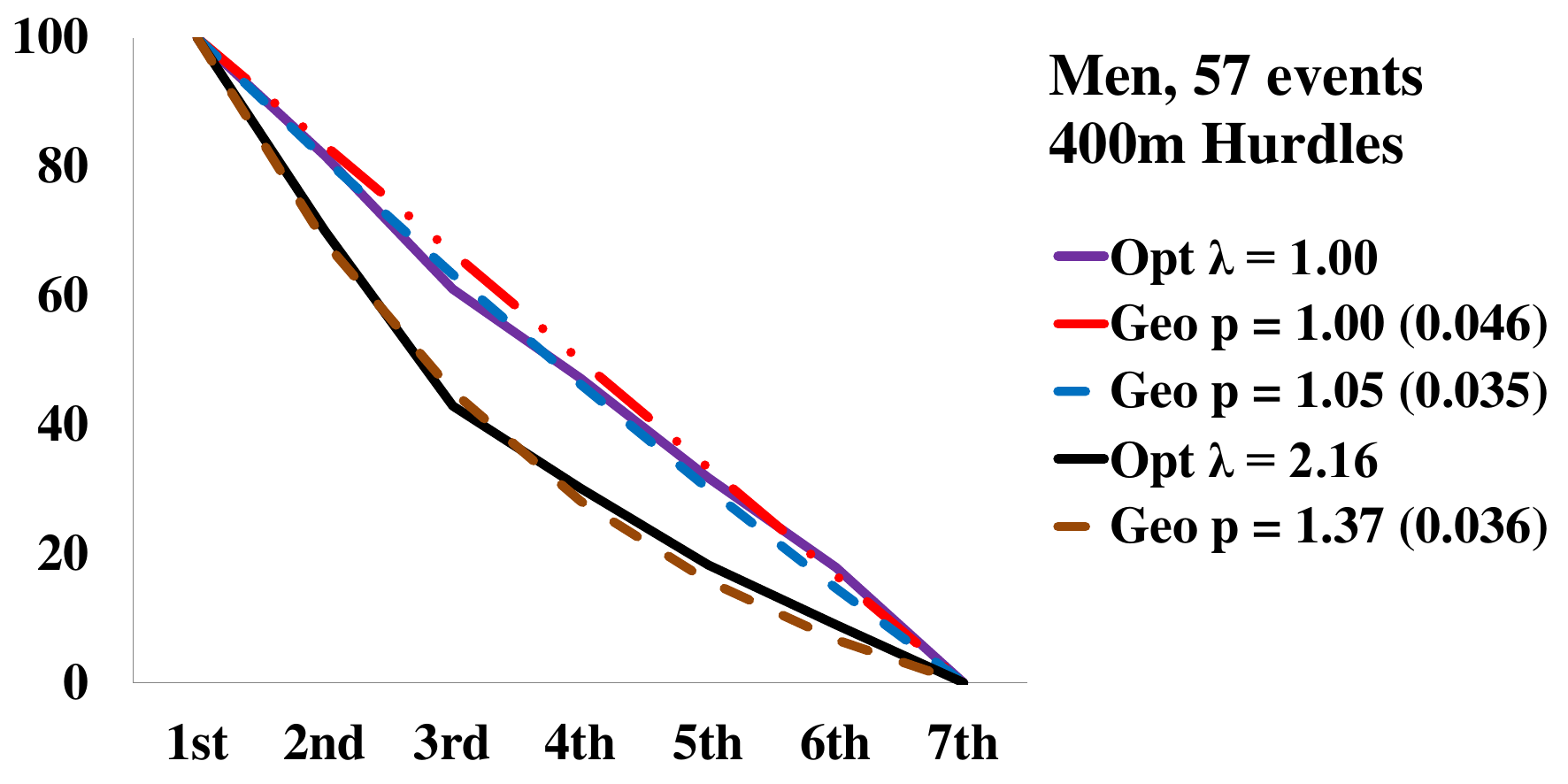}
\includegraphics[width=8cm]{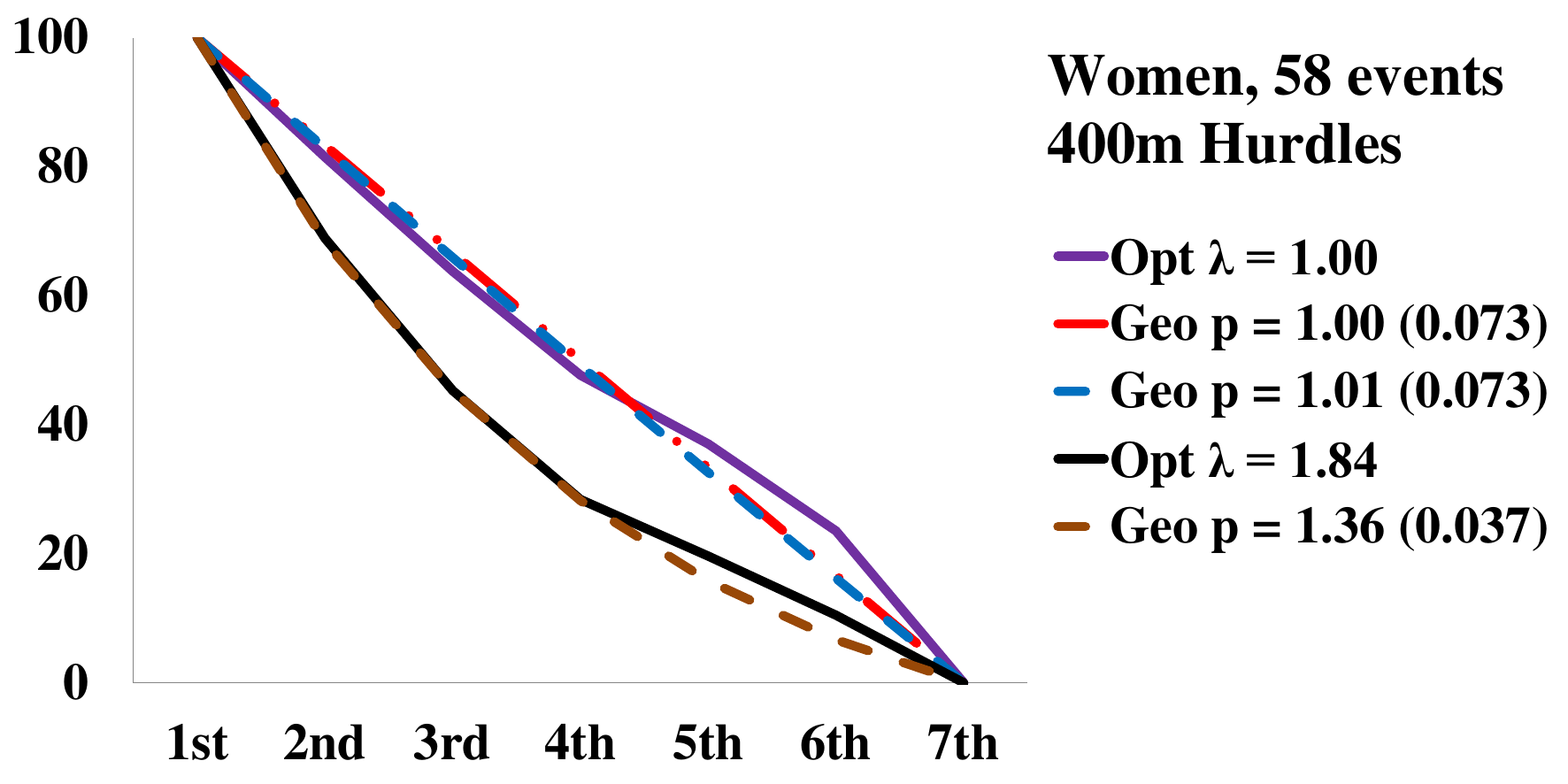}
\end{center}
\caption{Scores in IAAF Diamond League athletics}
\label{OptimalGraphIAAF}
\vspace{0.2cm}
\justify
\footnotesize{\emph{Notes}: The optimal scores  in 2010--2021 seasons approximated by geometric scores. The $x$-axis is the position, the $y$-axis the normalised score. Scores for first position were normalised to 100, for seventh position to~0. The eighth position is excluded to account for the discouragement effect in running. Observe that the actual Borda scores used since 2017 (geometric p~=~1, red long dash two dots) closely approximate most of the optimal scores for $\lambda=1$ (purple solid, higher curve). The curves for $\lambda>1$ (black solid, lower curve, performance measured in seconds) illustrate how closely other geometric scores (blue dash, higher curve, and brown dash, lower curve) can approximate the optimal scores on this data. The approximation distance is in brackets and calculated by formula~(\ref{distance2}), and denotes the distance to the first curve without brackets above the approximation in the legend.}
\end{figure}

\begin{figure}
\begin{center}
\includegraphics[width=8cm]{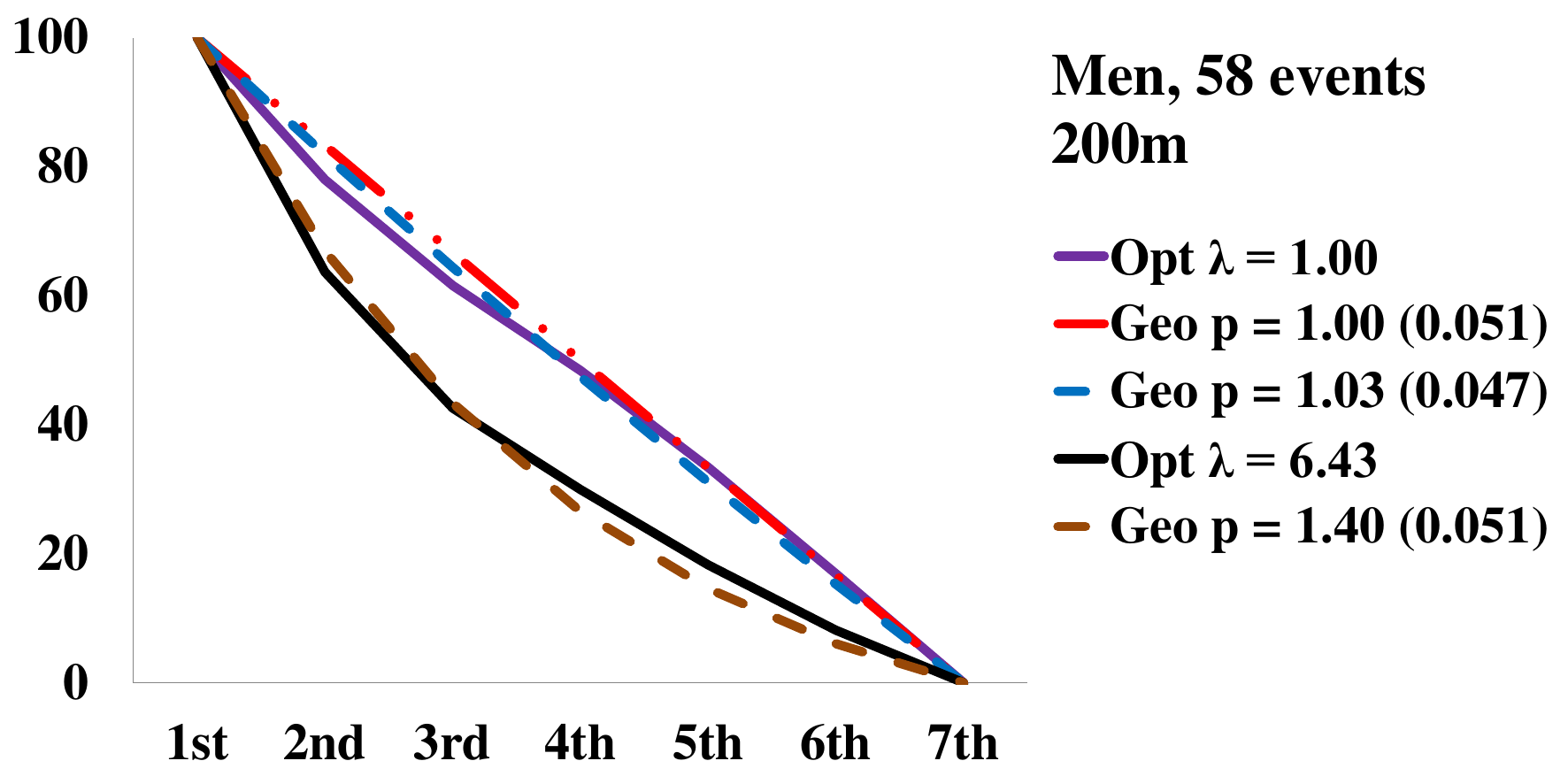}
\includegraphics[width=8cm]{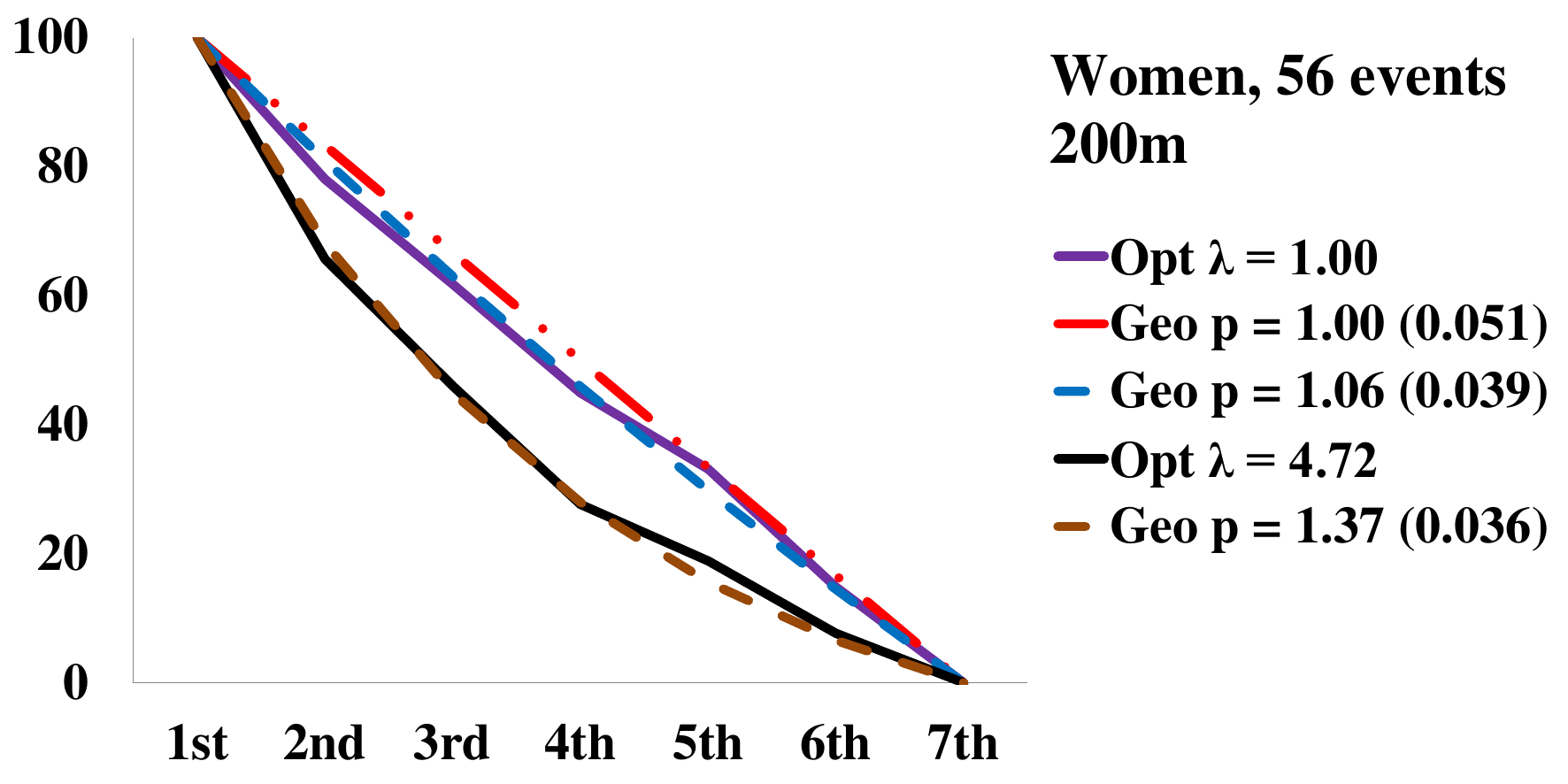}
\includegraphics[width=8cm]{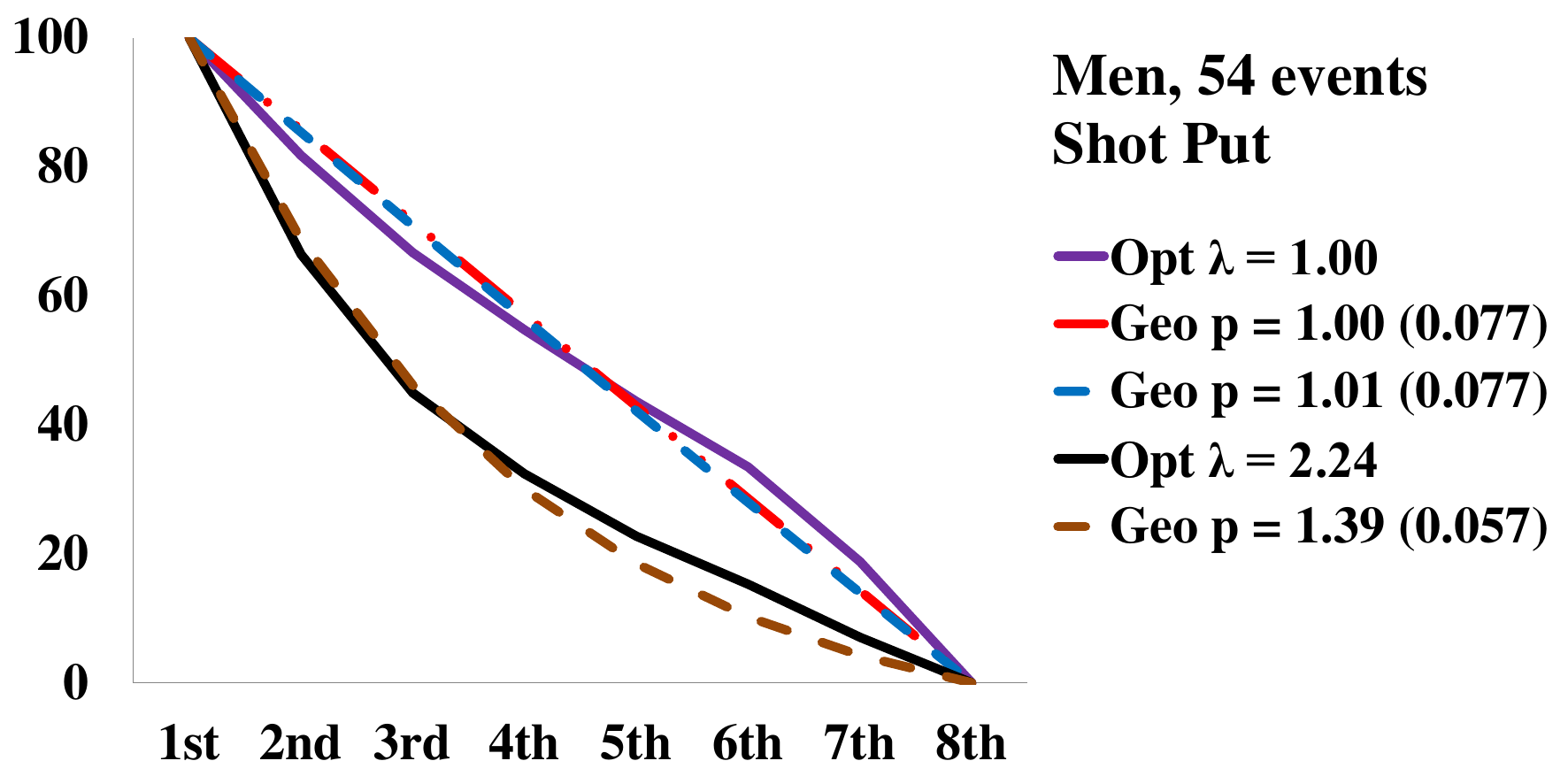}
\includegraphics[width=8cm]{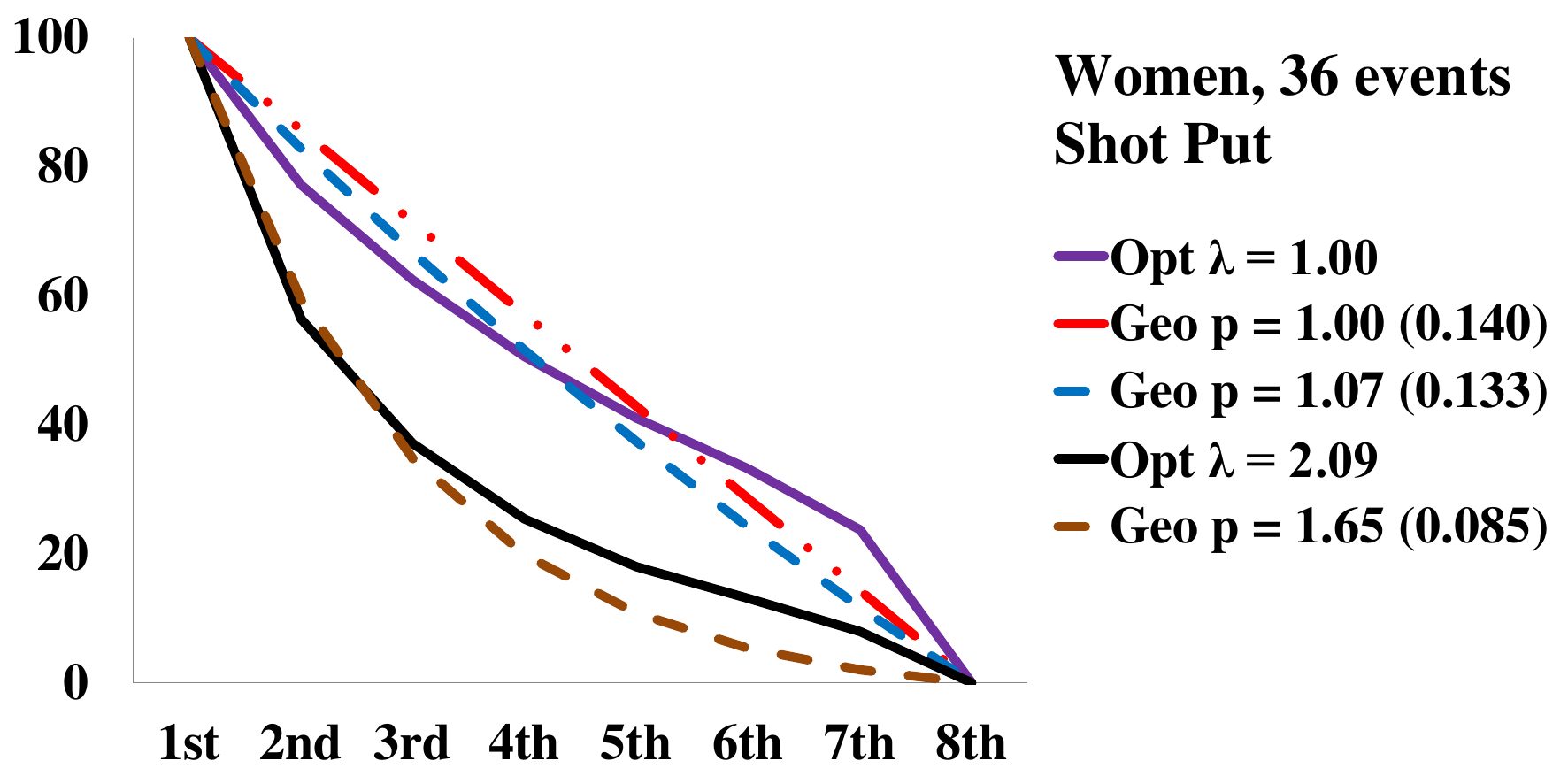}
\includegraphics[width=8cm]{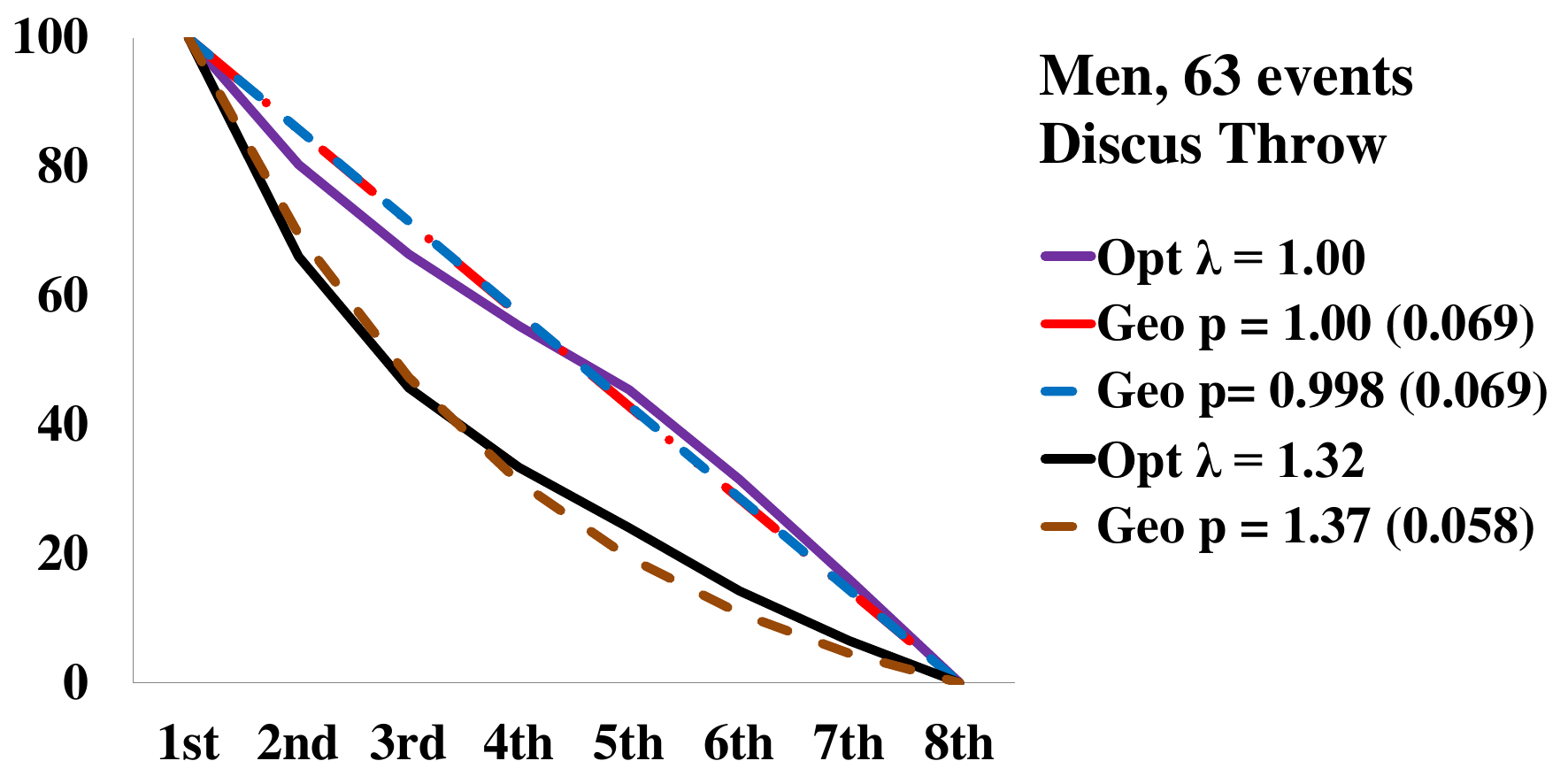}
\includegraphics[width=8cm]{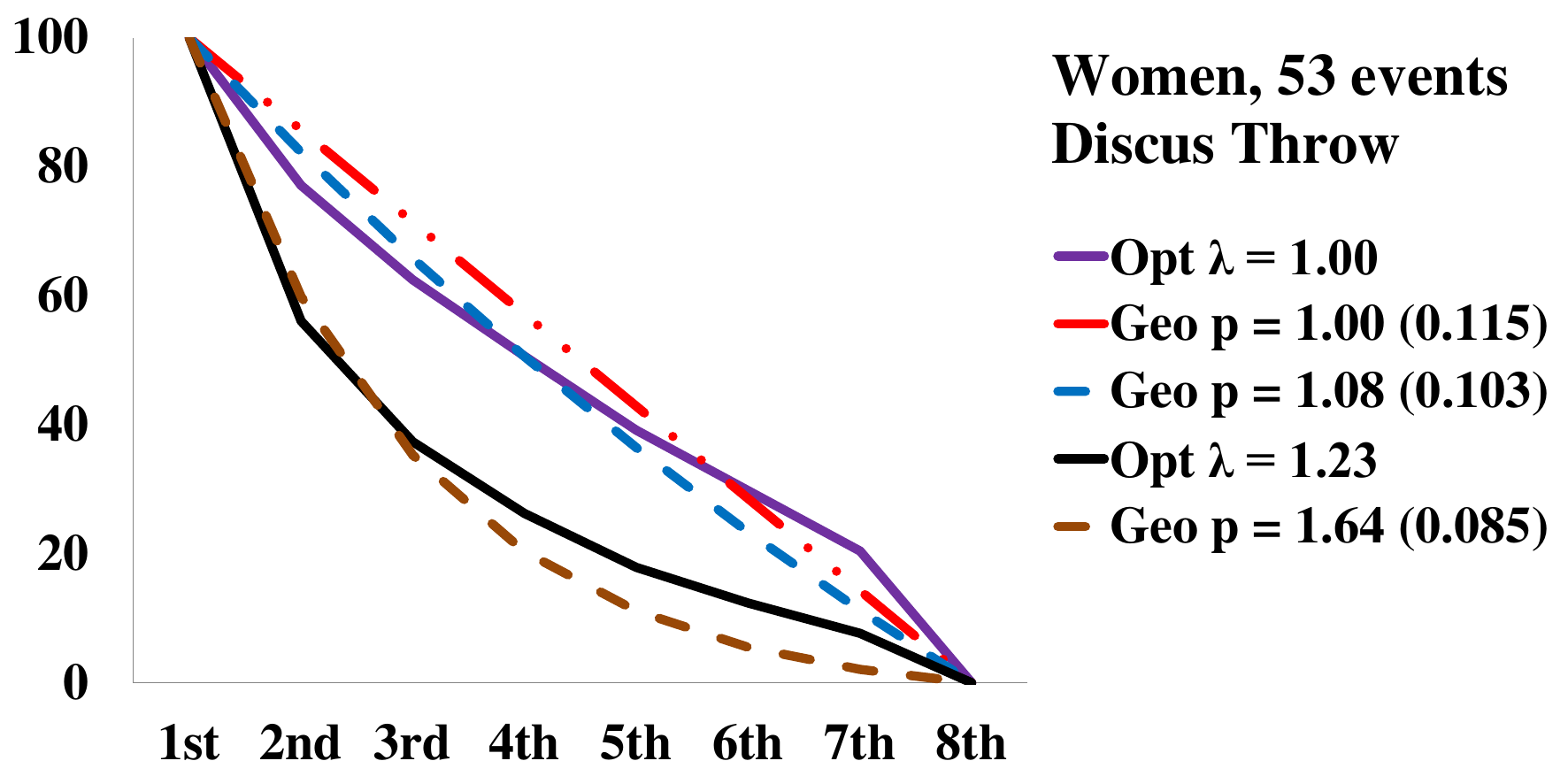}
\includegraphics[width=8cm]{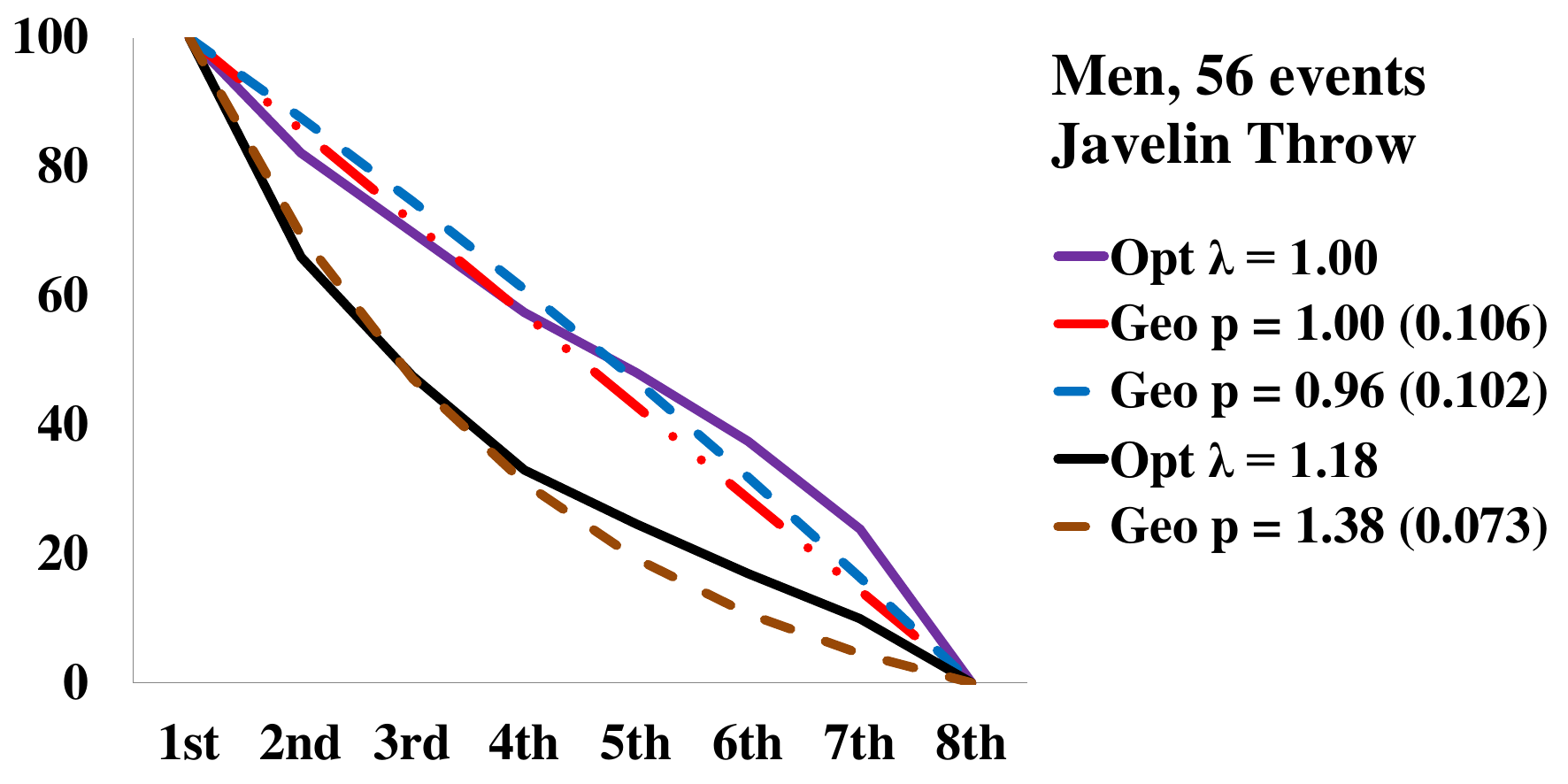}
\includegraphics[width=8cm]{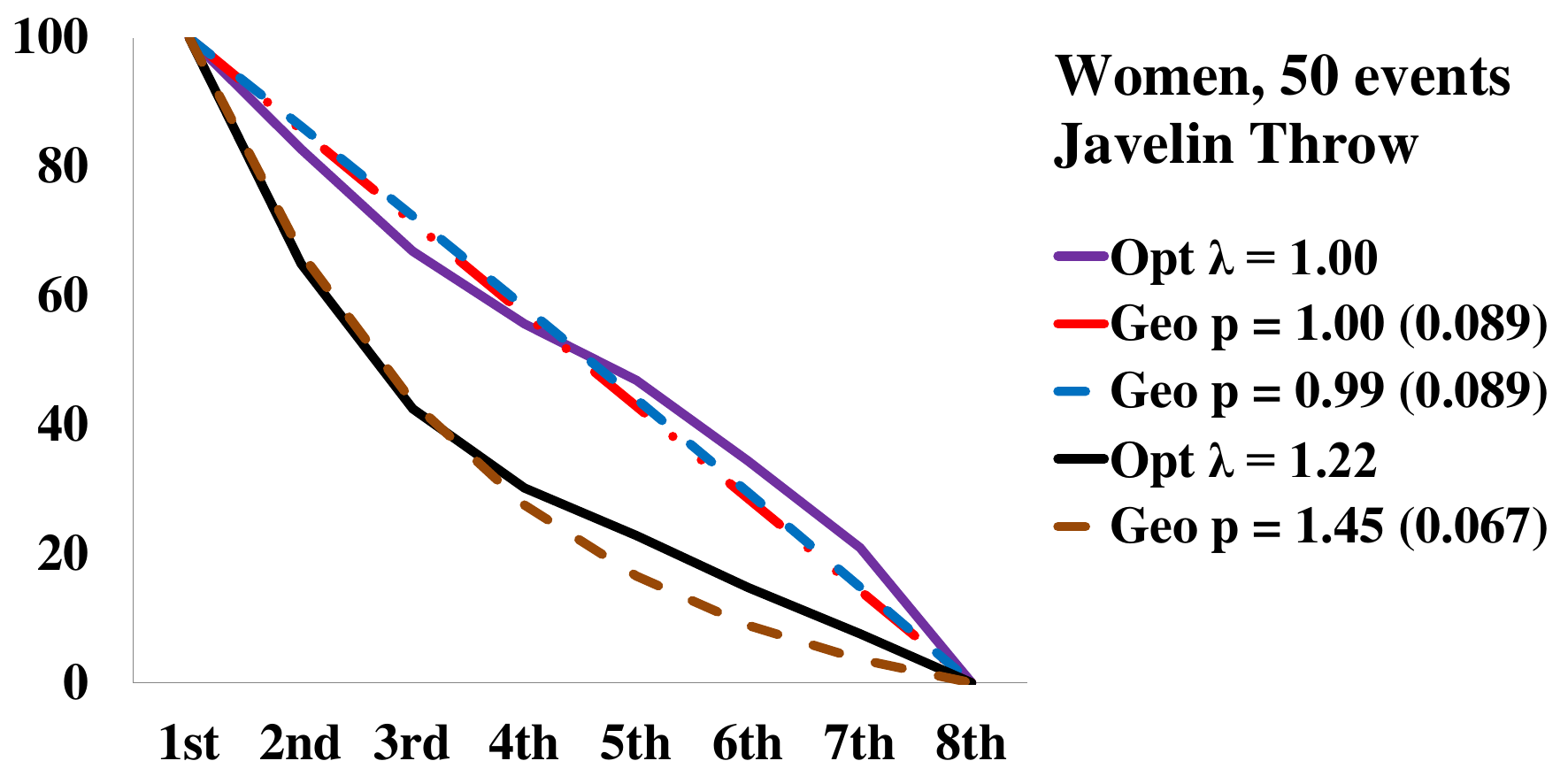}
\end{center}
\caption{Scores in IAAF Diamond League athletics}
\label{OptimalGraphIAAF2}
\vspace{0.2cm}
\justify
\footnotesize{\emph{Notes}: The optimal scores  in 2010--2021 seasons approximated by geometric scores. The $x$-axis is the position, the $y$-axis the normalised score. Scores for first position were normalised to 100, for seventh  (or eighth) position to~0. The eighth position is excluded to account for the discouragement effect in running. Observe that the actual Borda scores used since 2017 (geometric p~=~1, red long dash two dots) closely approximate most of the optimal scores for $\lambda=1$ (purple solid, higher curve). The curves for $\lambda>1$ (black solid, lower curve, performance measured in seconds and metres) illustrate how closely other geometric scores (blue dash, higher curve, and brown dash, lower curve) can approximate the optimal scores on this data. The approximation distance is in brackets and calculated by formula~(\ref{distance2}), and denotes the distance to the first curve without brackets above the approximation in the legend.}
\end{figure}

\begin{figure}
\begin{center}
\includegraphics[width=8cm]{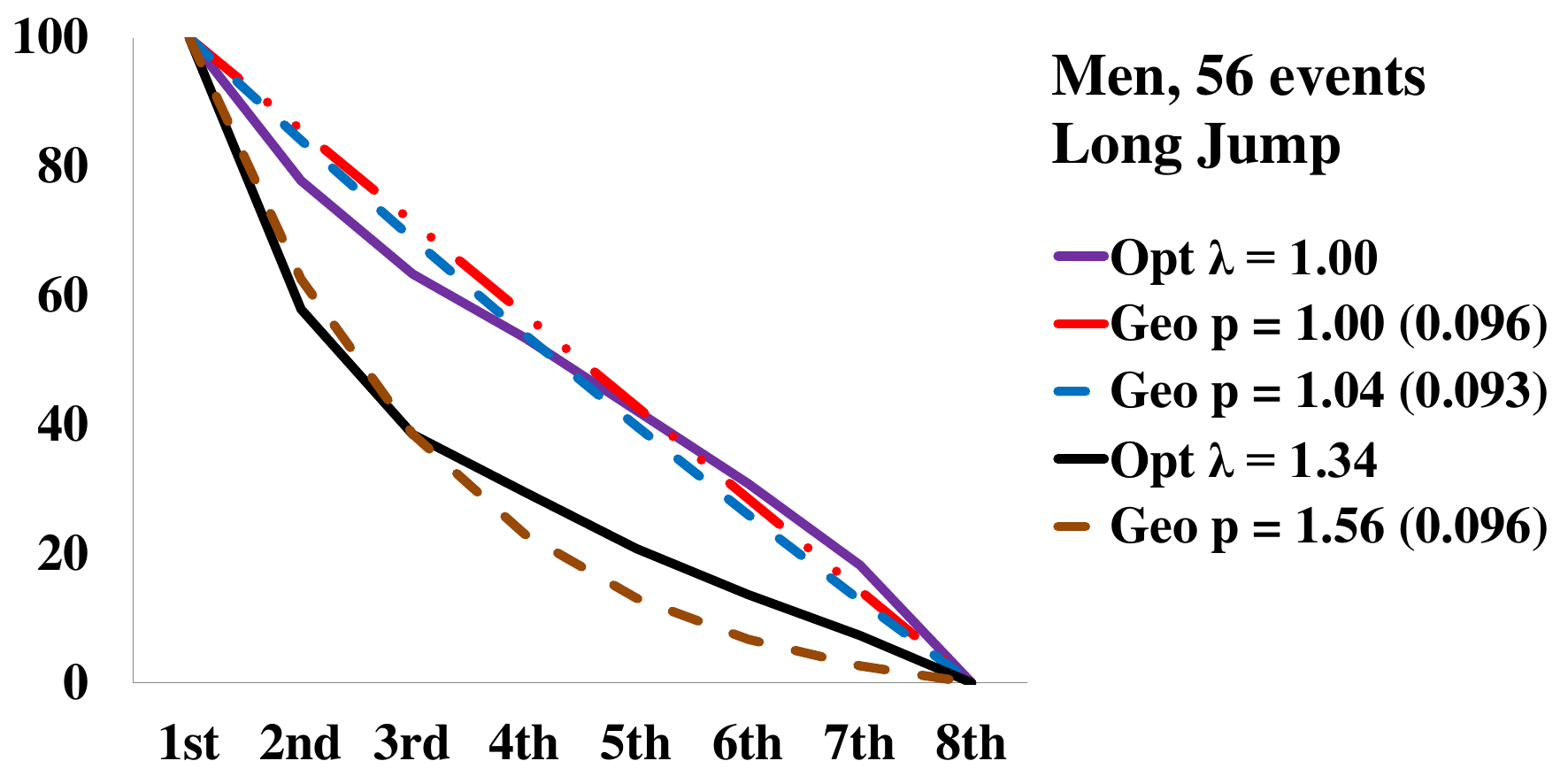}
\includegraphics[width=8cm]{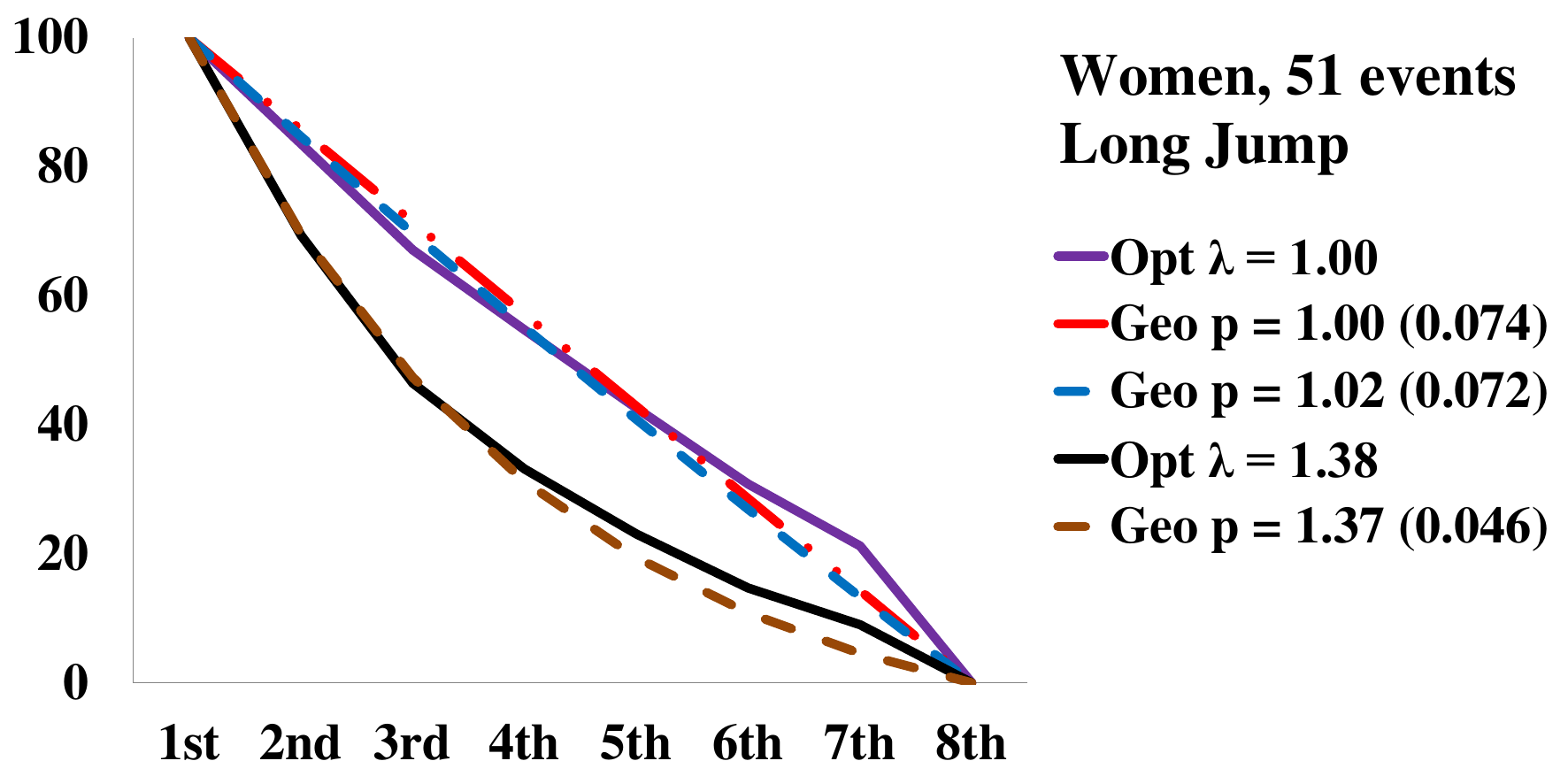}
\includegraphics[width=8cm]{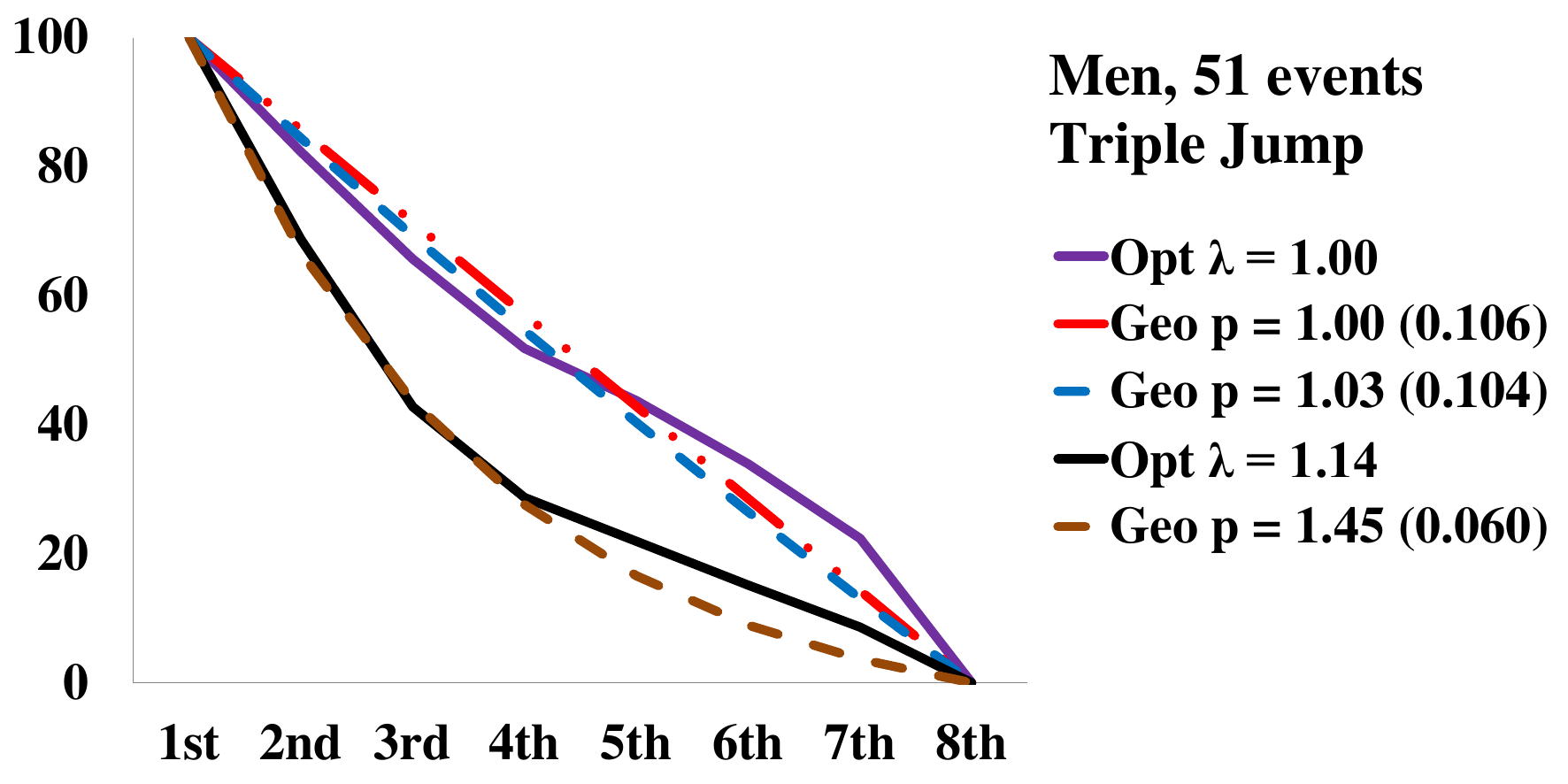}
\includegraphics[width=8cm]{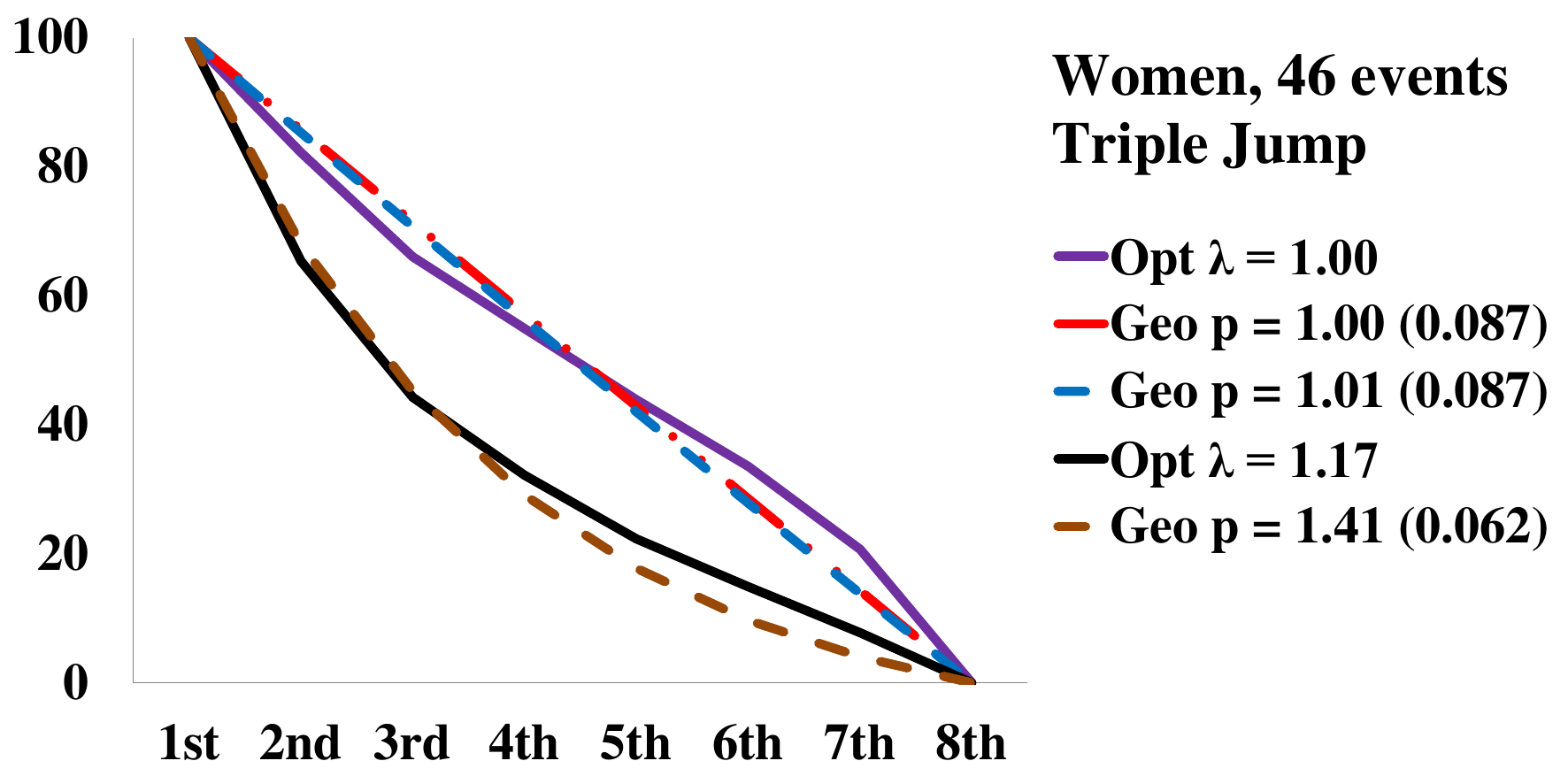}
\includegraphics[width=8cm]{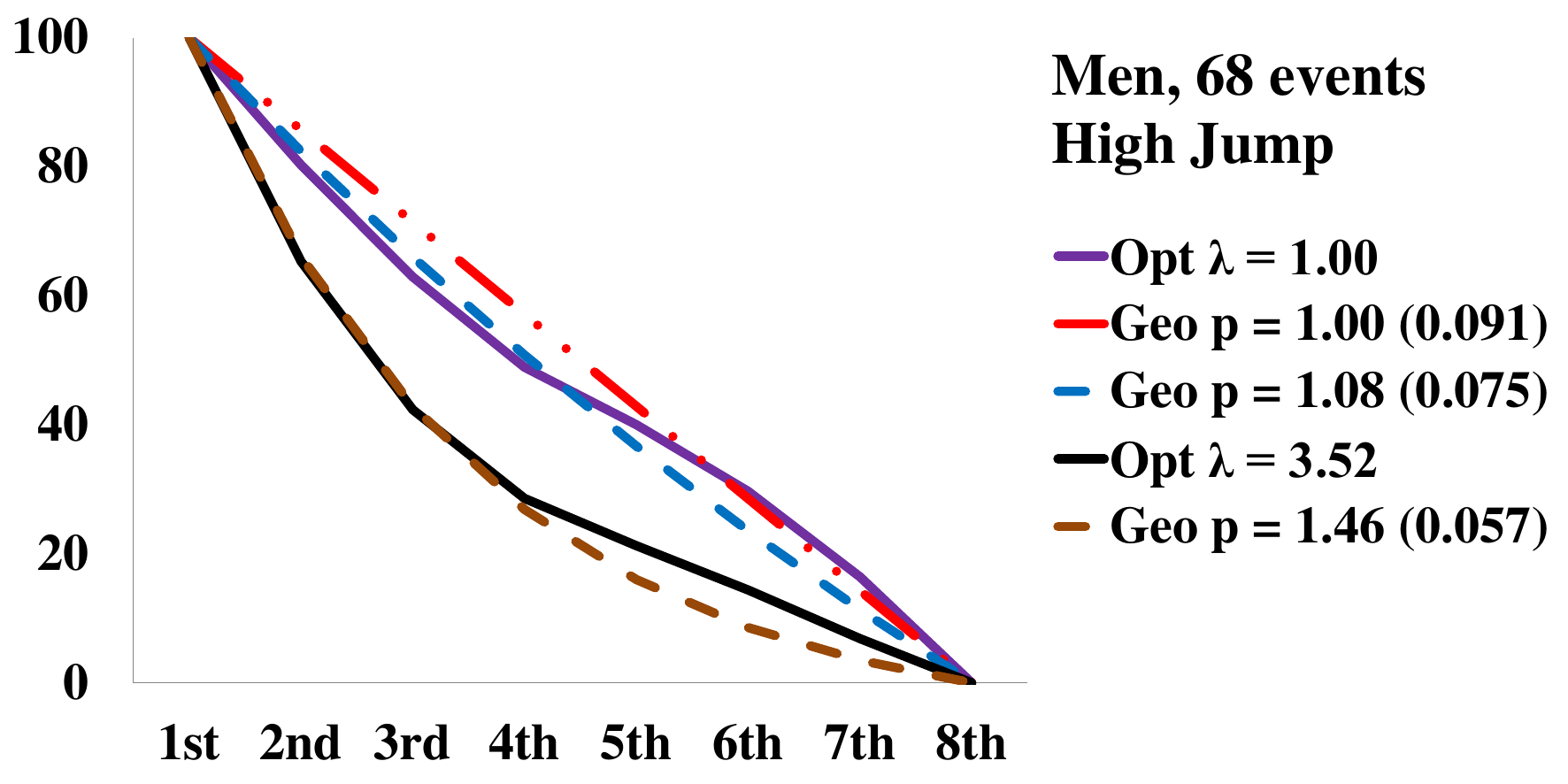}
\includegraphics[width=8cm]{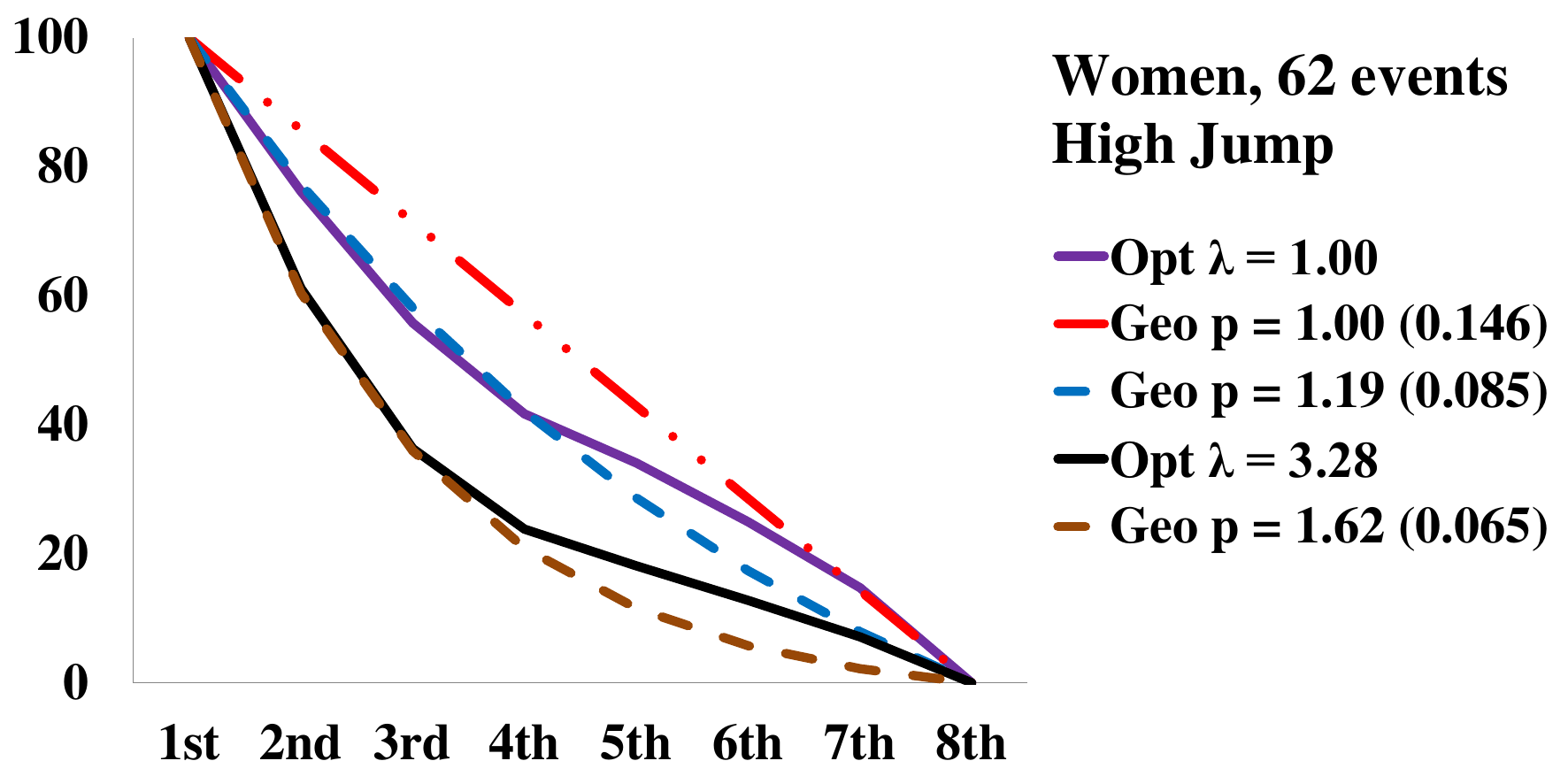}
\includegraphics[width=8cm]{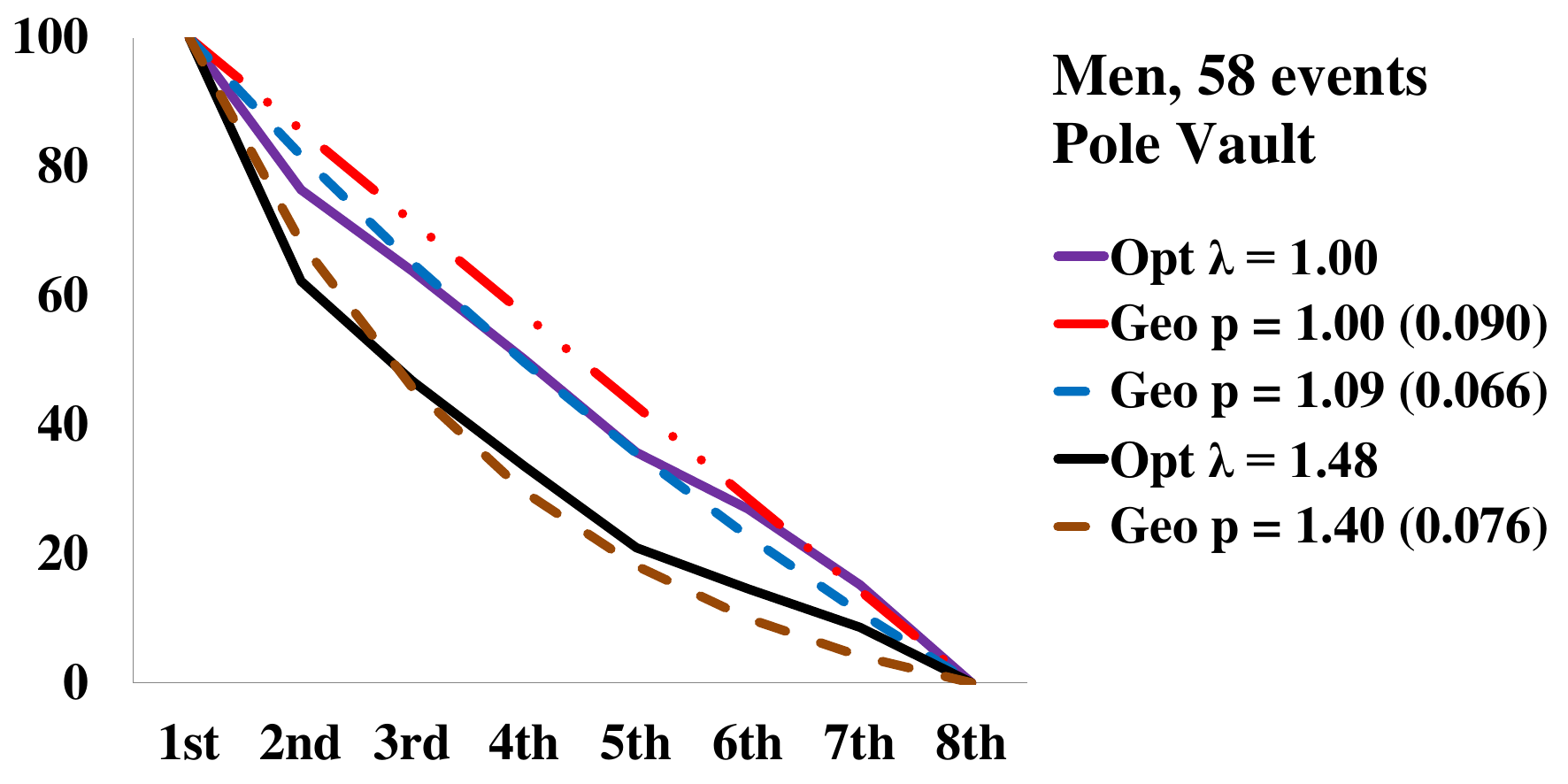}
\includegraphics[width=8cm]{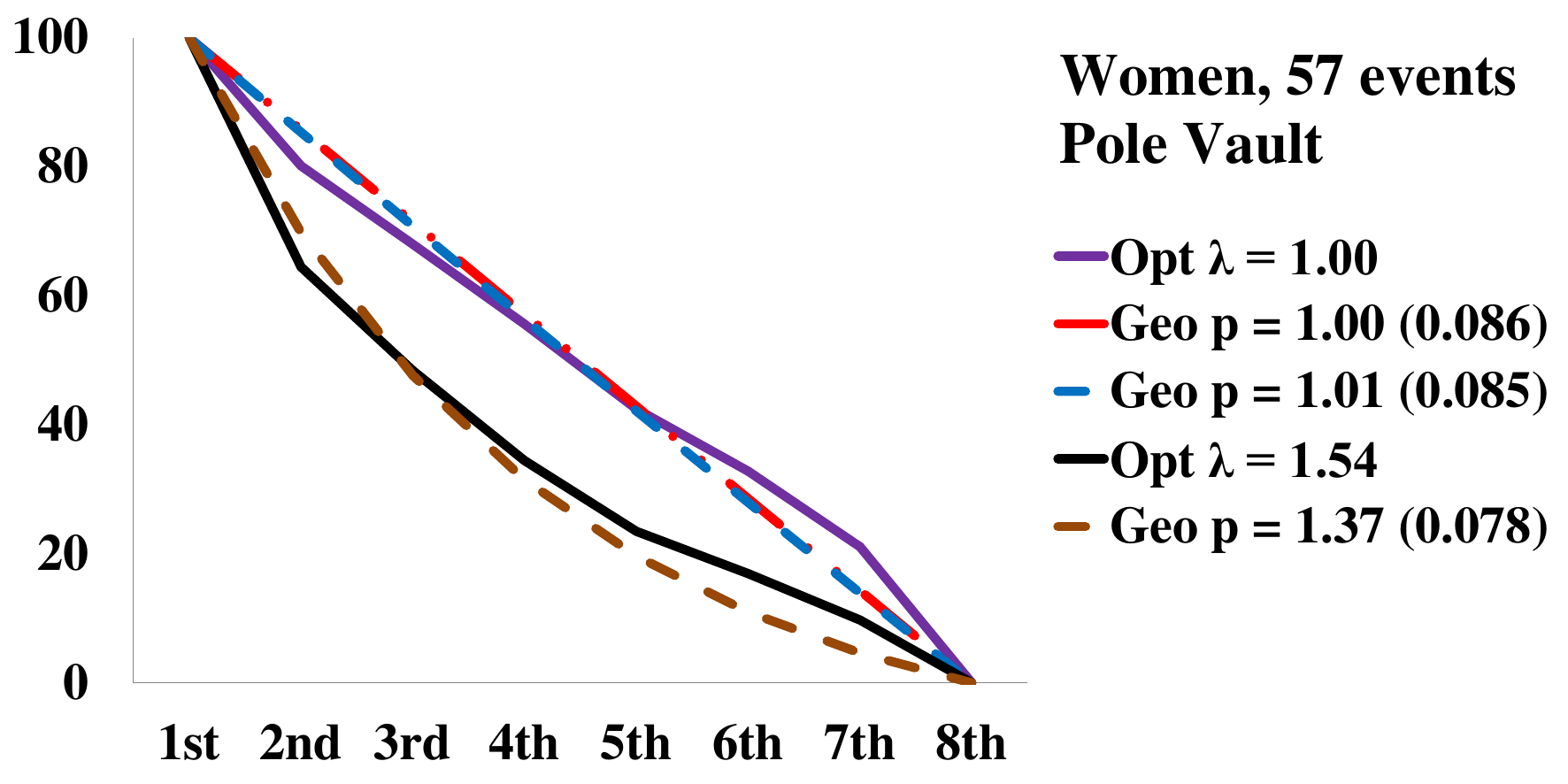}
\end{center}
\caption{Scores in IAAF Diamond League athletics}
\label{OptimalGraphIAAF3}
\vspace{0.2cm}
\justify
\footnotesize{\emph{Notes}: The optimal scores  in 2010--2021 seasons approximated by geometric scores. The $x$-axis is the position, the $y$-axis the normalised score. Scores for first position were normalised to 100, for eighth position to~0. Observe that the actual Borda scores used since 2017 (geometric p = 1, red long dash two dots) closely approximate most of the optimal scores for $\lambda=1$ (purple solid, higher curve). The curves for $\lambda>1$ (black solid, lower curve, performance measured in decimetres) illustrate how closely other geometric scores (blue dash, higher curve, and brown dash, lower curve) can approximate the optimal scores on this data. The approximation distance is in brackets and calculated by formula~(\ref{distance2}), and denotes the distance to the first curve without brackets above the approximation in the legend.}
\end{figure}

\clearpage

\bibliographystyle{apalike}
\bibliography{biathlon}

\end{document}